\setlist[enumerate]{leftmargin=.5in}
\setlist[itemize]{leftmargin=.5in}
\crefname{hypothesis}{Hypothesis}{Hypotheses}
\title{Time-inconsistent mean field and n-agent games under relative performance criteria\thanks{Submitted to the editors DATE.
\funding{This work was funded by the National Natural Science Foundation of China, grants 12271290 and 11871036.}}}
\author{Zongxia Liang\thanks{Department of Mathematical Sciences, Tsinghua University, Beijing, 100084 People’s Republic of China
  (\email{liangzongxia@mail.tsinghua.edu.cn}). }
\and Keyu Zhang\thanks{Department of Mathematical Sciences, Tsinghua University, Beijing, 100084 People's Republic of China
  (\email{zhangky21@mails.tsinghua.edu.cn}).}
}
\newcommand*{\addFileDependency}[1]{
  \typeout{(#1)}
  \@addtofilelist{#1}
  \IfFileExists{#1}{}{\typeout{No file #1.}}
}
\begin{document}

\maketitle

\begin{abstract}
  In this paper we study a time-inconsistent portfolio optimization problem for competitive agents with CARA utilities and non-exponential discounting. The utility of each agent depends on her own wealth and consumption as well as the relative wealth and consumption to her competitors. Due to the presence of a non-exponential discount factor, each agent's optimal strategy becomes time-inconsistent. In order to resolve time-inconsistency, each agent makes a decision in a sophisticated way, choosing  open-loop equilibrium strategy in response to the strategies of all the other agents. We construct explicit solutions for the $n$-agent games and the corresponding mean field games (MFGs) where the limit of former yields the latter. This solution is unique in a special class of equilibria. 
\end{abstract}

\begin{keywords}
  relative performance, mean field games, time-inconsistency, open-loop equilibrium, forward backward stochastic differential equation
\end{keywords}

\begin{AMS}
  91A06, 91A07, 91A16, 91B51
\end{AMS}

\section{Introduction}
Due to the fact that peer interaction sometimes makes remarkable impacts on agent's decision making, portfolio games, as a game-theoretic extension of classical Merton problem \cite{1970Optimum}, have received considerable attention in recent years.  Relative performance is becoming an appealing way to model the interaction because of the tractability in mathematics and excellent economic motivations. The literature on portfolio games with relative performance concerns dates by to Espinosa and Touzi \cite{2013OPTIMAL}, where they consider $n$-agent games with portfolio constrains under the CARA utility by investigating the associated quadratic BSDEs systems. Lacker and Zariphopoulo \cite{D2017Mean} consider the portfolio games for  asset specialized agents in log-normal markets under both CARA and CRRA relative performance criteria. The constant Nash equilibrium and mean field equilibrium (MFE) are explicitly constructed.  In a similar fashion, Lacker and Soret \cite{2020Many} extend the problem by incorporating the dynamic consumption. Bo et al. \cite{Bo2021MeanFG} revisit the MFGs and the $n$-agent games under CRRA relative performance by allowing risky assets to have contagious jumps. A deterministic MFE in an analytical form is obtained by using the FBSDE and stochastic maximum principle. Furthermore, an approximate Nash equilibrium  for the $n$-agent games is constructed. Recently, Fu and Zhou \cite{2022meanfieldpg} study the mean field portfolio games with general market parameters. A one-to-one correspondence between the Nash equilibrium and the solution to some FBSDE is established by martingale optimality principle.  

Another research direction with fruitful outcomes is time-inconsistent control problem, where the  Bellman  optimality principle  does not hold.
There are many important problems in mathematical finance and economics incurring time-inconsistency, for example,  the mean-variance selection problem and the investment-consumption problem with non-exponential discounting.  The main approaches to handle time-inconsistency are to search for, instead of optimal strategies, time-consistent equilibrium strategies within a game-theoretic framework. Ekeland and Lazrak \cite{2008Equilibrium} and Ekeland and Pirvu \cite{Ivar2008Investment} introduce the precise definition of the equilibrium strategy in continuous-time setting for the first time. Björk et al. \cite{2017On} derive an extended HJB equation to determine the equilibrium strategy in a Markovian setting. Yong \cite{2012Time} introduces the so-called equilibrium HJB equation to construct the equilibrium strategy  in a multi-person differential game framework with a hierarchical structure. The solution concepts considered in \cite{2017On, 2012Time} are closed-loop equilibrium strategies and the methods to handle time-inconsistency are extensions of the classical dynamic programming approaches. In contrast to the aforementioned literature,  Hu et al. \cite{Hu-2012} introduce the concept of open-loop equilibrium control by using a spike variation formulation, which is different from the closed-loop equilibrium concepts. The open-loop equilibrium control is characterized by a flow of FBSDEs, which is deduced by a duality method in the spirit of Peng's stochastic maximum principle. Some recent studies devoted to the open-loop equilibrium concept can be found in \cite{2020openloop, EP14900367320210201, 2020Uniqueness, 2021Time}. Specially, Alia et al. \cite{EP14900367320210201}, closely related to our paper, study a time-inconsistent investment-consumption problem under a general discount function, and obtain an explicit representation of the equilibrium strategies for some special utility functions, which is different from most of existing literature on the time-inconsistent investment-consumption problem, where the feedback equilibrium strategies are derived via several complicated nonlocal ODEs; see, e.g., \cite{MS-2010, Bjork}.

To the best of our knowledge, the $n$-agent games and MFGs under relative performance when the non-exponential discounting is considered have not been studied before. The constant discount rate is a common assumption in classical portfolio management problems under discounted utility which suggests the discount function should be exponential. But results from experimental studies contradict this assumption, indicating that agents may be impatient in the face of  choices in the short term but be patient when choosing between long-term alternatives; see, e.g., \cite{Ainslie-1975}. Therefore, it is interesting to investigate the non-exponential discounting case.

Our paper aims to contribute to the literature of the aforementioned $n$-agent games and MFGs under relative performance by considering  general discount functions. Specially, 
 the discount  function only needs to satisfy some weak conditions; see \cref{disc} in Section \ref{sect2}. Meanwhile, we adopt the asset specialization framework in \cite{D2017Mean, Bo2021MeanFG} with a common noise. A lot of literature on equilibrium strategy under non-exponential discounting suggest that the investment strategy is independent of discount function,  which indicates that time-inconsistency does not influence agent's portfolio policy; see, e.g., \cite{MS-2010, Bjork}. Thus instead of portfolio games,  we incorporate consumption in the same spirit of \cite{2020Many} and focus on CARA relative performance utility. As opposed to the previous works, the presence of non-exponential discounting gives rise to time-inconsistency, which induces the failure of the principle of optimality. In order to resolve time-inconsistency, we replace each agent's optimal strategy in time-consistent setting by its open-loop equilibrium (consistent) strategy in time-inconsistent setting. As a result, there are two levels of game-theoretic reasoning intertwined.  (1) The intra-personal equilibrium among the agent's current and future selves. (2) The equilibrium among $n$-agents. For tractability, we search only for \textit{DF equilibrium strategy}, see \cref{DF}.  To construct the \textit{DF equilibrium strategy}, we first characterize the  open-loop consistent control for each single agent given an arbitrary (but fixed) choice of competitors' controls by a FBSDE system.  By assuming that all the other agents choose \textit{DF strategies}, see  \cref{DFs}, we derive a closed form representation of the strategy via a PDE system. We then construct the desired equilibrium by solving a fixed point problem.   While for MFG, the solution technique is analogous to the $n$-agent setting and the resulting MFE takes similar forms as its $n$-agent counterparts.

The contributions of our paper are as follows: first, as far as we know, our work is the first paper to incorporate consumption into CARA portfolio game with relative performance concerns.  Portfolio games under relative consumption are generally underexplored, with the exception of \cite{2020Many}, which focuses solely on  CRRA utilities under zero discount rate. Our paper fills this gap in the literature. Second, our work can be viewed as a game-theoretic extension of the exponential utility case presented in \cite{EP14900367320210201}. In the special case of a single stock, the DF equilibrium strategy takes the same form as the open-loop equilibrium in \cite{EP14900367320210201}, but with a modified risk tolerance.  This \textit{effective risk tolerance parameter} has already appeared in some works on a similar topic but with a time-consistent model; see, e.g., \cite{D2017Mean, 
 2021Meanfieldito}. Moreover, in the case with constant discount rate, the equilibrium reduces to the solution of classical Merton problem, which indicates that the equilibrium concept in our paper is the natural extension of equilibrium in classical time-consistent setting to time-inconsistent setting.  Third, our work also provides a new explicitly solvable mean field game model. Since the pioneering works by \cite{Lasry-2007, 2006Large}, MFGs have been actively studied and widely applied in economics, finance and engineering. To name a few recent developments in theories and application, we refer to \cite{aCarmona-2013,bCarmona-2013,Carmona-2015, book:2191018,book:2203663} among others.  However, few studies combine the MFGs with time-inconsistency problem, except some linear quadratic examples; see, e.g., \cite{Ni-2017,Bensoussan-2013}.  Our result adds a new explicitly solvable non-LQ example to intersection of these two fields.
 
 The rest of paper is organized as follows. In Section \ref{sect2},  we formulate and solve the $n$-agent games under CARA relative preference and a general discount function. Then, in Section \ref{sect3}, we study the infinite population counterpart of this problem, and identify resulting MFE agree with the limiting expressions from the $n$-agent games. Section \ref{discussion} presents some qualitative comments on the MFE. Some conclusion remarks and future research directions are given in Section \ref{sect5}. Finally, the proofs of some auxiliary results are  in Appendix.
 
\textit{Notations}. For any Euclidean space $E$ with norm $|\cdot|$, and any $t\in[0,T]$, denote
  
  $L_{\mathcal{F}_{t}}^{2}(\Omega;E)$: the set of $E$-valued $\mathcal{F}_{t}$-measurable random variables X, such that \[\mathbb{E}\left[\left|X\right|^{2}\right]<\infty.\]

  $\mathbb{S}^{2}_{\mathbb{F}}(t,T;E)$: the space of $E$-valued $\mathbb{F}$-adapted and continuous processes Y with \[\mathbb{E}\left[\sup_{s\in[t,T]}\left|Y_{s}\right|^{2}\right]<\infty.\]
 
 	$\mathbb{H}^{2}_{\mathbb{F}}(t,T;E)$:  the space of $E$-valued $\mathbb{F}$-progressively measurable processes Z with \[\mathbb{E}\left[\int_{t}^{T}\left|Z_{s}\right|^{2}ds\right]<\infty.\]

\section{The $n$-agent games}\label{sect2}

In this section, we consider the $n$-agent games. The market model is same as \cite{D2017Mean} and each agent invests in their own specific stock or in a common riskless
bond which offers zero interest rate. The common time horizon for all agents is $[0,T]$ with $T>0$.  The price process of stock $i$, in which only agent $i$ trades, follows the following stochastic differential equation (SDE)
\begin{equation}
	\frac{dS_{t}^{i}}{S_{t}^{i}}=\mu_{i}dt+\nu_{i}dW_{t}^{i}+\sigma _{i}dB_{t}\label{S}
\end{equation}
with constant parameters $\mu_{i}>0$, $\sigma_{i}\geq0$, and $\nu_{i}\geq0$ with $\sigma_{i}+\nu_{i}>0$, 
where the Brownian motions $W^{1},\cdots,W^{n}$ and $B$ are independent  on a filtered probability space $(\Omega,\mathcal{F},\mathbb{F},\mathbb{P})$, in which the natural filtration $\mathbb{F}:=(\mathcal{F}_{t})_{t\in[0,T]}$ is generated by these $n+1$ Brownian motions.

We recall the  single stock case, corresponding to the situation where $\mu_{i}=\mu$, $\nu_{i}=0$, and $\sigma_{i}=\sigma$, for all $i=1,\cdots,n$ and for some $\mu$, $\sigma>0$ independent of $i$.

Each agent $i$ {whose initial wealth at time $t_{0}\in[0,T)$ is $x^{i}_{0}\in\mathbb{R}$} trades according to a self-financing strategy, $\pi^{i}=\{\pi_{t}^{i}, t_{0}\leq t\leq T \}$,  representing the  amount invested in the stock $i$, and consumes at a consumption rate $c^{i}=\{c^{i}_{t}, t_{0}\leq t\leq T \}$. Then the $i$-th agent's wealth process $X^{i}=\{X^{i}_{t}, t_{0}\leq t\leq T\}$  is 
\begin{equation}\label{wealth}
	dX^{i}_{t}=\pi_{t}^{i}\left(\mu_{i}dt+\nu_{i}dW_{t}^{i}+\sigma _{i}dB_{t}\right)-c^{i}_{t}dt,\quad X^{i}_{{t_{0}}}=x^{i}_{0}\in\mathbb{R}.
\end{equation}
 Now, we introduce the admissible control as follows:
\begin{Definition}[Admissible control]

A control $(\pi,c)$ is said to be admissible over $[t,T]$ if $(\pi,c)\in\mathbb{H}^{2}_{\mathbb{F}}(t,T;\mathbb{R})\times\mathbb{H}^{2}_{\mathbb{F}}(t,T;\mathbb{R})$. For brevity, we denote $\mathcal{A}_{t}$ as the set of all admissible controls over $[t,T]$.
\end{Definition}

We focus on the DF (deterministic-feedback) strategy defined as follows:
\begin{Definition}[DF strategy]\label{DFs}
	
An $n$-tuple of pairs $\left(\Pi^{i}, C^{i}\right)^{n}_{i=1}$ is said to be a DF strategy, if for each $i\in\left\{1,\cdots,n\right\}$, $\Pi^{i}:[0,T]\rightarrow\mathbb{R}$ is a continuous function and $C^{i}:[0,T]\times\mathbb{R}^{n}\rightarrow\mathbb{R}$ is of an affine form:
	$$C^{i}(t,x)=\sum_{k=1}^{n}p^{i,k}(t)x^{k}+q^{i}(t),\,\ (t,x)\in[0,T]\times\mathbb{R}^{n},$$
	for some continuous functions $p^{i,k},\ q^{i}:[0,T]\rightarrow\mathbb{R}$. We denote the set of DF strategies by $\mathcal{S}^{n}$. Moreover, a DF strategy $\left(\Pi^{i}, C^{i}\right)^{n}_{i=1}\in\mathcal{S}^{n}$ is said to be simple if for every $i\in\left\{1,\cdots,n\right\}$, the function $[0,T]\times\mathbb{R}^{n}\ni\left(t,x=\left(x^{1},\cdots,x^{n}\right)\right)\rightarrow C^{i}(t,x)\in\mathbb{R}$ does not depend on $\left(x^{k}\right)_{k\neq i}$.
\end{Definition}
Let a DF strategy $\left(\Pi^{i}, C^{i}\right)^{n}_{i=1}\in\mathcal{S}^{n}$  be given. Then, for each initial condition $\left(t_{0},x_{0}\right)\in[0,T)\times\mathbb{R}^{n}$ with $x_{0}=(x^{1}_{0},\cdots,x^{n}_{0})$, we can solve the closed-loop system for $X=\left(X^{1},\cdots,X^{n}\right)$:
\begin{equation}\label{NX}
		\begin{cases}
			dX^{i}_{t}=\left(\Pi^{i}(t)\mu_{i}-C^{i}(t,X_{t})\right)dt+\Pi^{i}(t)\nu_{i}dW^{i}_{t}+\Pi^{i}(t)\sigma_{i}dB_{t},\, t\in[t_{0},T],\\ i=1,\cdots,n, \\
			X^{i}_{t_{0}}=x_{0}^{i},\, i=1,\cdots,n.
		\end{cases}
	\end{equation}
Then, it is easy to see that the \textsl{outcome} $\left(\pi^{i},c^{i}\right)_{i=1}^{n}$ defined by 
\begin{equation}\label{outcome}
	\pi^{i}_{t}:=\Pi^{i}(t),\, c^{i}_{t}:=C^{i}(t,X_{t}),\, t\in[t_{0},T],\, i=1,\dots,n,
\end{equation}
is in $\mathcal{A}^{n}_{t_{0}}$, that is, an $n$-tuple of admissible controls on $[t_{0},T]$.
\begin{remark}
It is worth noting that the DF strategy $\left(\Pi^{i}, C^{i}\right)_{i=1}^{n}$ does not depend on the initial condition $\left(t_{0},x_{0}\right)$, while the outcome $\left(\pi^{i},c^{i}\right)_{i=1}^{n}$ depends on $\left(t_{0},x_{0}\right)$. 

	Let $\left(\Pi^{i}, C^{i}\right)_{i=1}^{n}, \left(\tilde{\Pi}^{i}, \tilde{C}^{i}\right)_{i=1}^{n}\in\mathcal{S}^{n}$ be given.  For each initial condition $\left(t_{0},x_{0}\right)\in[0,T)\times\mathbb{R}^{n}$, denote by $\left(\pi^{i}(t_{0},x_{0}),c^{i}(t_{0},x_{0})\right)_{i=1}^{n}$ and by $\left(\tilde{\pi}^{i}(t_{0},x_{0}),\tilde{c}^{i}(t_{0},x_{0})\right)_{i=1}^{n}$ the corresponding outcomes. Fix $i\in\left\{1,\cdots,n\right\}$, and assume that $\left(\pi^{i}_{t}(t_{0},x_{0}),c^{i}_{t}(t_{0},x_{0})\right)=\left(\tilde{\pi}^{i}_{t}(t_{0},x_{0}),\tilde{c}^{i}_{t}(t_{0},x_{0})\right)$ a.s. for a.e. $t\in[t_{0},T]$ for every $\left(t_{0},x_{0}\right)\in[0,T)\times\mathbb{R}^{n}$. Then it holds that $\Pi^{i}(t)=\tilde{\Pi}^{i}(t)$ and $C^{i}(t,x)=\tilde{C}^{i}(t,x)$ for any $\left(t,x\right)\in[0,T]\times\mathbb{R}^{n}$.
\end{remark}
As we mention before, growing evidence suggests that the discount rate may not be constant, and in our work, we discuss the general discounting preferences.

\begin{Definition}[Discount Function]\label{disc}

A discount function $\lambda:[0, T]\rightarrow\mathbb{R}$ is a continuous and strictly positive function satisfying
$\lambda(0)=1$. 
\end{Definition}
 
 \begin{remark}\label{remark1}
  Definition \ref{disc} is general enough to cover some special discount functions, such as exponential discount functions (see, e.g., \cite{1970Optimum}), quasi-exponential discount functions (see, e.g., \cite{Ivar2008Investment}) and hyperbolic discount functions (see, e.g., \cite{Zhao-2014}). In fact, our result can be extended to a more general form of the discount factor as \cite{2021Time}, where the discount function $\lambda(\cdot,\cdot)$ is a positive bivariate continuous function on $\{(t,s)|0\leq t\leq s\leq T\}$ satisfying $\lambda(t,t)=1$, as the very thing we need is continuity of the discount function.
 \end{remark}

The utility function of the agent $i$ is defined as follows:
\[U_{i}\left(x^{i},\overline{x}\right):=-\exp\left\{-\frac{1}{\delta_{i}}\left(x^{i}-\theta_{i}\overline{x}\right)\right\},\, x=\left(x^{1},\dots,x^{n}\right)\in\mathbb{R}^{n},\]
where $\overline{x}:=\frac{1}{n}\sum_{i=1}^{n}x^{i}$ and the constants $\delta_{i}>0$ and $\theta_{i}\in[0,1]$ represent the personal risk tolerance and competition weight parameters, respectively. Note that the utility function also can be seen as a function of $x^{i}$ and $\overline{x}^{(i)}:=\frac{1}{n}\sum_{j\ne i}x^{j}$:
\[U_{i}\left(x^{i},\overline{x}^{(i)}\right):=-\exp\left\{-\frac{1}{\delta_{i}}\left(1-\frac{\theta_{i}}{n}\right)x^{i}+\frac{\theta_{i}}{\delta_{i}}\overline{x}^{(i)}\right\}.\]

Each agent derives a reward from their discounted inter-temporal consumption and final wealth, to be specific, for agent $i$, the expected payoff is 
\begin{equation}\label{eu}
\mathbb{E}\left[\int_{{t_{0}}}^{T}{\lambda(t-t_{0})}U_{i}\left(c^{i}_{t},\overline{c}^{(i)}_{t}\right)dt+{\lambda(T-t_{0})}U_{i}\left(X_{T}^{i},\overline{X}^{(i)}_{T}\right)\right],
\end{equation}
defined for any admissible controls {$(\pi^{i},c^{i})_{i=1}^{n}\in\mathcal{A}_{t_{0}}^{n}$}, where   $\overline{X}^{(i)}_{t}:=\frac{1}{n}\sum_{k\neq i}X^{k}_{t}$ and $\overline{c}^{(i)}_{t}:=\frac{1}{n}\sum_{k\neq i}c^{k}_{t}$.

For  $(\pi^{i},c^{i})_{i=1}^{n}\in\mathcal{A}_{t_{0}}^{n}$, we introduce the following notations:
\begin{equation*}
	\begin{split}
	J_{i}(t,X^{i}_{t},\overline{X}^{(i)}_{t},(\pi^{i},c^{i})_{i=1}^{n}):&=\mathbb{E}_{t}\left[\int_{t}^{T}\lambda(s-t)U_{i}\left(c^{i}_{s},\overline{c}^{(i)}_{s}\right)ds+\lambda(T-t)U_{i}\left(X_{T}^{i},\overline{X}^{(i)}_{T}\right)\right],
	\end{split}
\end{equation*}
 where $X^{i}$ is defined by \cref{wealth}, $\overline{X}^{(i)}$ satisfies the following SDE:
 \begin{equation}\label{overlinex}
 	\begin{cases}
 		d\overline{X}^{(i)}_{t}=(\overline{\mu\pi}^{(i)}_{t}-\overline{c}^{(i)}_{t})dt+\overline{\sigma\pi}^{(i)}_{t}dB_{t}+\frac{1}{n}\sum_{k\ne i}\nu_{k}\pi^{k}_{t}dW^{k}_{t},\, t\in[{t_{0}},T],\\ \overline{X}^{(i)}_{{t_{0}}}=\overline{x}^{(i)}_{0}:=\frac{1}{n}\sum_{k\neq i}x_{0}^{k},
 	\end{cases}
 \end{equation}
with $\overline{\mu\pi}^{(i)}_{t}:=\frac{1}{n}\sum_{k\ne i}\mu_{k}\pi^{k}_{t}$ and $\overline{\sigma\pi}^{(i)}_{t}:=\frac{1}{n}\sum_{k\neq i}\sigma_{k}\pi^{k}_{t}$, and $\mathbb{E}_{t}\left[\cdot\right]:=\mathbb{E}\left[\cdot|\mathcal{F}_{t}\right]$.

It is well known that the optimal strategy of agent $i$ turns out to be time-inconsistent as soon as discounting is non-exponential.  Hence, we assume that all agents are sophisticated, which means that they aim to find the best current action in response to their future selves' behavior. When every future self also reasons in this way, the resulting strategy is an equilibrium form which no future self has any incentive to deviate.

As in \cite{EP14900367320210201, Hu-2012}, we consider open-loop Nash equilibrium by local spike variation. For {$t\in [t_{0},T)$}, any $\mathbb{R}^{2}$-valued, $\mathcal{F}_{t}$-measurable and bounded random variable $v=(v_{1},v_{2})$, and any $\epsilon>0$, given an admissible control {$(\hat{\pi},\hat{c})\in\mathcal{A}_{t_{0}}$}, define 
{\begin{equation}	\label{eps}
	(\pi_{s}^{t,\epsilon},c_{s}^{t,\epsilon})=(\hat{\pi}_{s},\hat{c}_{s})+v\mathbf{1}_{[t,t+\epsilon)}(s)
\end{equation}
for $s\in[t_{0},T]$, where $\mathbf{1}_{[t,t+\epsilon)}$ denotes the indicator function for the interval $[t,t+\epsilon)$.}
\begin{Definition}[Open-loop consistent control]\label{OPCS}
	
Let $i\in\left\{1,\cdots,n\right\}$, { $(t_{0},x_{0})\in[0,T)\times\mathbb{R}^{n}$} and $\left(\pi,c\right)^{(i)}=\left(\pi^{k},c^{k}\right)_{k\neq i}\in{\mathcal{A}^{n-1}_{t_{0}}}$ be given. An admissible control $(\hat{\pi}^{i},\hat{c}^{i})\in{\mathcal{A}_{t_{0}}}$ is said to be an open-loop consistent control for agent $i$ with respect to the {initial condition $(t_{0},x_{0})$} in response to $\left(\pi^{k},c^{k}\right)_{k\neq i}$ if for every {$t\in[t_{0},T)$} and every sequence $\left\{\epsilon_{n}\right\}_{n\in\mathbb{N}}\subset(0,T-t)$ such that $\lim\limits_{n\rightarrow\infty}\epsilon_{n}=0$, the following local optimality condition holds:
\begin{align*}
		\limsup_{n\rightarrow\infty}\frac{1}{\epsilon_{n}}\left\{J_{i}(t,\hat{X}^{i}_{t},\overline{X}^{(i)}_{t},(\pi^{i,t,\epsilon_{n}},c^{i,t,\epsilon_{n}}),\left(\pi,c\right)^{(i)})-J_{i}(t,\hat{X}^{i}_{t},\overline{X}^{(i)}_{t},(\hat{\pi}^{i},\hat{c}^{i}),\left(\pi,c\right)^{(i)})\right\}\leq0,\,\ a.s.,
	\end{align*}
where $\hat{X}^{i}$ and $\overline{X}^{(i)}$ are defined by \cref{wealth} and \cref{overlinex}, respectively, with the {initial condition $(t_{0},x_{0})$} and the admissible controls $(\hat{\pi}^{i},\hat{c}^{i}),\left(\pi,c\right)^{(i)}$, that is, 
\begin{equation*}
	\begin{cases}
		d\hat{X}^{i}_{t}=\left(\hat{\pi}^{i}_{t}\mu_{i}-\hat{c}^{i}_{t}\right)dt+\hat{\pi}^{i}_{t}\nu_{i}dW_{t}^{i}+\hat{\pi}^{i}_{t}\sigma_{i}dB_{t},\,t\in[{t_{0}},T],\\
		\hat{X}^{i}_{{t_{0}}}=x^{i}_{0},
	\end{cases}
\end{equation*}
and
\begin{equation*}
	\begin{cases}
		d\overline{X}^{(i)}_{t}=(\overline{\mu\pi}^{(i)}_{t}-\overline{c}^{(i)}_{t})dt+\overline{\sigma\pi}^{(i)}_{t}dB_{t}+\frac{1}{n}\sum_{k\ne i}\nu_{k}\pi^{k}_{t}dW^{k}_{t}, \, t\in[{t_{0}},T],\\ \overline{X}^{(i)}_{{t_{0}}}=\overline{x}^{(i)}_{0}:=\frac{1}{n}\sum_{k\neq i}x_{0}^{k}.
	\end{cases}
\end{equation*}
\end{Definition}
\begin{remark}
	We note that there is a slight difference between our definition and the classical definition of open-loop equilibrium given by \cite{Hu-2012}. If we adapt the classical definition to our problem, the open-loop consistent control should satisfy
	\begin{align*}
		\limsup_{\epsilon\downarrow0}\frac{1}{\epsilon}\left\{J_{i}(t,\hat{X}^{i}_{t},\overline{X}^{(i)}_{t},(\pi^{i,t,\epsilon},c^{i,t,\epsilon}),\left(\pi,c\right)^{(i)})-J_{i}(t,\hat{X}^{i}_{t},\overline{X}^{(i)}_{t},(\hat{\pi}^{i},\hat{c}^{i}),\left(\pi,c\right)^{(i)})\right\}\leq0,\,\ a.s.,
	\end{align*}
where $\{J_{i}(t,\hat{X}^{i}_{t},\overline{X}^{(i)}_{t},(\pi^{i,t,\epsilon},c^{i,t,\epsilon}),\left(\pi,c\right)^{(i)})\}_{\epsilon>0}$ is an uncountable family of random variables and the a.s. limit as $\epsilon\downarrow0$ may not be well-defined. Hence, we avoid to use this definition in our problem.
\end{remark}

We introduce the following technical integrability condition to ensure the uniqueness of the open-loop consistent control.
{\begin{Definition}[Integrability condition]\label{Assumption2.3}
	
Let $(t_{0},x_{0})\in[0,T)\times\mathbb{R}^{n}$ be given. Admissible controls $(\pi^{i},c^{i})_{i=1}^{n}\in\mathcal{A}_{t_{0}}^{n}$ satisfy the integrability condition over $[t_{0},T]$ if for every $i\in\left\{1,\cdots,n\right\}$,
	\begin{equation}\label{interequ}
		\mathbb{E}\left[\int_{t_{0}}^{T}|U_{i}(c^{i}_{t},\overline{c}^{(i)}_{t})|^{2}dt+|U_{i}(X_{T}^{i},\overline{X}^{(i)}_{T})|^{2}\right]<\infty,
	\end{equation}\
where $X^{i}$ and $\overline{X}^{(i)}$ are defined by \cref{wealth} and \cref{overlinex}, respectively, with the initial condition $(t_{0},x_{0})$ and the admissible controls $(\pi^{i},c^{i})_{i=1}^{n}$. For brevity, we denote $\mathcal{I}^{n}_{t_{0},x_{0}}$ as the set of all $n$-tuple of controls $(\pi^{i},c^{i})_{i=1}^{n}$ that satisfy the integrability condition over $[t_{0},T]$.
\end{Definition}
	According to the above definition, it is evident that $\mathcal{I}^{n}_{t_{0},x_{0}}\subset\mathcal{A}_{t_{0}}^{n}$. As we shall see later in \cref{dfadmissiable}, the outcome of a DF strategy always satisfies the integrability condition. Note that  $\mathcal{I}^{n}_{t_{0},x_{0}}$ is analogous to $\Pi^{x,p}_{2}\cap\Pi^{x,p/(p-1)}_{3}$ in \cite{2021Time}, with $p=2$ in our setting. Applying a similar proof as in  \cite{2021Time}, we can derive the following uniqueness result for the open-loop consistent control in the single-agent problem.
\begin{theorem}\label{uniquethem}
	Let $i\in\left\{1,\cdots,n\right\}$, $(t_{0},x_{0})\in[0,T)\times\mathbb{R}^{n}$ and $\left(\pi,c\right)^{(i)}=\left(\pi^{k},c^{k}\right)_{k\neq i}\in\mathcal{A}^{n-1}_{t_{0}}$ be given. Consider two admissible controls $(\pi^{i,1},c^{i,1})$ and $(\pi^{i,2},c^{i,2})$ that satisfy the following conditions:
	\begin{itemize}
	\item[(i)] $\left((\pi^{i,1},c^{i,1}),\left(\pi,c\right)^{(i)}\right)\in\mathcal{I}^{n}_{t_{0},x_{0}}$and $\left((\pi^{i,2},c^{i,2}),\left(\pi,c\right)^{(i)}\right)\in\mathcal{I}^{n}_{t_{0},x_{0}}$.
	\item[(ii)] Both $(\pi^{i,1},c^{i,1})$ and $(\pi^{i,2},c^{i,2})$ are open-loop consistent controls for agent $i$ with respect to the initial condition $(t_{0},x_{0})$ in response to $\left(\pi^{k},c^{k}\right)_{k\neq i}$.
	\end{itemize}  
Then, $(\pi^{i,1},c^{i,1})=(\pi^{i,2},c^{i,2})$.
\end{theorem}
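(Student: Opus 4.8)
The plan is to reduce the statement to the uniqueness of the open-loop consistent control in a single-agent investment--consumption problem with random coefficients, and then adapt the argument of \cite{2021Time}. The decisive observation is that, once the competitors' controls $\left(\pi,c\right)^{(i)}=\left(\pi^{k},c^{k}\right)_{k\neq i}$ and the initial condition $(t_{0},x_{0})$ are fixed, the processes $\overline{X}^{(i)}$ and $\overline{c}^{(i)}$ entering $J_{i}$ do not depend on agent $i$'s own control: they are fixed, $\mathbb{F}$-adapted, square-integrable processes. Hence agent $i$ faces exactly the time-inconsistent problem of \cite{EP14900367320210201,2021Time}, with the linear wealth dynamics \cref{wealth} and running and terminal utilities $c\mapsto U_{i}(c,\overline{c}^{(i)}_{s})$, $x\mapsto U_{i}(x,\overline{X}^{(i)}_{T})$ that carry only a control-independent random shift in their arguments. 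In this dictionary the set $\mathcal{I}^{n}_{t_{0},x_{0}}$ plays the role of $\Pi^{x,2}_{2}\cap\Pi^{x,2}_{3}$ in \cite{2021Time}: condition \cref{interequ} supplies the $L^{2}$-integrability of $U_{i}(c^{i}_{s},\overline{c}^{(i)}_{s})$ and $U_{i}(X^{i}_{T},\overline{X}^{(i)}_{T})$ needed to differentiate $J_{i}$ along the spike variations \cref{eps}.

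First I would establish the variational characterization of an open-loop consistent control. Fix $t\in[t_{0},T)$ and a bounded $\mathcal{F}_{t}$-measurable $v=(v_{1},v_{2})$. Since the wealth equation \cref{wealth} is linear and the controls in \cref{eps} agree on $[t+\epsilon,T]$, the induced wealth perturbation is an $O(\sqrt{\epsilon})$ martingale displacement on $[t,t+\epsilon)$ that freezes into a constant $\mathcal{F}_{t+\epsilon}$-measurable shift afterwards; using the smoothness of the CARA utility together with the integrability condition to differentiate under $\mathbb{E}_{t}[\cdot]$, one obtains, along every sequence $\epsilon_{n}\downarrow0$,
\[\lim_{n\to\infty}\frac{1}{\epsilon_{n}}\left\{J_{i}\big(t,\hat X^{i}_{t},\overline{X}^{(i)}_{t},(\pi^{i,t,\epsilon_{n}},c^{i,t,\epsilon_{n}}),\left(\pi,c\right)^{(i)}\big)-J_{i}\big(t,\hat X^{i}_{t},\overline{X}^{(i)}_{t},(\hat{\pi}^{i},\hat{c}^{i}),\left(\pi,c\right)^{(i)}\big)\right\}=\mathbb{E}_{t}\big[\,\mathcal{G}^{v}_{i}(t)\,\big],\]
where $\mathcal{G}^{v}_{i}(t)$ is built from the first- and second-order adjoint processes of the linear SDE carrying the discount weight $\lambda(\cdot-t)$. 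As $\hat X^{i}_{t}$ is $\mathcal{F}_{t}$-measurable, \cref{OPCS} reads $\mathbb{E}_{t}[\mathcal{G}^{v}_{i}(t)]\le0$ for all such $v$; since the admissible $v$ form a set symmetric about the origin, this forces first-order identities of the form $\mu_{i}Y^{i,t}_{t}+(\text{diffusion adjoints at }t)=0$ for the investment part and $\partial_{c}U_{i}(\hat c^{i}_{t},\overline{c}^{(i)}_{t})=Y^{i,t}_{t}$ for the consumption part (using $\lambda(0)=1$), so that $(\hat\pi^{i},\hat c^{i})$ together with the adjoint processes $(Y^{i,t},Z^{i,t})$ solves a flow of FBSDEs indexed by $t$.

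Finally I would turn this into uniqueness. Strict concavity of $U_{i}$ in its first argument makes $c\mapsto\partial_{c}U_{i}(c,\overline{c}^{(i)}_{t})$ continuous and strictly monotone with range $(0,\infty)$, so the consumption identity determines $\hat c^{i}_{t}$ uniquely from $Y^{i,t}_{t}$ and $\overline{c}^{(i)}_{t}$; the investment identity then determines $\hat\pi^{i}_{t}$ from the adjoint processes. Substituting these feedback maps into the adjoint equations leaves, for each $t$, a BSDE on $[t,T]$ whose driver and terminal value depend only on the fixed data $\overline{c}^{(i)}$ and $\overline{X}^{(i)}$; after the logarithmic change of variables natural to CARA utilities this BSDE is (locally) Lipschitz, hence uniquely solvable, so $(\hat\pi^{i}_{t},\hat c^{i}_{t})$ is uniquely pinned down for a.e. $t$, a.s. Applying this to the two candidates, which by (i)--(ii) both lie in $\mathcal{I}^{n}_{t_{0},x_{0}}$ and satisfy the above characterization, yields $(\pi^{i,1},c^{i,1})=(\pi^{i,2},c^{i,2})$ in $\mathbb{H}^{2}_{\mathbb{F}}(t_{0},T;\mathbb{R})\times\mathbb{H}^{2}_{\mathbb{F}}(t_{0},T;\mathbb{R})$.

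The main obstacle is the second step: the rigorous first-order expansion of $J_{i}$ along the spike variation and the identification of $\mathcal{G}^{v}_{i}(t)$, which requires differentiating under the expectation and controlling the It\^{o}-correction terms generated by the martingale part of the wealth perturbation on $[t,t+\epsilon)$, uniformly enough to pass to the limit along an arbitrary sequence $\epsilon_{n}\downarrow0$. This is exactly where \cref{interequ} enters and where the $p=2$ estimates of \cite{2021Time} can be transplanted, once the control-independence of $\overline{X}^{(i)}$ and $\overline{c}^{(i)}$ is noted; verifying that these fixed competitor processes preserve the integrability and Lipschitz bounds behind the unique solvability of the adjoint BSDE is a further, routine, point.
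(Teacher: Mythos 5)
Your overall plan coincides with the paper's: the paper gives no argument of its own here, deferring entirely to \cite[Theorem 5.3]{2021Time}, and your opening reduction --- that once $\left(\pi,c\right)^{(i)}$ and $(t_{0},x_{0})$ are fixed, the processes $\overline{X}^{(i)}$ and $\overline{c}^{(i)}$ entering $J_{i}$ are control-independent and square-integrable, so that agent $i$ faces a single-agent time-inconsistent investment--consumption problem of exactly the type treated there --- is precisely the observation that legitimises that citation. The spike-variation expansion, the symmetry argument turning the inequality in \cref{OPCS} into first-order equalities, and the identification of \cref{interequ} with the class $\Pi^{x,2}_{2}\cap\Pi^{x,2}_{3}$ of \cite{2021Time} are all faithful to the cited proof.

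The step that does not close as written is the last one. You assert that, after substituting the feedback maps into the adjoint equation, one is left for each $t$ with ``a BSDE on $[t,T]$ whose driver and terminal value depend only on the fixed data $\overline{c}^{(i)}$ and $\overline{X}^{(i)}$'', whence uniqueness by Lipschitz solvability. But the terminal datum of the adjoint equation is $\frac{1}{\delta_{i}}\left(1-\frac{\theta_{i}}{n}\right)\exp\left\{-\frac{1}{\delta_{i}}\left(1-\frac{\theta_{i}}{n}\right)\hat{X}^{i}_{T}+\frac{\theta_{i}}{\delta_{i}}\overline{X}^{(i)}_{T}\right\}$ (cf. \cref{BSDE1}), and $\hat{X}^{i}_{T}$ depends on the candidate control over the whole of $[t_{0},T]$. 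The first-order conditions therefore do not reduce to a decoupled BSDE with exogenous data: they form a flow of fully coupled FBSDEs indexed by $t$, and two distinct controls could a priori each satisfy their own such system. Breaking this coupling --- in the CARA case by exploiting the factorisation $e^{-\alpha\hat{X}^{i}_{T}}=e^{-\alpha\hat{X}^{i}_{t}}\,e^{-\alpha(\hat{X}^{i}_{T}-\hat{X}^{i}_{t})}$ to peel the conditions apart and pin down the control pathwise --- is exactly the nontrivial content of \cite[Theorem 5.3]{2021Time} that both you and the paper lean on. So your sketch gets the reduction and the necessary conditions right, but the genuinely hard point is the uniqueness of the solution to that coupled system, not the first-order expansion you single out as the main obstacle.
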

\begin{proof} As 
the proof is exactly the same as the proof of  \cite[Theorem 5.3]{2021Time},  we omit it here.
\end{proof}
}

Now, we are ready for the following definitions.
\begin{Definition}[Open-loop equilibrium control]
	
Let {$(t_{0},x_{0})\in[0,T)\times\mathbb{R}^{n}$} be given. An $n$-tuple of admissible controls $(\pi^{*,i},c^{*,i})_{i=1}^{n}\in\mathcal{A}_{t_{0}}^{n}$  is said to be an open-loop equilibrium control with respect to the {initial condition $(t_{0},x_{0})$}  if for every $i\in\left\{1,\cdots,n\right\}$, $(\pi^{*,i},c^{*,i})$ is an open-loop consistent control for agent $i$ with respect to the {initial condition $(t_{0},x_{0})$} in response to $(\pi^{*,k},c^{*,k})_{k\neq i}$.
\end{Definition}
\begin{Definition} [DF equilibrium strategy]\label{DF}
	
	A DF strategy $\left(\Pi^{*,i}, C^{*,i}\right)_{i=1}^{n}\in\mathcal{S}^{n}$ is said to be a DF equilibrium strategy if for every {$(t_{0},x_{0})\in[0,T)\times\mathbb{R}^{n}$}, the corresponding outcome $\left(\pi^{*,i},c^{*,i}\right)_{i=1}^{n}\in\mathcal{A}^{n}_{t_{0}}$ is an open-loop equilibrium control with respect to the {initial condition $(t_{0},x_{0})$}. Moreover, a DF equilibrium strategy $\left(\Pi^{*,i}, C^{*,i}\right)_{i=1}^{n}$ is said to be simple if the DF strategy $\left(\Pi^{*,i}, C^{*,i}\right)_{i=1}^{n}$ is simple.
\end{Definition} 
 \begin{remark}
	 For the sake of brevity we use the name ``equilibrium strategy'' but it is different to ``close-loop equilibrium strategies'' or ``subgame-perfect equilibrium strategies'' discussed in the literature of time-inconsistent problems. A more appropriate name would be a ``DF representation of open-loop equilibrium controls''.
 \end{remark}
The main result of this section is the following, which gives the explicit form of a DF equilibrium strategy:

\begin{theorem}\label{The2.1}
	
	 Assume that for all $i=1,\cdots,n$ ($n\geq2$), we have $\delta_{i}>0$, $\theta_{i}\in[0,1)$, $\mu_{i}>0$, $\sigma_{i}\geq0$, $\nu_{i}\geq0$ and $\sigma_{i}+\nu_{i}>0$. Then there exists a unique simple  DF equilibrium strategy $(\Pi^{*,i},C^{*,i})_{i=1}^{n}$ being the following form:	
	\begin{align}
			&\Pi^{*,i}(t)=\left[\delta_{i}\frac{\mu_{i}}{\sigma_{i}^{2}+\left(1-\frac{\theta_{i}}{n}\right)\nu_{i}^{2}}+\theta_{i}\frac{\sigma_{i}}{\sigma_{i}^{2}+\left(1-\frac{\theta_{i}}{n}\right)\nu_{i}^{2}}\frac{\phi_{n}}{1-\psi_{n}}\right]\left(T+1-t\right), \label{2.5}\\
	        &C^{*,i}(t,x)=\frac{x^{i}}{T+1-t}-\delta_{i}\hat{h}^{i}(t)-\frac{\theta_{i}}{1-\overline{\theta}}\overline{\delta \hat{h}(t)}-\left(\delta_{i}+\theta_{i}\frac{\overline{\delta}}{1-\overline{\theta}}\right)\ln\left[\lambda(T-t)\right].\label{2.6}
	\end{align}
The constants $\phi_{n},\psi_{n},\overline{\delta}$ and $\overline{\theta}$ are 
\begin{equation}
	\begin{split}
		&\phi_{n}=\frac{1}{n}\sum_{k=1}^{n}\delta_{k}\frac{\sigma_{k}\mu_{k}}{\sigma_{k}^{2}+\left(1-\frac{\theta_{k}}{n}\right)\nu_{k}^{2}},\\
		&\psi_{n}=\frac{1}{n}\sum_{k=1}^{n}\theta_{k}\frac{\sigma_{k}^{2}}{\sigma_{k}^{2}+\left(1-\frac{\theta_{k}}{n}\right)\nu_{k}^{2}},\\
		&\overline{\delta}=\frac{1}{n}\sum_{k=1}^{n}\delta_{k},\quad
		\overline{\theta}=\frac{1}{n}\sum_{k=1}^{n}\theta_{k}.
	\end{split}\label{psi}
\end{equation}	
The function $(\hat{h}^{i}(\cdot),\overline{\delta \hat{h}(\cdot)})$ is
\begin{equation}\label{hhhh}
	\begin{split}
		\hat{h}^{i}(t)&=\frac{D^{i}_{n}}{2}\left[\frac{1}{T+1-t}-(T+1-t)\right]-\frac{1}{T+1-t}\int_{t}^{T}\ln\left[\lambda(T-s)\right]ds,\\
		\overline{\delta \hat{h}(t)}&=\frac{1}{n}\sum_{k=1}^{n}\delta_{k}\hat{h}^{k}(t),
	\end{split}
\end{equation}
where
\begin{equation}
	\begin{split}
    D_{n}^{i}&=\frac{1}{2}\frac{(\mu_{i}+\sigma_{i}A^{i}_{n})^{2}}{\nu_{i}^{2}+\sigma_{i}^{2}}-\frac{1}{2}\left[(A^{i}_{n})^{2}+C^{i}_{n}\right]-B^{i}_{n},\\
	A^{i}_{n}&=\frac{1}{n}\frac{\theta_{i}}{\delta_{i}}\sum_{k\ne i}\left[\delta_{k}\frac{\sigma_{k}\mu_{k}}{\sigma_{k}^{2}+\left(1-\frac{\theta_{k}}{n}\right)\nu_{k}^{2}}+\theta_{k}\frac{\sigma_{k}^{2}}{\sigma_{k}^{2}+\left(1-\frac{\theta_{k}}{n}\right)\nu_{k}^{2}}\frac{\phi_{n}}{1-\psi_{n}}\right],\\
	B^{i}_{n}&=\frac{1}{n}\frac{\theta_{i}}{\delta_{i}}\sum_{k\ne i}\left[\delta_{k}\frac{\mu_{k}^{2}}{\sigma_{k}^{2}+\left(1-\frac{\theta_{k}}{n}\right)\nu_{k}^{2}}+\theta_{k}\frac{\mu_{k}\sigma_{k}}{\sigma_{k}^{2}+\left(1-\frac{\theta_{k}}{n}\right)\nu_{k}^{2}}\frac{\phi_{n}}{1-\psi_{n}}\right],\\
	C^{i}_{n}&=(\frac{1}{n}\frac{\theta_{i}}{\delta_{i}})^{2}\sum_{k\ne i}\left[\delta_{k}\frac{\nu_{k}\mu_{k}}{\sigma_{k}^{2}+\left(1-\frac{\theta_{k}}{n}\right)\nu_{k}^{2}}+\theta_{k}\frac{\nu_{k}\sigma_{k}}{\sigma_{k}^{2}+\left(1-\frac{\theta_{k}}{n}\right)\nu_{k}^{2}}\frac{\phi_{n}}{1-\psi_{n}}\right]^{2}.
	\end{split}	
\end{equation}
\end{theorem}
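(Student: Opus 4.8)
The plan is to follow the two-step game-theoretic reduction sketched in the introduction. \textbf{Step 1 (single-agent characterization).} Fix $i$ and freeze the competitors' controls $(\pi^k,c^k)_{k\ne i}$; I would characterize the open-loop consistent control $(\hat\pi^i,\hat c^i)$ for agent $i$ by a stochastic maximum principle / spike-variation argument in the spirit of Hu--Jin--Zhou \cite{Hu-2012} and Alia et al. \cite{EP14900367320210201}. Expanding the payoff $J_i$ under the perturbation \eqref{eps} to first order in $\epsilon$, the local optimality condition in \cref{OPCS} translates into a flow of FBSDEs indexed by the ``current self'' time $t$: for each $t$ one has an adjoint backward equation on $[t,T]$ whose driver encodes the discount factor $\lambda(\cdot-t)$, and the stationarity conditions in $(\pi,c)$ give the optimal controls in feedback-on-the-adjoint form. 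Because $U_i$ is exponential (CARA), the value function and adjoint processes are exponential-affine, so the FBSDE system can be solved explicitly; this is exactly where the $\ln[\lambda(T-t)]$ terms and the $(T+1-t)$ factors in \eqref{2.5}--\eqref{2.6} originate.

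\textbf{Step 2 (imposing the DF ansatz and closing the loop).} I would then assume every competitor $k\ne i$ uses a \emph{simple} DF strategy $(\Pi^k,C^k)$ with $C^k(t,x)=p^{k,k}(t)x^k+q^k(t)$. Substituting these into \eqref{overlinex}, the aggregate $\overline X^{(i)}$ becomes an explicit Gaussian process with drift and volatility determined by the $\Pi^k,p^{k,k},q^k$. Plugging this into the Step-1 FBSDE for agent $i$ and looking for a solution of DF form reduces the problem to a coupled system of (deterministic) ODEs for $\Pi^{*,i}, p^{*,i,i}, q^{*,i}$; the Riccati-type equation for $p^{*,i,i}$ decouples and is solved by $p^{*,i,i}(t)=1/(T+1-t)$ (hence the consumption-to-wealth ratio), and the linear equation for $q^{*,i}$ integrates to the $\delta_i\hat h^i$, $\ln[\lambda(T-t)]$ expression, with $\hat h^i$ itself solving the linear ODE whose solution is \eqref{hhhh}. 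The investment ODE gives $\Pi^{*,i}$ in terms of the still-unknown aggregate quantities.

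\textbf{Step 3 (fixed point).} The consistency requirement is that the outcome $(\pi^{*,i},c^{*,i})_{i=1}^n$ be an open-loop equilibrium, i.e. each $\Pi^{*,i}$ and $C^{*,i}$ reproduce the aggregates that were assumed in Step 2. Averaging the expressions for $\Pi^{*,i}$ over $i$ and matching the $\sigma_i$-weighted and unweighted aggregates yields a \emph{linear} fixed-point system for the two scalars $\phi_n=\frac1n\sum_k\delta_k\sigma_k\mu_k/(\sigma_k^2+(1-\theta_k/n)\nu_k^2)$ and $\psi_n=\frac1n\sum_k\theta_k\sigma_k^2/(\sigma_k^2+(1-\theta_k/n)\nu_k^2)$; since $\theta_k\in[0,1)$ forces $\psi_n<1$, the system is uniquely solvable, giving the closed forms \eqref{psi}, and then $A^i_n,B^i_n,C^i_n,D^i_n$ are just the resulting constants in the $q^{*,i}$/$\hat h^i$ equations. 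Similarly averaging $\delta_k\hat h^k$ closes the equation for $\overline{\delta\hat h}$. Uniqueness within the simple-DF class follows from \cref{uniquethem} (each agent's consistent control is unique in $\mathcal I^n_{t_0,x_0}$, and \cref{dfadmissiable} guarantees DF outcomes lie there) together with the uniqueness of the ODE/fixed-point solution and the identifiability of a DF strategy from its outcomes noted in the remark after \eqref{outcome}.

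\textbf{Main obstacle.} The principal difficulty is Step 1: rigorously deriving and solving the \emph{flow} of FBSDEs for the time-inconsistent single-agent subproblem with a general discount $\lambda$ and a stochastic, non-Markovian competitor aggregate $\overline X^{(i)}$, and verifying the integrability/admissibility needed to justify the spike-variation expansion (so that the $\limsup$ condition in \cref{OPCS} is equivalent to the stationarity conditions). Once the exponential-affine structure is exploited to linearize everything, Steps 2--3 are essentially bookkeeping with ODEs and a linear fixed point; but checking that the candidate $(\Pi^{*,i},C^{*,i})$ indeed satisfies \cref{OPCS} for \emph{every} initial condition $(t_0,x_0)$, and not merely the ODE system, is the part that requires care.
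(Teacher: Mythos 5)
Your proposal follows essentially the same route as the paper: characterize the single-agent open-loop consistent control via a spike-variation/adjoint-FBSDE argument, exploit the exponential-affine (CARA) structure to reduce everything to explicit ODEs under the DF ansatz (with the Riccati part giving the $1/(T+1-t)$ consumption-to-wealth ratio and the linear part giving $\hat{h}^{i}$ and the $\ln[\lambda(T-t)]$ terms), close the loop with a linear fixed-point equation for the $\sigma$-weighted aggregate (solvable because $\theta_{k}\in[0,1)$ forces $\psi_{n}<1$ and $\overline{\theta}<1$), and deduce uniqueness from \cref{uniquethem} and \cref{dfadmissiable}. The only notable difference is that the paper sidesteps what you call the main obstacle: it never proves that the $\limsup$ condition is \emph{equivalent} to the stationarity conditions, but only establishes a sufficient verification criterion (\cref{suff} and \cref{verif}) that the PDE-constructed candidate is checked against, with uniqueness of the consistent control handled separately by the result imported from \cite{2021Time}.
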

It is straightforward to obtain the following corollary, which covers the single stock case.
\begin{Corollary}[Single stock]\label{2.2c}
	Assume that for all $i=1,\cdots,n$ $(n\geq2)$, we have $\mu_{i}=\mu>0$, $\sigma_{i}=\sigma>0$, and $\nu_{i}=0$. Then the strategy $(\Pi^{*,i},C^{*,i})_{i=1}^{n}$ has the following form:
   \begin{equation}
   	\begin{split}
   	 &\Pi^{*,i}(t)=\frac{\mu}{\sigma^{2}}\left(\delta_{i}+\theta_{i}\frac{\overline{\delta}}{1-\overline{\theta}}\right)\left(T+1-t\right),\\
	&C^{*,i}(t,x)=\frac{x^{i}}{T+1-t}+\left(\delta_{i}+\theta_{i}\frac{\overline{\delta}}{1-\overline{\theta}}\right)\mathcal{H}(t)-\left(\delta_{i}+\theta_{i}\frac{\overline{\delta}}{1-\overline{\theta}}\right)\ln\left[\lambda(T-t)\right],	
		\end{split}
	\end{equation}
where the function $\mathcal{H}(\cdot)$ is 
	\begin{equation}\label{fH}
		\mathcal{H}(t)=(\frac{1}{2}\frac{\mu}{\sigma})^{2}\left[(T+1-t)-\frac{1}{T+1-t}\right]+\frac{1}{T+1-t}\int_{t}^{T}\ln\left[\lambda(T-s)\right]ds.
	\end{equation}
\end{Corollary}		
\begin{proof}[Proof of  \cref{The2.1}]\label{proof828}
 Let $i\in\left\{1,\cdots,n\right\}$ and {$(t_{0},x_{0})\in[0,T)\times\mathbb{R}^{n}$} be given. We denote by $(\hat{\pi}^{i},\hat{c}^{i})\in{\mathcal{A}_{t_{0}}}$ a candidate control for agent $i$. Assume that the inputs $(\pi^{k},c^{k})_{k\neq i}$ are of the following form:
	\begin{equation*}
		\left(\pi^{k}_{t},c^{k}_{t}\right)=\left(\Pi^{k}(t),C^{k}(t,X_{t})\right),\,\ k\neq i,\,\ {t\in[t_{0},T]},
	\end{equation*}
	where  $\left(\Pi^{k},C^{k}\right)_{k\neq i}\in\mathcal{S}^{n-1}$ and  $X=(X^{1},\cdots,X^{n})$ is the wealth process associated with $\left(\hat{\pi}^{i},\hat{c}^{i}\right)$ and $(\pi^{k},c^{k})_{k\neq i}$ satisfying
	\begin{equation}
		\begin{cases}
			dX^{k}_{t}=\left(\Pi^{k}(t)\mu_{k}-C^{k}(t,X_{t})\right)dt+\Pi^{k}(t)\nu_{k}dW^{k}_{t}+\Pi^{k}(t)\sigma_{k}dB_{t},\, k\neq i,\, {t\in[t_{0},T]}, \\
			dX^{i}_{t}=\left(\hat{\pi}_{t}^{i}\mu_{i}-\hat{c}^{i}_{t}\right)dt+\hat{\pi}_{t}^{i}\nu_{i}dW_{t}^{i}+\hat{\pi}_{t}^{i}\sigma _{i}dB_{t},\, {t\in[t_{0},T]},\\ X^{j}_{{t_{0}}}=x^{j}_{0},\, j=1,\dots,n.
		\end{cases}
	\end{equation}

	  We now find the open-loop consistent control for agent $i$ with respect to the { initial condition $(t_{0},x_{0})$} in response to $\left(\pi^{k},c^{k}\right)_{k\neq i}$, then we resolve the resulting fixed point problem to obtain the desired equilibrium.

\textit{Open-loop consistent control for agent $i$}.  According to  \cref{verif} in Appendix A,  we first aim to find  a classical solution of the following PDE:
\begin{equation}
\begin{cases}
	V^{i}_{t}(t,x^{i},\overline{x}^{(i)})+\hat{\Pi}^{i}(t,x)\left(V^{i}_{x}(t,x^{i},\overline{x}^{(i)})\mu_{i}+V^{i}_{xy}(t,x^{i},\overline{x}^{(i)})\sigma_{i}\overline{\sigma\Pi}^{(i)}(t)\right)\\+\frac{1}{2}\hat{\Pi}^{i}(t,x)^{2}V^{i}_{xx}(t,x^{i},\overline{x}^{(i)})\left(\nu_{i}^{2}+\sigma_{i}^{2}\right)-\hat{C}^{i}(t,x)V^{i}_{x}(t,x^{i},\overline{x}^{(i)})\\+V^{i}_{y}(t,x^{i},\overline{x}^{(i)})\left(\overline{\mu\Pi}^{(i)}(t)-\overline{C}^{(i)}(t,x)\right)\\+\frac{1}{2}V^{i}_{yy}(t,x^{i},\overline{x}^{(i)})\left[\left(\overline{\sigma\Pi}^{(i)}(t)\right)^{2}+\frac{1}{n^{2}}\sum_{k\neq i}\left(\nu_{k}\Pi^{k}(t)\right)^{2}\right]=0, (t,x)\in[0,T]\times\mathbb{R}^{n},\\
	V^{i}(T,x^{i},\overline{x}^{(i)})=\frac{1}{\delta_{i}}\left(1-\frac{\theta_{i}}{n}\right)\exp\left\{-\frac{1}{\delta_{i}}\left(1-\frac{\theta_{i}}{n}\right)x^{i}+\frac{\theta_{i}}{\delta_{i}}\overline{x}^{(i)}\right\},\,x\in\mathbb{R}^{n},\\
	\hat{\Pi}^{i}(t,x)=-\frac{\mu_{i}V^{i}(t,x^{i},\overline{x}^{(i)})+\sigma_{i}\overline{\sigma\Pi}^{(i)}(t)V^{i}_{y}(t,x^{i},\overline{x}^{(i)})}{(\nu_{i}^{2}+\sigma_{i}^{2})V^{i}_{x}(t,x^{i},\overline{x}^{(i)})},\,(t,x)\in[0,T]\times\mathbb{R}^{n},\\
	\hat{C}^{i}(t,x)=-\frac{\delta_{i}}{1-\frac{\theta_{i}}{n}}\ln\left[\frac{\delta_{i}}{1-\frac{\theta_{i}}{n}}\lambda(T-t)V^{i}(t,x^{i},\overline{x}^{(i)})\right]\\+\frac{\theta_{i}}{1-\frac{\theta_{i}}{n}}\overline{C}^{(i)}(t,x),\,(t,x)\in[0,T]\times\mathbb{R}^{n}.
\end{cases}	\label{sys1}
\end{equation}
Based on the terminal condition, we guess that the solution has the following form:
\begin{equation*}
	V^{i}(t,x^{i},\overline{x}^{(i)})=\frac{1}{\delta_{i}}\left(1-\frac{\theta_{i}}{n}\right)\exp\left\{f^{i}(t)x^{i}+g^{i}(t)\overline{x}^{(i)}+h^{i}(t)\right\}, \  \left(t,x^{i},\overline{x}^{(i)}\right)\in[0,T]\times\mathbb{R}\times\mathbb{R},\label{Guass}
\end{equation*} 
where $f^{i}(\cdot)$, $g^{i}(\cdot)$ and $h^{i}(\cdot)\in C^{1}([0,T];\mathbb{R})$ such that $f^{i}(T)=-\frac{1}{\delta_{i}}\left(1-\frac{\theta_{i}}{n}\right)$, $g^{i}(T)=\frac{\theta_{i}}{\delta_{i}}$ and $h^{i}(T)=0$.

Substituting the ansatz into \cref{sys1}, we obtain
\begin{align*}
	&\left[\dot{f}^{i}(t)+\frac{\delta_{i}}{1-\frac{\theta_{i}}{n}}\left(f^{i}(t)\right)^{2}\right]x^{i}+\left[\dot{g}^{i}(t)+\frac{\delta_{i}}{1-\frac{\theta_{i}}{n}}f^{i}(t)g^{i}(t)\right]\overline{x}^{(i)}\\&+\dot{h}^{i}(t)+\frac{\delta_{i}}{1-\frac{\theta_{i}}{n}}f^{i}(t)h^{i}(t)+\frac{\delta_{i}}{1-\frac{\theta_{i}}{n}}f^{i}(t)\ln\left[\lambda(T-t)\right]\\&-\frac{1}{2}\frac{\left(\mu_{i}+\sigma_{i}\overline{\sigma\Pi}^{(i)}(t)g^{i}(t)\right)^{2}}{\nu_{i}^{2}+\sigma_{i}^{2}}+\overline{\mu\Pi}^{(i)}(t)g^{i}(t)+\frac{1}{2}\left[g^{i}(t)\right]^{2}\left[\left(\overline{\sigma\Pi}^{(i)}(t)\right)^{2}+\frac{1}{n^{2}}\sum_{k\neq i}\left(\nu_{k}\Pi^{k}(t)\right)^{2}\right]\\&-\left(\frac{\theta_{i}}{1-\frac{\theta_{i}}{n}}f^{i}(t)+g^{i}(t)\right)\overline{C}^{(i)}(t,x)=0,\,\ (t,x)\in[0,T]\times\mathbb{R}^{n}.
\end{align*}

Although the term $\overline{C}^{(i)}(t,x)$ in the last line depends on $x\in\mathbb{R}^{n}$, we can solve the above equation correctly. Indeed, if $f^{i}$ and $g^{i}$ solve the following ODEs:
\begin{equation}\label{ODE1}
	\begin{cases}
		\dot{f}^{i}(t)+\frac{\delta_{i}}{1-\frac{\theta_{i}}{n}}\left(f^{i}(t)\right)^{2}=0,\quad t\in[0,T],\\
		\dot{g}^{i}(t)+\frac{\delta_{i}}{1-\frac{\theta_{i}}{n}}f^{i}(t)g^{i}(t)=0,\quad t\in[0,T],\\
		f^{i}(T)=-\frac{1}{\delta_{i}}\left(1-\frac{\theta_{i}}{n}\right),\, g^{i}(T)=\frac{\theta_{i}}{\delta_{i}},
	\end{cases}
\end{equation}
whose unique solutions are given by
\begin{equation}
\begin{cases}
	    f^{i}(t)=-\frac{1}{\delta_{i}}\left(1-\frac{\theta_{i}}{n}\right)\frac{1}{T+1-t},\\
	    g^{i}(t)=\frac{\theta_{i}}{\delta_{i}}\frac{1}{T+1-t},
\end{cases}
\end{equation}
then it holds that $\frac{\theta_{i}}{1-\frac{\theta_{i}}{n}}f^{i}(t)+g^{i}(t)=0$ for any $t\in[0,T]$. Thus, the coefficient of $\overline{C}^{(i)}(t,x)$ vanishes, and we obtain the ODE of $h^{i}$:
\begin{equation*}
	\begin{split}
		&\dot{h}^{i}(t)+\frac{\delta_{i}}{1-\frac{\theta_{i}}{n}}f^{i}(t)h^{i}(t)+\frac{\delta_{i}}{1-\frac{\theta_{i}}{n}}f^{i}(t)\ln\left[\lambda(T-t)\right]\\&-\frac{1}{2}\frac{\left(\mu_{i}+\sigma_{i}\overline{\sigma\Pi}^{(i)}(t)g^{i}(t)\right)^{2}}{\nu_{i}^{2}+\sigma_{i}^{2}}+\overline{\mu\Pi}^{(i)}(t)g^{i}(t)+\frac{1}{2}\left[g^{i}(t)\right]^{2}\left[\left(\overline{\sigma\Pi}^{(i)}(t)\right)^{2}+\frac{1}{n^{2}}\sum_{k\neq i}\left(\nu_{k}\Pi^{k}(t)\right)^{2}\right]\\&=0
	\end{split}
\end{equation*}
with terminal condition $h^{i}(T)=0$, whose solution is given by
\begin{equation} h^{i}(t)=\frac{1}{T+1-t}\int_{t}^{T}\left(T+1-s\right)G^{i}((\Pi^{k}(s))_{k\ne i},s)ds,
\end{equation}
where the deterministic function $G^{i}:\mathbb{R}^{n-1}\times[0,T]\rightarrow\mathbb{R}$ is defined by  \begin{align*}
    G^{i}((\pi^{k})_{k\ne i},t):&=-\frac{1}{T+1-t}\ln\left[\lambda(T-t)\right]-\frac{1}{2}\frac{\left(\mu_{i}+\frac{\theta_{i}\sigma_{i}}{\left(T+1-t\right)\delta_{i}}\overline{\sigma\pi}^{(i)}\right)^{2}}{\nu_{i}^{2}+\sigma_{i}^{2}}\\&+\frac{\theta_{i}}{\left(T+1-t\right)\delta_{i}}\overline{\mu\pi}^{(i)}+\frac{\theta_{i}^{2}}{2(T+1-t)^{2}\delta_{i}^{2}}\left[(\overline{\sigma\pi}^{(i)})^{2}+\frac{1}{n^{2}}\sum_{k\neq i}(\nu_{k}\pi^{k})^{2}\right]
\end{align*}
for $((\pi^{k})_{k\ne i},t)\in\mathbb{R}^{n-1}\times[0,T]$.

Then, by the representation of $(\hat{\Pi}^{i},\hat{C}^{i})$ in \cref{sys1}, we have
\begin{equation}\label{candidateweak}
	\begin{cases}
	\hat{\Pi}^{i}(t,x)=\frac{\mu_{i}+\sigma_{i}\overline{\sigma\Pi}^{(i)}(t)\frac{\theta_{i}}{\delta_{i}}\frac{1}{T+1-t}}{\left(\nu_{i}^{2}+\sigma_{i}^{2}\right)\left[\frac{1}{\delta_{i}}(1-\frac{\theta_{i}}{n})\frac{1}{T+1-t}\right]},\, (t,x)\in[0,T]\times\mathbb{R}^{n},\\
	\hat{C}^{i}(t,x)=\frac{1}{T+1-t}x^{i}-\frac{\theta_{i}}{1-\frac{\theta_{i}}{n}}\frac{1}{T+1-t}\overline{x}^{(i)}-\frac{\delta_{i}}{1-\frac{\theta_{i}}{n}}h^{i}(t)-\frac{\delta_{i}}{1-\frac{\theta_{i}}{n}}\ln\left[\lambda(T-t)\right]\\+\frac{\theta_{i}}{1-\frac{\theta_{i}}{n}}\overline{C}^{(i)}(t,x),\, (t,x)\in[0,T]\times\mathbb{R}^{n}.
	\end{cases}
\end{equation}

In particular, $\hat{\Pi}^{i}(t,x)=	\hat{\Pi}^{i}(t)$ does not depend on the state argument $x\in\mathbb{R}^{n}$. It is clear that $(\hat{\Pi}^{i},\hat{C}^{i})\in\mathcal{S}$ and does not depend on the {initial condition $(t_{0},x_{0})$}. Using \cref{dfadmissiable} and  \cref{verif},  we can conclude that  the outcome $(\hat{\pi}^{i},\hat{c}^{i})$ associated with $(\hat{\Pi}^{i},\hat{C}^{i})\in\mathcal{S}$ is indeed an open-loop consistent control for agent $i$ with respect to the {initial condition $(t_{0},x_{0})$} in response to $\left(\pi^{k},c^{k}\right)_{k\neq i}$. 

\textit{Fixed point problem}. From the above discussion, we actually construct a best response map
\begin{equation}
	F:\left(\Pi^{k},C^{k}\right)_{k=1}^{n}\rightarrow\left(\hat{\Pi}^{k},\hat{C}^{k}\right)_{k=1}^{n}
\end{equation}
by \cref{candidateweak}, which maps a DF strategy into a DF strategy. Now, we aim to find the fixed point of  the map $F$.

 We first address the investment strategies. For a candidate portfolio vector $(\Pi^{1},\dots,\Pi^{n})$ to be the fixed point, we need $\Pi^{i}(t)=\hat{\Pi}^{i}(t)$, for $i=1,\dots,n$ and $t\in[0,T]$. Let\[\overline{\sigma\Pi}(t):=\frac{1}{n}\sum_{k=1}^{n}\sigma_{k}\Pi^{k}(t)=\overline{\sigma\Pi}^{(i)}(t)+\frac{1}{n}\sigma_{i}\Pi^{i}(t).\]

Then, \[\Pi^{i}(t)=\hat{\Pi}^{i}(t)=\frac{\delta_{i}\mu_{i}+\theta_{i}\sigma_{i}\overline{\sigma\Pi}(t)\frac{1}{T+1-t}}{\left(\nu_{i}^{2}+\sigma_{i}^{2}\right)\left(1-\frac{\theta_{i}}{n}\right)\frac{1}{T+1-t}}-\frac{\theta_{i}\sigma_{i}^{2}\Pi^{i}(t)}{n\left(\nu_{i}^{2}+\sigma_{i}^{2}\right)\left(1-\frac{\theta_{i}}{n}\right)},\]
which yields
\begin{equation}\label{piaveg}
	\begin{split}
		\Pi^{i}(t)&=\frac{\delta_{i}\mu_{i}+\theta_{i}\sigma_{i}\overline{\sigma\Pi}(t)\frac{1}{T+1-t}}{\left(\nu_{i}^{2}+\sigma_{i}^{2}\right)\left(1-\frac{\theta_{i}}{n}\right)\frac{1}{T+1-t}}\left(1+\frac{\theta_{i}\sigma_{i}^{2}}{n\left(\nu_{i}^{2}+\sigma_{i}^{2}\right)\left(1-\frac{\theta_{i}}{n}\right)}\right)^{-1}\\&=\frac{\delta_{i}\mu_{i}+\theta_{i}\sigma_{i}\overline{\sigma\Pi}(t)\frac{1}{T+1-t}}{\left[\sigma_{i}^{2}+(1-\frac{\theta_{i}}{n})\nu_{i}^{2}\right]\frac{1}{T+1-t}}.
	\end{split}
\end{equation}

Multiplying both sides of \cref{piaveg}  by $\sigma_{i}$ and then averaging over $i=1,\dots,n$, gives the following fixed point equation:
\begin{equation}
	\overline{\sigma\Pi}(t)=\phi_{n}(T+1-t)+\psi_{n}\overline{\sigma\Pi}(t)\label{ep}
\end{equation}
where $(\phi_{n},\psi_{n})$ is as defined in \cref{psi}.

We then have the following cases to get the fixed point:
\begin{itemize}
\item [(i)]  If $\psi_{n}<1$, then \cref{ep}  yields $\overline{\sigma\Pi}(t)=\left[\phi_{n}/\left(1-\psi_{n}\right)\right]\left(T+1-t\right)$ and the  investment strategy is given by \cref{2.5}.

\item [(ii)]  If $\psi_{n}=1$, then the equation \cref{ep} has no solution. Note that $\psi_{n}=1$ and $\phi_{n}=0$ cannot happen. Using assumption $\delta_{i}>0$, $\mu_{i}>0$, and $\sigma_{i}+\nu_{i}>0$, one can easily get a contradiction.
\end{itemize}
Next, we address the consumption strategies. Similarly, in order to be the fixed point, the candidate consumption vector $(C^{1},\dots,C^{n})$ needs to satisfy $C^{i}(t,x)=\hat{C}^{i}(t,x)$, for $i=1,\dots,n$ and $(t,x)\in[0,T]\times\mathbb{R}^{n}$. Let \[\overline{C}(t,x):=\frac{1}{n}\sum_{k=1}^{n}C^{k}(t,x)=\overline{C}^{(i)}(t,x)+\frac{1}{n}C^{i}(t,x).\]

Then, 
\begin{equation}
\begin{split}
    C^{i}(t,x)&=\frac{1}{T+1-t}x^{i}-\frac{\theta_{i}}{1-\frac{\theta_{i}}{n}}\frac{1}{T+1-t}\overline{x}^{(i)}-\frac{\delta_{i}}{1-\frac{\theta_{i}}{n}}\hat{h}^{i}(t)+\frac{\theta_{i}}{1-\frac{\theta_{i}}{n}}\overline{C}(t,x)\\&-\frac{\theta_{i}}{1-\frac{\theta_{i}}{n}}\frac{C^{i}(t,x)}{n}-\frac{\delta_{i}}{1-\frac{\theta_{i}}{n}}\ln\left[\lambda(T-t)\right],\label{c1}
\end{split}
\end{equation}
 where $\hat{h}^{i}(t):=\frac{1}{T+1-t}\int_{t}^{T}(T+1-s)\left[G^{i}((\Pi^{*,k}(s))_{k\ne i},s)\right]ds$. For explicit expression of $\hat{h}^{i}(t)$ in \cref{hhhh}, see \cref{h}.  
 
 The  equation \cref{c1} implies 
\begin{equation}\label{different}
	C^{i}(t,x)=\frac{1}{T+1-t}\left(x^{i}-\theta_{i}\overline{x}\right)-\delta_{i}\hat{h}^{i}(t)+\theta_{i}\overline{C}(t,x)-\delta_{i}\ln\left[\lambda(T-t)\right].
\end{equation}

Averaging over $i=1,\dots,n$, gives 
\begin{equation}
	\left(1-\overline{\theta}\right)\overline{C}(t,x)=\frac{\left(1-\overline{\theta}\right)\overline{x}}{T+1-t}-\overline{\delta \hat{h}(t)}-\overline{\delta}\ln\left[\lambda(T-t)\right],\label{eqc}
\end{equation}
where $\overline{(\cdot)}$ represents the arithmetic mean.

Then, we have the following cases to get the fixed point:
\begin{itemize}
\item [(i)] If $\overline{\theta}<1$, then equation \cref{eqc} yields $\overline{C}(t,x)=\frac{\overline{x}}{T+1-t}-\frac{\overline{\delta \hat{h}(t)}}{1-\overline{\theta}}-\frac{\overline{\delta}}{1-\overline{\theta}}\ln\left[\lambda(T-t)\right]$, and the consumption strategy is given by \cref{2.6}.

\item [(ii)] If $\overline{\theta}=1$ and $\overline{\delta \hat{h}(t)}+\overline{\delta}\ln\left[\lambda(T-t)\right]\not\equiv0$, then equation \cref{eqc} has no solution.

\item [(iii)] If $\overline{\theta}=1$ and $\overline{\delta \hat{h}(t)}+\overline{\delta}\ln\left[\lambda(T-t)\right]\equiv0$, then there exist infinitely many solutions.
\end{itemize}

In order to remove `bad' cases, we limit $\theta_{i}$ in $[0,1)$ for all $i\in\left\{1,\cdots,n\right\}$. In summary, we get that there exists a unique solution to the fixed point problem, which turns out to be a simple DF equilibrium strategy $(\Pi^{*,i},C^{*,i})_{i=1}^{n}$  given by \cref{2.5} and \cref{2.6}.

\textit{Uniqueness.}  We claim that a simple DF equilibrium strategy is equivalent to  a fixed point of the map $F$. Indeed, based on the above discussion, we conclude that the fixed point of $F$ must be a simple DF equilibrium strategy. {To complete the argument, we need to demonstrate that a simple DF equilibrium strategy must be a fixed point of $F$.

{Assume that $\left(\Pi^{*,k},C^{*,k}\right)_{k=1}^{n}$ is a simple DF equilibrium strategy. Define $$\left(\hat{\Pi}^{k},\hat{C}^{k}\right)_{k=1}^{n}:=F\left(\left(\Pi^{*,k},C^{*,k}\right)_{k=1}^{n}\right).$$  Take an arbitrary initial pair $(t_{0},x_{0})\in[0,T)\times\mathbb{R}^{n}$ and fix $i\in\left\{1,\cdots,n\right\}$. Then,  we consider the controls $\left(\pi^{*,k},c^{*,k}\right)_{k=1}^{n}\in\mathcal{A}^{n}_{t_{0}}$ and $\left(\hat{\pi}^{[i],k},\hat{c}^{[i],k}\right)_{k=1}^{n}\in\mathcal{A}^{n}_{t_{0}}$ determined by 
\begin{equation*}
	\begin{cases}
		\left(\pi^{*,i}_{t},c^{*,i}_{t}\right):=\left(\Pi^{*,i}(t),C^{*,i}(t,X^{*}_{t})\right),\,\, t\in[t_{0},T],\\
		\left(\pi^{*,k}_{t},c^{*,k}_{t}\right):=\left(\Pi^{*,k}(t),C^{*,k}(t,X^{*}_{t})\right),\,\, t\in[t_{0},T],\, k\neq i,
	\end{cases}
\end{equation*}}
{\begin{equation*}
   \begin{cases}
	\left(\hat{\pi}^{[i],i}_{t},\hat{c}^{[i],i}_{t}\right):=\left(\hat{\Pi}^{i}(t),\hat{C}^{i}(t,\hat{X}^{[i]}_{t})\right),\,\, t\in[t_{0},T],\\
	\left(\hat{\pi}^{[i],k}_{t},\hat{c}^{[i],k}_{t}\right):=\left(\Pi^{*,k}(t),C^{*,k}(t,\hat{X}^{[i]}_{t})\right),\,\, t\in[t_{0},T],\, k\neq i,
   \end{cases}
\end{equation*}
where $X^{*}=\left(X^{*,1},\cdots,X^{*,n}\right)$ and $\hat{X}^{[i]}=\left(\hat{X}^{[i],1},\cdots,\hat{X}^{[i],n}\right)$ are defined by}
{\begin{equation*}
	\begin{cases}
		dX^{*,i}_{t}=\left(\Pi^{*,i}(t)\mu_{i}-C^{*,i}(t,X^{*}_{t})\right)dt+\Pi^{*,i}(t)\nu_{i}dW^{i}_{t}+\Pi^{*,i}(t)\sigma_{i}dB_{t},\,\ t\in[t_{0},T],\\
		dX^{*,k}_{t}=\left(\Pi^{*,k}(t)\mu_{k}-C^{*,k}(t,X^{*}_{t})\right)dt+\Pi^{*,k}(t)\nu_{k}dW^{k}_{t}+\Pi^{*,k}(t)\sigma_{k}dB_{t},\,\ t\in[t_{0},T],\,\\ k\neq i,\\
		X^{*}_{t_{0}}=x_{0}\in\mathbb{R}^{n},
	\end{cases}
\end{equation*}
\begin{equation*}
	\begin{cases}
		d\hat{X}^{[i],i}_{t}=\left(\hat{\Pi}^{i}(t)\mu_{i}-\hat{C}^{i}(t,\hat{X}^{[i]}_{t})\right)dt+\hat{\Pi}^{i}(t)\nu_{i}dW^{i}_{t}+\hat{\Pi}^{i}(t)\sigma_{i}dB_{t},\,\ t\in[t_{0},T],\\
		d\hat{X}^{[i],k}_{t}=\left(\Pi^{*,k}(t)\mu_{k}-C^{*,k}(t,\hat{X}^{[i]}_{t})\right)dt+\Pi^{*,k}(t)\nu_{k}dW^{k}_{t}+\Pi^{*,k}(t)\sigma_{k}dB_{t},\,\ t\in[t_{0},T],\,\\ k\neq i,\\
		\hat{X}^{[i]}_{t_{0}}=x_{0}\in\mathbb{R}^{n}.
	\end{cases}
\end{equation*}}
Since $\left(\Pi^{*,k},C^{*,k}\right)_{k=1}^{n}$ is simple, we see that $X^{*,k}$ and $\hat{X}^{[i],k}$ solve the same SDE for each $k\neq i$, and we get $X^{*,k}=\hat{X}^{[i],k}$ by the uniqueness of the SDE. Again by the assumption that $\left(\Pi^{*,k},C^{*,k}\right)_{k=1}^{n}$ is simple, we see that $\left(\pi^{*,k},c^{*,k}\right)_{k\neq i}=\left(\hat{\pi}^{[i],k},\hat{c}^{[i],k}\right)_{k\neq i}$. 
By the construction of controls $\left(\pi^{*,k},c^{*,k}\right)_{k=1}^{n}$ and $\left(\hat{\pi}^{[i],k},\hat{c}^{[i],k}\right)_{k=1}^{n}$, we have that 
\begin{itemize}
	{\item [ (i)] $\left(\pi^{*,i},c^{*,i}\right)$ is an  open-loop consistent control for agent $i$ with respect to the initial condition $(t_{0},x_{0})$ in response to $\left(\pi^{*,k},c^{*,k}\right)_{k\neq i}$;}
	{\item [ (ii)] $\left(\hat{\pi}^{[i],i},\hat{c}^{[i],i}\right)$ is an open-loop consistent control for agent $i$ with respect to the initial condition $(t_{0},x_{0})$ in response to $\left(\pi^{[i],k},c^{[i],k}\right)_{k\neq i}$.}
\end{itemize}
As both $\left(\pi^{*,k},c^{*,k}\right)_{k=1}^{n}$ and $\left(\hat{\pi}^{[i],k},\hat{c}^{[i],k}\right)_{k=1}^{n}$ are outcomes of DF strategies,  they are in $\mathcal{I}^{n}_{t_{0},x_{0}}$ by \cref{dfadmissiable}. Then, we deduce that  $\left(\pi^{*,i},c^{*,i}\right)=\left(\hat{\pi}^{[i],i},\hat{c}^{[i],i}\right)$ by the uniqueness of the open-loop consistent control (with respect to the initial condition $(t_{0},x_{0})$), see \cref{uniquethem}, and thus we have $X^{*}=\hat{X}^{[i]}$.  Therefore, we have $\Pi^{*,i}(t)=\hat{\Pi}^{i}(t)$ and $C^{*,i}(t,X^{*}_{t})=\hat{C}^{i}(t,X^{*}_{t})$ for any $t\in[t_{0},T]$. In particular, we have $\Pi^{*,i}(t_{0})=\hat{\Pi}^{i}(t_{0})$ and $C^{*,i}(t_{0},x_{0})=\hat{C}^{i}(t_{0},x_{0})$. Since $(t_{0},x_{0})\in[0,T)\times\mathbb{R}^{n}$ is arbitrary, we see that $\Pi^{*,i}=\hat{\Pi}^{i}$ and $C^{*,i}=\hat{C}^{i}$.  As $i\in\left\{1,\cdots,n\right\}$ is arbitrary, we see that  $\left(\Pi^{*,k},C^{*,k}\right)_{k=1}^{n}$ is a fixed point of the map $F:\mathcal{S}^{n}\rightarrow\mathcal{S}^{n}$. Using the fact that  $F$ has a unique fixed point, we complete the proof.}
\end{proof}
\begin{remark}

Let us revisit the case of infinitely many solutions to the fixed point problem. The discount function can be obtained by solving the following equations,
	\begin{equation*}
		\begin{cases}
			\overline{\delta \hat{h}(t)}+\overline{\delta}\ln\left[\lambda(T-t)\right]=0,\,\ t\in[0,T],\\
			\lambda(0)=1.
		\end{cases}
    \end{equation*}
By calculation, we get
\begin{equation*}
	\begin{cases}
		\overline{\delta}\alpha(t)=\frac{\overline{\delta}}{T+1-t}\int_{t}^{T}\alpha(s)ds+\frac{\overline{\delta D_{n}}}{2}\left[(T+1-t)-\frac{1}{T+1-t}\right],\,\ t\in[0,T],\\
		\alpha(T)=0,
	\end{cases}
\end{equation*}
	which is equivalent to the following ODE,
	\begin{equation*}
	\begin{cases}
		\alpha'(t)=-\frac{\overline{\delta D_{n}}}{\overline{\delta}},\,\ t\in[0,T],\\
		\alpha(T)=0,
	\end{cases}
    \end{equation*}
    where $\alpha(t):=\ln\left[\lambda(T-t)\right]$, $\overline{\delta D_{n}}:=\frac{1}{n}\sum_{k=1}^{n}\delta_{k}D_{n}^{k}$ and $\overline{\delta}:=\frac{1}{n}\sum_{k=1}^{n}\delta_{k}$. Then, we can solve that the discount function is $\lambda(T-t)=\exp\{\frac{\overline{\delta D_{n}}}{\overline{\delta}}(T-t)\}$. 
\end{remark}
\section{The mean field game}\label{sect3}

In this section, we investigate the limit as $n\rightarrow\infty$ of the $n$-agent game discussed in Section \ref{sect2}. 

First we will use a heuristic argument as \cite{2020Many, D2017Mean} do to build intuition.
For the $n$-agent games, we define, for each agent $i=1,\cdots,n$, the type vector
\[\xi^{i}:=(\delta_{i},\theta_{i},\mu_{i},\nu_{i},\sigma_{i}).\]
These type vectors induce an empirical measure, called the type distribution, which is the probability measure on the type space:
\[\mathcal{Z}:=(0,\infty)\times[0,1)\times(0,\infty)\times[0,\infty)\times[0,\infty),\]
given by \[m_{n}(A)=\frac{1}{n}\sum_{i=1}^{n}1_{A}(\xi^{i}),\,\text{for Borel sets}\ A\subset\mathcal{Z}.\]

 	We assume that $m_{n}$ converges to some limiting probability measure $m$. To pass to the limit, let us denote a $\mathcal{Z}$-valued random variable $\xi=(\delta,\theta,\mu,\nu,\sigma)$ with distribution $m$. We should expect the strategy $(\Pi^{*,i},C^{*,i})$ to converge to
\begin{equation}
	\lim_{n\rightarrow\infty}\Pi^{*,i}(t)=\left[\delta_{i}\frac{\mu_{i}}{\sigma_{i}^{2}+\nu_{i}^{2}}+\theta_{i}\frac{\sigma_{i}}{\sigma_{i}^{2}+\nu_{i}^{2}}\frac{\phi}{1-\psi}\right]\left(T+1-t\right),\label{limitpi}
\end{equation}
where \[{\psi}:=\mathbb{E}\left[\theta\frac{\sigma^{2}}{\sigma^{2}+\nu^{2}}\right]  \quad\text{and}\quad\phi:=\mathbb{E}\left[\delta\frac{\mu\sigma}{\sigma^{2}+\nu^{2}}\right], \]
and
\begin{equation}
	\lim_{n\rightarrow\infty}C^{*,i}(t,x)=\frac{x^{i}}{T+1-t}-\delta_{i}H^{i}(t)-\theta_{i}\frac{\mathbb{E}\left[\delta H^{\xi}(t) \right]}{1-\mathbb{E}\left[\theta\right]}-\left[\delta_{i}+\theta_{i}\frac{\mathbb{E}[\delta ]}{1-\mathbb{E}[\theta]}\right]\ln\left[\lambda(T-t)\right],\label{limitc}
\end{equation}
where  $H^{i}(\cdot)$ is 
 \begin{equation}
	H^{i}(t):=\lim_{n\rightarrow\infty}\hat{h}^{i}(t)=\frac{D^{i}}{2}\left[\frac{1}{T+1-t}-(T+1-t)\right]-\frac{1}{T+1-t}\int_{t}^{T}\ln\left[\lambda(T-s)\right]ds,\label{H}
\end{equation}
with
\begin{equation}
	\begin{split}
	D^{i}:&=\frac{1}{2}\frac{\left(\mu_{i}+\sigma_{i}A^{i}\right)^{2}}{\nu_{i}^{2}+\sigma_{i}^{2}}-\frac{1}{2}\left(A^{i}\right)^{2}-B^{i},\\
		A^{i}:&=\lim_{n\rightarrow\infty} A^{i}_{n}=\frac{\theta_{i}}{\delta_{i}}\mathbb{E}\left[\delta\frac{\sigma\mu}{\sigma^{2}+\nu^{2}}+\theta\frac{\sigma^{2}}{\sigma^{2}+\nu^{2}}\frac{\phi}{1-\psi}\right],\\
		B^{i}:&=\lim_{n\rightarrow\infty}B^{i}_{n}=\frac{\theta_{i}}{\delta_{i}}\mathbb{E}\left[\delta\frac{\mu^{2}}{\sigma^{2}+\nu^{2}}+\theta\frac{\sigma\mu}{\sigma^{2}+\nu^{2}}\frac{\phi}{1-\psi}\right],
	\end{split}
\end{equation}
and $H^{\xi}$ is denoted as its randomization.  In this section, we always assume $\mathbb{E}\left[f(\xi)\right]<\infty$, where $f:\mathcal{Z}\rightarrow\mathbb{R}$ is an arbitrary Borel measurable function. We explain that this assumption is needed to construct the best response map. If one only wants to verify that the obtained strategy  is a DF equilibrium strategy, it is enough to assume all the expectations in the aforementioned discussion are finite.

We next explain how this strategy arises as the equilibrium of the MFG. Now, we assume that the filtered probability space $(\Omega,\mathcal{F},\mathbb{F}^{MF},\mathbb{P})$ supports  independent Brownian motions $W$ and $B$, as well as random type vector
$\xi=(\delta,\theta,\mu,\nu,\sigma)$
independent of $W$ and $B$, and with values in the space $\mathcal{Z}$, where $\mathbb{F}^{MF}:=(\mathcal{F}_{t}^{MF})_{t\in[0,T]}$ is the minimal filtration satisfying the usual assumptions such that $\xi$ is $\mathcal{F}_{0}^{MF}$-measurable and both $W$ and $B$ are $\mathbb{F}^{MF}$- Brownian motions. Let also 
$\mathbb{F}^{B}:=(\mathcal{F}^{B}_{t})_{t\in[0,T]}$ denote the natural filtration generated by the Brownian motion $B$.

Then the representative agent's wealth process is determined by 
\begin{equation}
	dX^{\xi}_{t}=\pi_{t}\left(\mu dt+\nu dW_{t}+\sigma dB_{t}\right)-c_{t}dt,\quad
	X^{\xi}_{t_{0}}=x_{0}\in\mathbb{R}.\label{mse}
\end{equation}

As before, we define admissible control and DF strategy in the framework of MFG.
\begin{Definition}[Admissible control]

A control $(\pi,c)$ is said to be admissible over $[t,T]$, if $(\pi,c)$ is an $\mathbb{F}^{MF}$-progressively measurable process and for any given deterministic sample $\xi_{0}=(\delta_{0},\theta_{0},\mu_{0},\nu_{0},\sigma_{0})\in\mathcal{Z}$,
\begin{equation*}
	\mathbb{E}\left[\int_{t}^{T}\big|\pi_{s}\big|^{2}+\big|c_{s}\big|^{2}ds\bigg|\xi=\xi_{0}\right]<\infty.
\end{equation*}
For brevity, we denote $\mathcal{A}^{MF}_{t}$ as the set of all admissible controls over $[t,T]$.
\end{Definition}
\begin{Definition}[DF strategy]\label{DF829}
	
	A pair $\left(\Pi^{\xi},C^{\xi}\right)$ is said to be a DF strategy, if  $\Pi^{\xi}(t)$ and $C^{\xi}(t,x,\overline{x})$ are of the following forms:
	\begin{equation*}
		\begin{split}
			&\Pi^{\xi}(t)=\sum_{k=1}^{N}\Pi^{k}_{1}(\xi)\Pi^{k}_{2}(t),\quad (t,\xi)\in[0,T]\times\mathcal{Z}, \\ &C^{\xi}\left(t,x,\overline{x}\right)=p_{1}(t)x+p_{2}(t,\xi)\overline{x}+q(t,\xi),\quad (t,x,\overline{x},\xi)\in[0,T]\times\mathbb{R}^{2}\times\mathcal{Z},
		\end{split}
	\end{equation*}
	where $N$ is a positive integer, $\Pi^{k}_{1}:\mathcal{Z}\rightarrow\mathbb{R}$,  $p_{2}$ and $q:[0,T]\times\mathcal{Z}\rightarrow\mathbb{R}$ are Borel measurable functions, and $\Pi^{k}_{2}$, $p_{1}:[0,T]\rightarrow\mathbb{R}$ are continuous functions. Moreover, for any  $\xi_{0}=(\delta_{0},\theta_{0},\mu_{0},\nu_{0},\sigma_{0})\in\mathcal{Z}$,   $p_{2}(\cdot,\xi_{0})$ and $q(\cdot,\xi_{0}):[0,T]\rightarrow\mathbb{R}$ are continuous functions; $\mathbb{E}\left[p_{2}(\cdot,\xi)\right]$ and $\mathbb{E}\left[q(\cdot,\xi)\right]:[0,T]\rightarrow\mathbb{R}$ are also continuous.  We denote the set of DF strategies by $\mathcal{S}_{MF}$. Similarly, a DF strategy $\left(\Pi^{\xi},C^{\xi}\right)\in\mathcal{S}_{MF}$ is said to be simple if  $C^{\xi}$ does not depend on $\overline{x}$.
\end{Definition}
\begin{remark}
	We explain that the special structure of the DF strategy is for the convenience of deriving the dynamics of the average wealth process $\overline{X}$.  We note that one can not simply take conditional expectation over the differential form of a SDE.
\end{remark}

Now we formulate the representative agent's optimization problem. Note that this is a mean field game with common noise $B$, so conditional  expectations given $B$ will be involved. As argued in \cite{2014Mean, 2020Many}, conditionally on the Brownian motion B, we can get some kind of  law of large numbers and asymptotic independence between the agents as $n\rightarrow\infty$, which suggests that the average wealth $\overline{X}_{t}$ and consumption $\overline{c}_{t}$ should be  $\mathbb{F}^{B}$-adapted processes. Then, the expected payoff of the representative agent is
\begin{equation}
\mathbb{E}\left[\int_{t_{0}}^{T}\lambda(t-t_{0})U(c_{t},\overline{c}_{t})dt+\lambda(T-t_{0})U(X^{\xi}_{T},\overline{X}_{T})\right],
\end{equation}
where $\lambda(\cdot)$ is a discount function defined in  \cref{disc} and $U(x,m):=-\exp\{-\frac{1}{\delta}(x-\theta m)\}$. The objective of the
 representative agent is to find the open-loop consistent control.
\begin{Definition} [Integrability condition]\label{Assumption3.2}
	
Let $(t_{0},x_{0})\in[0,T)\times\mathbb{R}$ be given.	We say that an admissible control $(\pi,c)\in\mathcal{A}^{MF}_{t_{0}}$ and average processes $(\overline{X},\overline{c})$ satisfy the integrability condition over $[t_{0},T]$, if for any deterministic sample $\xi_{0}=(\delta_{0},\theta_{0},\mu_{0},\nu_{0},\sigma_{0})\in\mathcal{Z}$, we have
	\begin{equation*}
		\mathbb{E}\left[\int_{t_{0}}^{T}\big|U(c_{t},\overline{c}_{t})\big|^{2}dt+\big|U(X^{\xi}_{T},\overline{X}_{T})\big|^{2}\bigg|\xi=\xi_{0}\right]<\infty,
	\end{equation*}
where $X^{\xi}$ is defined by \cref{mse} with the initial condition $(t_{0},x_{0})$ and the admissible control $(\pi,c)$.
\end{Definition}

As the equilibrium $(\pi^{*,\xi},c^{*,\xi})$  should lead to $\mathbb{E}\left[X^{ *,\xi}_{t}\big|\mathcal{F}^{B}_{t}\right]=\overline{X}_{t}$ and $\mathbb{E}\left[c^{*,\xi}_{t}\big|\mathcal{F}^{B}_{t}\right]=\overline{c}_{t}$, we formalize this discussion in the following definition.

\begin{Definition}[Mean Field Equilibrium]\label{mfe}
	
	Let $(\pi^{*,\xi},c^{*,\xi})$ be an admissible control over $[t_{0},T]$, and consider the $\mathbb{F}^{B}$-adapted processes $\overline{X}_{t}=\mathbb{E}\big[X^{*,\xi}_{t}\big|\mathcal{F}^{B}_{t}\big]$ and $\overline{c}_{t}=\mathbb{E}\big[c^{*,\xi}_{t}\big|\mathcal{F}^{B}_{t}\big]$, where $X^{*,\xi}$ is the wealth process corresponding to the control $(\pi^{*,\xi},c^{*,\xi})$ with initial condition $(t_{0},x_{0})$. We say that $(\pi^{*,\xi},c^{*,\xi})$ is a mean field equilibrium with respect to the initial condition $(t_{0},x_{0})$ if $(\pi^{*,\xi_{0}},c^{*,\xi_{0}})$ is an open-loop consistent control corresponding to this choice of $\overline{X}$ and $\overline{c}$ for any deterministic sample $\xi_{0}=(\delta_{0},\theta_{0},\mu_{0},\nu_{0},\sigma_{0})\in\mathcal{Z}$ with respect to the initial condition $(t_{0},x_{0})$.
\end{Definition}
\begin{Definition}[DF equilibrium strategy]
	
		A DF strategy $\left(\Pi^{*,\xi},C^{*,\xi}\right)\in\mathcal{S}_{MF}$ is said to be a DF equilibrium strategy if for every $(t_{0},x_{0})\in[0,T)\times\mathbb{R}$, the corresponding outcome $\left(\pi^{*,\xi}, c^{*,\xi}\right)\in\mathcal{A}^{MF}_{t_{0}}$ is a mean field equilibrium with respect to the initial condition $(t_{0},x_{0})$. Moreover,  a DF equilibrium strategy $\left(\Pi^{*,\xi},C^{*,\xi}\right)$ is said to be simple if the DF strategy $\left(\Pi^{*,\xi},C^{*,\xi}\right)$ is simple.
\end{Definition}
\begin{theorem}\label{3.2}
	Assume that, a.s., $\delta>0$, $\theta\in[0,1)$, $\mu>0$, $\sigma\geq0$, $\nu\geq0$, and $\sigma+\nu>0$. Define the constants
	    \begin{equation*}
	   		\psi:=\mathbb{E}\left[\theta\frac{\sigma^{2}}{\sigma^{2}+\nu^{2}}\right]\quad\text{and}\quad
	   		\phi:=\mathbb{E}\left[\delta\frac{\mu\sigma}{\sigma^{2}+\nu^{2}}\right],
	   \end{equation*}
	 where we assume that both expectations are finite.
	 Then there exists a unique simple DF equilibrium strategy $(\Pi^{*,\xi},C^{*,\xi})$ taking the following form:
	 \begin{align}
	 	&\Pi^{*,\xi}(t)=\left[\delta\frac{\mu}{\sigma^{2}+\nu^{2}}+\theta\frac{\sigma}{\sigma^{2}+\nu^{2}}\frac{\phi}{1-\psi}\right](T+1-t),\label{3.7}\\
 	    &C^{*,\xi}(t,x)=\frac{x}{T+1-t}-\delta H^{\xi}(t)-\theta\frac{\mathbb{E}\left[\delta H^{\xi}(t)\right]}{1-\mathbb{E}\left[\theta\right]}-\left[\delta+\theta\frac{\mathbb{E}[\delta ]}{1-\mathbb{E}[\theta]}\right]\ln\left[\lambda(T-t)\right],\label{3.8}
	 \end{align}
 where $H^{\xi}(t)$ is the randomization of $H^{i}(t)$, given by 
  \begin{equation}
 	H^{\xi}(t):=\frac{D}{2}\left[\frac{1}{T+1-t}-(T+1-t)\right]-\frac{1}{T+1-t}\int_{t}^{T}\ln\left[\lambda(T-s)\right]ds,
 \end{equation}
 where
 \begin{equation}
 	\begin{split}
 D:&=\frac{1}{2}\frac{\left(\mu+\sigma A\right)^{2}}{\nu^{2}+\sigma^{2}}-\frac{1}{2}\left(A\right)^{2}-B,\\
 A:&=\frac{\theta}{\delta}\mathbb{E}\left[\delta\frac{\sigma\mu}{\sigma^{2}+\nu^{2}}+\theta\frac{\sigma^{2}}{\sigma^{2}+\nu^{2}}\frac{\phi}{1-\psi}\right],\\
 B:&=\frac{\theta}{\delta}\mathbb{E}\left[\delta\frac{\mu^{2}}{\sigma^{2}+\nu^{2}}+\theta\frac{\sigma\mu}{\sigma^{2}+\nu^{2}}\frac{\phi}{1-\psi}\right],\\
\end{split}
\end{equation}
and we assume that all the expectations are finite. 
\end{theorem}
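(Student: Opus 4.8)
\textit{Overall strategy.} The plan is to run the proof of \cref{The2.1} again, with two systematic substitutions: the $n-1$ competitors are replaced by the conditionally-$\mathbb{F}^{B}$-adapted mean-field pair $(\overline{X},\overline{c})$, and every empirical average $\frac1n\sum_{k}$ is replaced by the expectation $\mathbb{E}[\,\cdot\,]$ over the type law $m$ (equivalently, a conditional expectation given $\mathcal{F}^{B}_{t}$, using that $\xi$ is independent of $(W,B)$). \textit{Step 1 (best response of a representative type).} Fix a deterministic type $\xi_{0}=(\delta_{0},\theta_{0},\mu_{0},\nu_{0},\sigma_{0})\in\mathcal{Z}$ and a candidate pair of $\mathbb{F}^{B}$-adapted average processes $(\overline{X},\overline{c})$ produced by a DF strategy in $\mathcal{S}_{MF}$. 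Invoking the MFG counterpart of \cref{verif}, the search for the open-loop consistent control of the $\xi_{0}$-agent reduces to finding a classical solution of a PDE of the form \cref{sys1}, with the coefficient $1-\theta_{i}/n$ replaced by $1$ and $\overline{\sigma\Pi}^{(i)},\overline{\mu\Pi}^{(i)},\overline{C}^{(i)}$ replaced by their mean-field analogues. Plugging in the ansatz $V^{\xi_{0}}(t,x,\overline{x})=\frac{1}{\delta_{0}}\exp\{f(t)x+g(t)\overline{x}+h(t)\}$ with $f(T)=-1/\delta_{0}$, $g(T)=\theta_{0}/\delta_{0}$, $h(T)=0$, the same calculation gives $f(t)=-\frac{1}{\delta_{0}(T+1-t)}$, $g(t)=\frac{\theta_{0}}{\delta_{0}(T+1-t)}$, so $\theta_{0}f(t)+g(t)=0$ and the $\overline{c}$-term drops out; $h$ then solves an explicit linear ODE whose solution is $H^{\xi_{0}}(\cdot)$. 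This yields a best-response map $F:\mathcal{S}_{MF}\to\mathcal{S}_{MF}$ with $x$-independent portfolio and affine consumption.

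\textit{Step 2 (closing the loop).} This is where the special affine structure of $\mathcal{S}_{MF}$ (\cref{DF829}) is used: given a DF strategy I would solve the \emph{linear} closed-loop SDE for $X^{\xi}$ in closed form, take $\mathbb{E}[\,\cdot\mid\mathcal{F}^{B}_{t}]$ exploiting the independence of $\xi$ from $(W,B)$ and the $\mathbb{F}^{B}$-measurability of the time-coefficients, and thereby obtain $\overline{X}_{t}=\mathbb{E}[X^{\xi}_{t}\mid\mathcal{F}^{B}_{t}]$ and $\overline{c}_{t}=\mathbb{E}[c^{\xi}_{t}\mid\mathcal{F}^{B}_{t}]$ explicitly, together with their dynamics --- precisely the step the remark after \cref{DF829} warns cannot be done by naively conditioning the differential form of the SDE. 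Imposing the fixed-point identity of \cref{mfe} then splits as in \cref{The2.1} into: (a) a scalar equation for $\overline{\sigma\Pi}(t):=\mathbb{E}[\sigma\Pi^{\xi}(t)]$, namely $\overline{\sigma\Pi}(t)=\phi(T+1-t)+\psi\,\overline{\sigma\Pi}(t)$, solvable since $\psi=\mathbb{E}[\theta\sigma^{2}/(\sigma^{2}+\nu^{2})]\le\mathbb{E}[\theta]<1$ (strict because $\theta<1$ a.s.); and (b) a scalar equation for $\overline{C}$, namely $(1-\mathbb{E}[\theta])\overline{C}(t)=\frac{(1-\mathbb{E}[\theta])\,\overline{x}}{T+1-t}-\mathbb{E}[\delta H^{\xi}(t)]-\mathbb{E}[\delta]\ln[\lambda(T-t)]$, again solvable because $\mathbb{E}[\theta]<1$. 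Substituting back recovers exactly \cref{3.7}--\cref{3.8}, and the degenerate case $\mathbb{E}[\theta]=1$ is excluded word-for-word as in the $n$-agent proof.

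\textit{Step 3 (uniqueness).} As in \cref{The2.1}, I would show that any simple DF equilibrium strategy must be a fixed point of $F$. Starting from a simple DF equilibrium, its outcome is an open-loop consistent control against the average pair $(\overline{X},\overline{c})$ it itself generates; being the outcome of a DF strategy it lies in the integrability class of \cref{Assumption3.2} (the MFG analogue of \cref{dfadmissiable}); hence, by the uniqueness of the open-loop consistent control for the single-type problem --- which is \cite[Theorem 5.3]{2021Time} applied conditionally on $\xi=\xi_{0}$, in the spirit of \cref{uniquethem} --- it must coincide with $F$ of itself. Since $F$ has a unique fixed point by Steps 1--2, uniqueness of the simple DF equilibrium strategy follows.

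\textit{Main obstacle.} I expect the delicate point to be Step 2: rigorously extracting the law and the dynamics of the conditional averages $\overline{X},\overline{c}$ from the explicit solution of the closed-loop system, and then verifying that the resulting pair is admissible in $\mathcal{A}^{MF}_{t_{0}}$ and satisfies the integrability condition of \cref{Assumption3.2}, so that both the verification theorem (for constructing $F$) and the \cref{uniquethem}-type uniqueness genuinely apply. This is exactly where the global moment assumption $\mathbb{E}[f(\xi)]<\infty$ is needed. The ODE computations of Step 1, the fixed-point algebra of Step 2, and the conditional single-agent verification are otherwise routine transcriptions of Section \ref{sect2}.
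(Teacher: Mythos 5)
Your proposal is correct and follows essentially the same route as the paper: reduce to the representative-type PDE via the mean-field verification theorem, use the exponential ansatz so that $\theta_{0}f^{\xi_{0}}+g^{\xi_{0}}=0$ kills the $\overline{C}$-term, derive the dynamics of $\overline{X}$ from the linear closed-loop SDE, solve the two scalar fixed-point equations for $\mathbb{E}[\sigma\Pi^{\xi}(t)]$ and $\mathbb{E}[\hat{C}^{\xi}\mid\mathcal{F}^{B}_{t}]$, and obtain uniqueness by showing any simple DF equilibrium is a fixed point of $F$ via the integrability condition and the single-type open-loop uniqueness result. Your observation that $\psi\le\mathbb{E}[\theta]<1$ automatically rules out the degenerate case is a slightly cleaner remark than the paper's, but the argument is otherwise the same.
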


Similarly, we highlight the single stock case, noting that the form of the solution is essentially the same as in the $n$-agent games, presented in \cref{2.2c}.
\begin{Corollary}[Single stock]\label{3.2c}
	Suppose that $(\mu,\nu,\sigma)$ are deterministic, with $\nu=0$ and $\mu,\sigma>0$.
	Then the strategy $(\Pi^{*,\xi},C^{*,\xi})$ has the following form:
	\begin{align}
		&\Pi^{*,\xi}(t)=\frac{\mu}{\sigma^{2}}\left(\delta+\theta\frac{\mathbb{E}[\delta]}{1-\mathbb{E}[\theta]}\right)\left(T+1-t\right),\\
		&C^{*,\xi}(t,x)=\frac{x}{T+1-t}+\left(\delta+\theta\frac{\mathbb{E}[\delta]}{1-\mathbb{E}[\theta]}\right)\mathcal{H}(t)-\left(\delta+\theta\frac{\mathbb{E}[\delta]}{1-\mathbb{E}[\theta]}\right)\ln\left[\lambda(T-t)\right],	
	\end{align} 
where the function $\mathcal{H}(\cdot)$ is the same as \cref{fH}.
\end{Corollary}		
\begin{proof}[Proof of \cref{3.2}]
	
First, observe that it suffices to restrict our attention to stochastic processes $\overline{X}$ and $\overline{c}$ of the form $\overline{X}_{t}=\mathbb{E}\big[X^{\xi}_{t}\big|\mathcal{F}^{B}_{t}\big]$ and $\overline{c}_{t}=\mathbb{E}\big[c_{t}\big|\mathcal{F}^{B}_{t}\big]$, where $X^{\xi}$ is defined by \cref{mse} with the initial condition $(t_{0},x_{0})$ and the admissible control $(\pi,c)\in\mathcal{A}^{MF}_{t_{0}}$. Moreover, we assume that $\left(\pi,c\right)$ is the outcome of a DF strategy $\left(\Pi^{\xi},C^{\xi}\right)\in\mathcal{S}_{MF}$:
\begin{equation*}
	\pi_{t}=\Pi^{\xi}(t)=\sum_{k=1}^{N}\Pi^{k}_{1}(\xi)\Pi^{k}_{2}(t),\,\ c_{t}=C^{\xi}(t,X^{\xi}_{t},\overline{X}_{t})=p_{1}(t)X^{\xi}_{t}+p_{2}(t,\xi)\overline{X}_{t}+q(t,\xi),\,t\in[t_{0},T],
\end{equation*}
where $\Pi^{k}_{1}$, $\Pi^{k}_{2}$, $p_{1}$, $p_{2}$ and $q$ are defined in \cref{DF829}.

As $\xi=(\delta,\theta,\mu,\nu,\sigma)$, $W$ and $B$ are independent, 
 we have $$\overline{c}_{t}=\overline{C}(t,\overline{X}_{t}):=\mathbb{E}\big[C^{\xi}(t,X^{\xi}_{t},\overline{X}_{t})\big|\mathcal{F}^{B}_{t}\big]=\left(p_{1}(t)+\mathbb{E}\left[p_{2}(t,\xi)\right]\right)\overline{X}_{t}+\mathbb{E}\left[q(t,\xi)\right]$$ and the dynamic of $\overline{X}$ is \footnote{When $(\pi,c)$ is the outcome of a DF strategy, the dynamic of $X^{\xi}$ is linear. Hence, it is standard to derive the dynamics of $\overline{X}$.} 
\begin{equation}\label{MFEmeanwealth}
	\begin{cases}
	d\overline{X}_{t}=\left[-\left(p_{1}(t)+\mathbb{E}\left[p_{2}(t,\xi)\right]\right)\overline{X}_{t}+\mathbb{E}\left[\Pi^{\xi}(t)\mu\right]-\mathbb{E}\left[q(t,\xi)\right]\right]dt\\\quad\quad+E\left[\Pi^{\xi}(t)\sigma\right]dB_{t}, \quad t\in[t_{0},T], \\
	\overline{X}_{t_{0}}=x_{0}.
	\end{cases}
\end{equation}
We now find the open-loop consistent control for the representative agent, then we  resolve the resulting fixed point problem to obtain the desired equilibrium.

\textit{Open-loop consistent control for the representative agent}.  Let $\xi_{0}=(\delta_{0},\theta_{0},\mu_{0},\nu_{0},\sigma_{0})$ represent a deterministic sample from its random type distribution for now.
According to  \cref{verifmean} in \cref{Appendix}, we aim to find a classical solution of the following  PDE:
\begin{equation}
	\begin{cases}
		V^{\xi_{0}}_{t}(t,x,\overline{x})+\hat{\Pi}^{\xi_{0}}(t,x,\overline{x})\left(V^{\xi_{0}}_{x}(t,x,\overline{x})\mu_{0}+V^{\xi_{0}}_{x\overline{x}}(t,x,\overline{x})\sigma_{0}\mathbb{E}\left[\Pi^{\xi}(t)\sigma\right]\right)\\+\frac{1}{2}\hat{\Pi}^{\xi_{0}}(t,x,\overline{x})^{2}V^{\xi_{0}}_{xx}(t,x,\overline{x})\left(\nu_{0}^{2}+\sigma_{0}^{2}\right)-\hat{C}^{\xi_{0}}(t,x,\overline{x})V^{\xi_{0}}_{x}(t,x,\overline{x})\\+V^{\xi_{0}}_{\overline{x}}(t,x,\overline{x})\left(\mathbb{E}\left[\Pi^{\xi}(t)\mu\right]-\overline{C}(t,\overline{x})\right)\\+\frac{1}{2}V^{\xi_{0}}_{\overline{x}\overline{x}}(t,x,\overline{x})\left(\mathbb{E}\left[\Pi^{\xi}(t)\sigma\right]\right)^{2}=0,\quad (t,x,\overline{x})\in[0,T]\times\mathbb{R}^{2},\\
		V^{\xi_{0}}(T,x,\overline{x})=\frac{1}{\delta_{0}}\exp\left(-\frac{1}{\delta_{0}}x+\frac{\theta_{0}}{\delta_{0}}\overline{x}\right),\quad (x,\overline{x})\in\mathbb{R}^{2},\\
		\hat{\Pi}^{\xi_{0}}(t,x,\overline{x})=-\frac{\mu_{0} V^{\xi_{0}}(t,x,\overline{x})+\sigma_{0}\mathbb{E}\left[\Pi^{\xi}(t)\sigma\right]V^{\xi_{0}}_{\overline{x}}(t,x,\overline{x})}{\left(\nu_{0}^{2}+\sigma_{0}^{2}\right)V^{\xi_{0}}_{x}(t,x,\overline{x})},\quad (t,x,\overline{x})\in[0,T]\times\mathbb{R}^{2},\\
		\hat{C}^{\xi_{0}}(t,x,\overline{x})=-\delta_{0}\ln\left[\delta_{0}\lambda(T-t)V^{\xi_{0}}(t,x,\overline{x})\right]+\theta_{0}\overline{C}(t,\overline{x}),\quad (t,x,\overline{x})\in[0,T]\times\mathbb{R}^{2}.
	\end{cases}\label{sys2mean}
\end{equation}

We consider the following ansatz based on the terminal condition:
\begin{equation}
	V^{\xi_{0}}(t,x,\overline{x})=\frac{1}{\delta_{0}}\exp\left\{f^{\xi_{0}}(t)x+g^{\xi_{0}}(t)\overline{x}+h^{\xi_{0}}(t)\right\},\,       (t,x,\overline{x})\in[0,T]\times\mathbb{R}^{2},\label{Guassmean}
\end{equation} 
where $f^{\xi_{0}}(\cdot),g^{\xi_{0}}(\cdot)$ and $h^{\xi_{0}}(\cdot)\in\ C^{1}\left([0,T];\mathbb{R}\right)$ such that $f^{\xi_{0}}(T)=-\frac{1}{\delta_{0}}$, $g^{\xi_{0}}(T)=\frac{\theta_{0}}{\delta_{0}}$ and $h^{\xi_{0}}(T)=0$.

Putting \cref{Guassmean} into \cref{sys2mean}, we have
\begin{equation*}
	\begin{split}
	&\left[\dot{f}^{\xi_{0}(t)}+\delta_{0}f^{\xi_{0}}(t)^{2}\right]x+\left[\dot{g}^{\xi_{0}}(t)+\delta_{0} f^{\xi_{0}}(t)g^{\xi_{0}}(t)\right]\overline{x}\\
	&+\dot{h}^{\xi_{0}}(t)+\delta_{0} f^{\xi_{0}}(t)h^{\xi_{0}}(t)+\delta_{0} f^{\xi_{0}}(t)\ln\left[\lambda(T-t)\right]\\&-\frac{1}{2}\frac{\left(\mu_{0}+\sigma_{0} g^{\xi_{0}}(t)\mathbb{E}\left[\sigma\Pi^{\xi}(t)\right]\right)^{2}}{\nu_{0}^{2}+\sigma_{0}^{2}}+g^{\xi_{0}}(t)\mathbb{E}\left[\mu\Pi^{\xi}(t)\right]+\frac{1}{2}\left(g^{\xi_{0}}(t)\right)^{2}\left(\mathbb{E}\left[\sigma\Pi^{\xi}(t)\right]\right)^{2}\\&-\left(\theta_{0} f^{\xi_{0}}(t)+g^{\xi_{0}}(t)\right)\overline{C}(t,\overline{x})=0.
	\end{split}
\end{equation*}
If $f^{\xi_{0}}(\cdot)$ and $g^{\xi_{0}}(\cdot)$ solve the following ODEs:
\begin{equation}
	\begin{cases}
		\dot{f}^{\xi_{0}}(t)+\delta_{0}f^{\xi_{0}}(t)^{2}=0,\quad t\in[0,T],\\
		\dot{g}^{\xi_{0}}(t)+\delta_{0}f^{\xi_{0}}(t)g^{\xi_{0}}(t)=0,\quad t\in[0,T],\\
		f^{\xi_{0}}(T)=-\frac{1}{\delta_{0}},\, g^{\xi_{0}}(T)=\frac{\theta_{0}}{\delta_{0}},
	\end{cases}
\end{equation}
which is explicitly solved by 
\begin{equation}\label{3.15}
	\begin{cases}
		f^{\xi_{0}}(t)=-\frac{1}{\delta_{0}}\frac{1}{T+1-t},\\
		g^{\xi_{0}}(t)=\frac{\theta_{0}}{\delta_{0}}\frac{1}{T+1-t},\\
	\end{cases}
\end{equation}
then it holds that $\theta_{0} f^{\xi_{0}}(t)+g^{\xi_{0}}(t)=0$ for any $t\in[0,T]$. Hence, we obtain the ODE for $h^{\xi_{0}}(t)$:
\begin{equation*}
\begin{split}
	&\dot{h}^{\xi_{0}}(t)+\delta_{0} f^{\xi_{0}}(t)h^{\xi_{0}}(t)+\delta_{0} f^{\xi_{0}}(t)\ln\left[\lambda(T-t)\right]\\&-\frac{1}{2}\frac{\left(\mu_{0}+\sigma_{0} g^{\xi_{0}}(t)\mathbb{E}\left[\sigma\Pi^{\xi}(t)\right]\right)^{2}}{\nu_{0}^{2}+\sigma_{0}^{2}}+g^{\xi_{0}}(t)\mathbb{E}\left[\mu\Pi^{\xi}(t)\right]+\frac{1}{2}\left(g^{\xi_{0}}(t)\right)^{2}\left(\mathbb{E}\left[\sigma\Pi^{\xi}(t)\right]\right)^{2}=0
\end{split}
\end{equation*}
with the terminal condition $h^{\xi_{0}}(T)=0$, whose solution is given by 
\begin{equation}
		h^{\xi_{0}}(t)=\frac{1}{T+1-t}\int_{t}^{T}(T+1-s)G^{\xi_{0}}(\Pi^{\xi}(s),s)ds,
\end{equation}
where
\begin{equation}
	\begin{split}
 G^{\xi_{0}}(\pi,t):=-\frac{1}{T+1-t}\ln\left[\lambda(T-t)\right]-\frac{1}{2}\frac{\left(\mu_{0}+\frac{\theta_{0}\sigma_{0}}{(T+1-t)\delta_{0}}\mathbb{E}[\sigma\pi]\right)^{2}}{\nu_{0}^{2}+\sigma_{0}^{2}}\\+\frac{\theta_{0}}{(T+1-t)\delta_{0}}\mathbb{E}[\mu\pi]+\frac{\theta_{0}^{2}}{2(T+1-t)^{2}\delta^{2}_{0}}(\mathbb{E}[\sigma\pi])^{2}.
 	\end{split}
\end{equation}

Then, by the representation of $(\hat{\Pi}^{\xi_{0}},\hat{C}^{\xi_{0}})$ in \cref{sys2mean}, we have
\begin{equation}\label{meanweakop}
	\begin{split}
	\hat{\Pi}^{\xi_{0}}(t,x,\overline{x})&=\frac{\mu_{0}+\sigma_{0}\frac{\theta_{0}}{\delta_{0}}\frac{1}{T+1-t}\mathbb{E}[\sigma\Pi^{\xi}(t)]}{\left(\nu_{0}^{2}+\sigma_{0}^{2}\right)\left(\frac{1}{\delta_{0}}\frac{1}{T+1-t}\right)}=\frac{\delta_{0}\mu_{0}}{\nu_{0}^{2}+\sigma_{0}^{2}}\left(T+1-t\right)+\frac{\sigma_{0}\theta_{0}}{\nu_{0}^{2}+\sigma_{0}^{2}}\mathbb{E}[\sigma\Pi^{\xi}(t)],\\
	\hat{C}^{\xi_{0}}(t,x,\overline{x})&=\frac{1}{T+1-t}x-\theta_{0}\frac{1}{T+1-t}\overline{x}-\delta_{0} h^{\xi_{0}}(t)-\delta_{0}\ln\left[\lambda(T-t)\right]+\theta_{0} \overline{C}(t,\overline{x}).
	\end{split}
\end{equation}

Note that $\hat{\Pi}^{\xi_{0}}(t,x,\overline{x})=\hat{\Pi}^{\xi_{0}}(t)$ is independent of the state argument $x$ and $\overline{x}$.  In the same manner as the $n$-agent case, we obtain that for $\xi_{0}$-type agent, the outcome $\left(\hat{\pi}^{\xi_{0}},\hat{c}^{\xi_{0}}\right)$ associated with $\left(\hat{\Pi}^{\xi_{0}},\hat{C}^{\xi_{0}}\right)$ is an open-loop consistent control with respect to the initial condition $(t_{0},x_{0})$.

Thus, the open-loop consistent control $\left(\hat{\pi}^{\xi},\hat{c}^{\xi}\right)$ for the representative agent is 
\begin{align}
	&\hat{\pi}^{\xi}_{t}=\hat{\Pi}^{\xi}(t)=\frac{\delta\mu}{\nu^{2}+\sigma^{2}}\left(T+1-t\right)+\frac{\sigma\theta}{\nu^{2}+\sigma^{2}}\mathbb{E}[\sigma\Pi^{\xi}(t)],\quad t\in[t_{0},T],\label{mfepi2}\\
	&\hat{c}^{\xi}_{t}=\hat{C}^{\xi}(t,\hat{X}^{\xi}_{t},\overline{X}_{t})=\frac{1}{T+1-t}\hat{X}^{\xi}_{t}-\theta\frac{1}{T+1-t}\overline{X}_{t}-\delta h^{\xi}(t)\label{mfec2}-\delta\ln\left[\lambda(T-t)\right]\\&+ \theta \overline{C}(t,\overline{X}_{t}),\quad t\in[t_{0},T],\nonumber
\end{align}
where $\left(\hat{X}^{\xi}, \overline{X}\right)$ is defined by
\begin{equation*}
	\begin{cases}
		d\hat{X}^{\xi}_{t}=\left(\hat{\Pi}^{\xi}(t)\mu-\hat{C}^{\xi}(t,\hat{X}^{\xi}_{t},\overline{X}_{t})\right)dt+\hat{\Pi}^{\xi}(t)\nu dW_{t}+\hat{\Pi}^{\xi}(t)\sigma dB_{t},\\
		d\overline{X}_{t}=\left[-\left(p_{1}(t)+\mathbb{E}\left[p_{2}(t,\xi)\right]\right)\overline{X}_{t}+\mathbb{E}\left[\Pi^{\xi}(t)\mu\right]-\mathbb{E}\left[q(t,\xi)\right]\right]dt+E\left[\Pi^{\xi}(t)\sigma\right]dB_{t},\\
		\hat{X}^{\xi}_{t_{0}}=x_{0},\,\overline{X}_{t_{0}}=x_{0}.
	\end{cases}
\end{equation*}

Moreover, if we assume that all the expectations associated with the random type vector $\xi$ is finite, then we get that  $\left(\hat{\Pi}^{\xi},\hat{C}^{\xi}\right)$  is also a DF strategy. Hence, we have constructed a best response map 
\begin{equation}
	F:\left(\Pi^{\xi},C^{\xi}\right)\rightarrow\left(\hat{\Pi}^{\xi},\hat{C}^{\xi}\right).
\end{equation}

\textit{Fixed point problem}. We first address the investment strategies. For the candidate investment strategy $\Pi^{\xi}$ to be a fixed point, we need $\Pi^{\xi}(t)=\hat{\Pi}^{\xi}(t)$, for $t\in[0,T]$. In light of \cref{mfepi2}, we have
\begin{equation}\label{3.18}
	\hat{\Pi}^{\xi}(t)=\delta\frac{\mu}{\nu^{2}+\sigma^{2}}(T+1-t)+\theta\frac{\sigma}{\nu^{2}+\sigma^{2}}\mathbb{E}[\sigma\hat{\Pi}^{\xi}(t)].
\end{equation}

Multiply both sides of equation \cref{3.18} by $\sigma$ and average to find that $\mathbb{E}[\sigma\hat{\Pi}^{\xi}(t)]$ must satisfy the following fixed point equation:
\begin{equation}
	\mathbb{E}[\sigma\hat{\Pi}^{\xi}(t)]=\phi(T+1-t)+\psi\mathbb{E}[\sigma\hat{\Pi}^{\xi}(t)].\label{mfespi}
\end{equation}
We then have the following cases to get the fixed point:
\begin{itemize}
\item[(i)] If $\psi<1$, then \cref{mfespi}  yields $\mathbb{E}[\sigma\hat{\Pi}^{\xi}(t)]=\left[\phi/\left(1-\psi\right)\right]\left(T+1-t\right)$, and the investment strategy is given by \cref{3.7}.

\item[(ii)] If $\psi=1$, then the equation \cref{mfespi} has no solution. Note that $\psi=1$ and $\phi=0$ cannot happen. By assumption $\delta>0$, $\mu>0$, and $\sigma+\nu>0$, one can easily get a contradiction.
\end{itemize}
Next, we address the consumption strategies. Similarly, the candidate consumption strategy  $\hat{C}^{\xi}$ need to satisfy that 
\begin{equation*}
	\begin{split}
			\hat{C}^{\xi}(t,\hat{X}^{\xi}_{t},\overline{X}_{t})&=\frac{1}{T+1-t}\hat{X}^{\xi}_{t}-\theta\frac{1}{T+1-t}\overline{X}_{t}-\delta\hat{h}^{\xi}(t)\\&+\theta\mathbb{E}\left[	\hat{C}^{\xi}(t,\hat{X}^{\xi}_{t},\overline{X}_{t})\big|\mathcal{F}^{B}_{t}\right]-\delta\ln\left[\lambda(T-t)\right],
	\end{split}
\end{equation*}
where $\hat{h}^{\xi}(t)=\frac{1}{T+1-t}\int_{t}^{T}(T+1-s)G^{\xi}(\Pi^{*,\xi}(s),s)ds$. In fact, using the result in \cref{h}, we have $\hat{h}^{\xi}(t)=H^{\xi}(t)$. To avoid confusion, we use $H^{\xi}(t)$ instead of $\hat{h}^{\xi}(t)$. Then we should solve the following fixed point problem:
\begin{equation}\label{3.20}
	\begin{split}
	\hat{C}^{\xi}(t,\hat{X}^{\xi}_{t},\overline{X}_{t})&=\frac{1}{T+1-t}\hat{X}^{\xi}_{t}-\theta\frac{1}{T+1-t}\overline{X}_{t}-\delta H^{\xi}(t)\\&+\theta\mathbb{E}\left[	\hat{C}^{\xi}(t,\hat{X}^{\xi}_{t},\overline{X}_{t})\big|\mathcal{F}^{B}_{t}\right]-\delta\ln\left[\lambda(T-t)\right].
	\end{split}
\end{equation}

Taking the conditional expectation given $\mathcal{F}^{B}_{t}$ both sides of \cref{3.20}  to find that 
\begin{equation}
	\begin{split}
		\mathbb{E}\left[\hat{C}^{\xi}(t,\hat{X}^{\xi}_{t},\overline{X}_{t})\big|\mathcal{F}^{B}_{t}\right]&=\frac{1-\mathbb{E}[\theta]}{T+1-t}\overline{X}_{t}-\mathbb{E}[\delta H^{\xi}(t)]+\mathbb{E}[\theta]	\mathbb{E}\left[\hat{C}^{\xi}(t,\hat{X}^{\xi}_{t},\overline{X}_{t})\big|\mathcal{F}^{B}_{t}\right]\\&-\mathbb{E}[\delta]\ln\left[\lambda(T-t)\right],\label{meqc2}
	\end{split}
\end{equation}
then we have the following cases to find the fixed point:
\begin{itemize}
\item[(i)] If $\mathbb{E}[\theta]<1$, then \cref{meqc2} yields $\mathbb{E}\left[\hat{C}^{\xi}(t,\hat{X}^{\xi}_{t},\overline{X}_{t})\big|\mathcal{F}^{B}_{t}\right]=\frac{\overline{X}_{t}}{T+1-t}-\frac{\mathbb{E}[\delta H^{\xi}(t)]}{1-\mathbb{E}[\theta]}-\frac{\mathbb{E}[\delta]\ln\left[\lambda(T-t)\right]}{1-\mathbb{E}[\theta]}$, and the consumption stratgey is given by \cref{3.8}.

\item[(ii)] If $\mathbb{E}[\theta]=1$ and $\mathbb{E}[\delta H(t)]+\mathbb{E}[\delta]	\ln\left[\lambda(T-t)\right]\not\equiv0$, then equation \cref{meqc2} has no solution.

\item[(iii)] If $\mathbb{E}[\theta]=1$ and $\mathbb{E}[\delta H(t)]+\mathbb{E}[\delta]	\ln\left[\lambda(T-t)\right]\equiv0$, then there exist infinitely many solutions.
\end{itemize}

 We limit $\theta$ in $[0,1)$ to remove `bad' cases. In summary, we get that there exists a unique solution to the fixed point problem, which turns out to be a simple DF equilibrium strategy $(\Pi^{*,\xi}, C^{*,\xi})$ given by \cref{3.7} and \cref{3.8}.  Similar to the discussion on uniqueness in the proof of \cref{The2.1}, we can also conclude that $(\Pi^{*,\xi}, C^{*,\xi})$ is the unique simple DF equilibrium strategy.
\end{proof}
\section{Discussion of the equilibrium}\label{discussion}
 
We now discuss the interpretation of equilibria. We limit the discussion to the mean field case, for which the DF equilibrium strategy is given by  \cref{3.2} and  \cref{3.2c}, as $n$-agent equilibria have essentially the same structure.

First, the investment strategy $\Pi^{*,\xi}$ is $\mathcal{F}^{MF}_{0}$-measurable and wealth independent, meaning that as the agent gets richer, she will decrease the proportion of investment in the risky asset. It's worth noting that the investment strategy  $\Pi^{*,\xi}$ is independent of the discount function, which is consistent with the result of \cite{EP14900367320210201} and \cite{MS-2010}. Moreover, $\Pi^{*,\xi}$ consists of two components. The first,
$\frac{\delta\mu}{\sigma^{2}+\nu^{2}}\left(T+1-t\right)$, is the classical Merton portfolio. The second component is always nonnegative, vanishing only when $\theta=0$, which means no competition. It is clear that with increasing the competition weight $\theta$, the agent will increase the allocation in the risky asset. Note that $\Pi^{*,\xi}$ is a linear function of $(T+1-t)$, and  the coefficient is equal to the solution of the MFG in \cite{D2017Mean}, where the consumption is not considered. Then it is obvious to see that as time goes on, the agent will invest less in the risky asset, and at  terminal time $T$, the amount invested in the risky asset drops down to the equilibrium portfolio amount in \cite{D2017Mean}. For more analysis of the influence of the parameters, we refer the reader to  \cite{D2017Mean}.

We further restrict our attention to the single stock case of \cref{3.2c}, where the effects of the parameter are more transparent. Note that if $\theta=0$, then we recover the open-loop equilibrium strategy without competition, with $\Pi^{*,\xi}(t)=\frac{\mu}{\sigma^{2}}\delta(T+1-t)$ and $C^{*,\xi}(t,x)=\frac{x}{T+1-t}+\delta\mathcal{H}(t)-\delta\ln\left[\lambda(T-t)\right]$; see \cite{EP14900367320210201} for comparison. For the general $\theta$, we may still rewrite $\Pi^{*,\xi}$ and $C^{*,\xi}$ in an analogous manner as
\begin{align*}
	&\Pi^{*,\xi}(t)=\frac{\mu}{\sigma^{2}}\hat{\delta}\left(T+1-t\right),\\
	&C^{*,\xi}(t,x)=\frac{x}{T+1-t}+\hat{\delta}\mathcal{H}(t)-\hat{\delta}\ln\left[\lambda(T-t)\right],
\end{align*} 
where the \textit{effective risk tolerance parameter} is
\[\hat{\delta}=\delta+\theta\frac{\mathbb{E}[\delta]}{1-\mathbb{E}[\theta]}.\]
The parameter $\hat{\delta}$ has already appeared in \cite{D2017Mean} and \cite{2021Meanfieldito} and it is obvious to see that  $\hat{\delta}>\delta$ if $\theta>0$, the difference $\hat{\delta}-\delta$ increases with $\theta$, with $\mathbb{E}[\delta]$, and with $\mathbb{E}[\theta]$.

We consider the following hyperbolic discount function,
\[\lambda(t)=\left(1+\beta t\right)^{-\frac{\rho}{\beta}}, \,\, t\in[0,T],\]
where $\rho>0$, $\beta>0$.
In particular, $\lim_{\beta\rightarrow0}\lambda(t)=e^{-\rho t}$. In this case the DF equilibrium strategy reduces to the solution of Merton problem:
\begin{equation*}
\begin{split}
	&\Pi^{*,\xi}(t)=\frac{\mu}{\sigma^{2}}\hat{\delta}\left(T+1-t\right),\\
	&C^{*,\xi}(t,x)=\frac{x}{T+1-t}+\frac{\hat{\delta}}{2}\left[\frac{1}{2}\left(\frac{\mu}{\sigma}\right)^{2}+\rho\right]\left[(T+1-t)-\frac{1}{T+1-t}\right],
\end{split}
\end{equation*}
which is consistent with the fact that time-consistent equilibrium strategy under exponential discounting is nothing but the optimal strategy; see, e.g., \cite{EP14900367320210201, Bjork}.

Finally, we numerically compute the average consumption $\mathbb{E}[C^{*,\xi}(t,X^{*,\xi}_{t})]$. The parameters are $\mu=\sigma=1$, $t_{0}=0$, $x_{0}=10$, $\rho=0.1$ and $T=2$.  As we can see in \cref{p1},  the average consumption decreases with increasing $\beta$. Note that a larger $\beta$ means that the agent is more patient in the future; see, e.g., \cite{2019Non}. We can either say that the difference between the agent and her
future self is large when $\beta$ is large. In order to make an agreement, the sophisticated agent may give up more of her utility to the future. Hence, it is reasonable to spend less on consumption. In \cref{p2}, we investigate the impact of competitiveness and risk tolerance on the average consumption. It is clear that the average consumption increases with increasing $\mathbb{E}[\hat{\delta}]$. Since $\mathbb{E}[\hat{\delta}]$ increases when either $\mathbb{E}[\delta]$ increases or $\mathbb{E}[\theta]$ increases, we conclude that in an environment with high competition or high risk tolerance, the average consumption is also high.
\begin{figure}[tbhp]
	\centering
\includegraphics[scale=0.4]{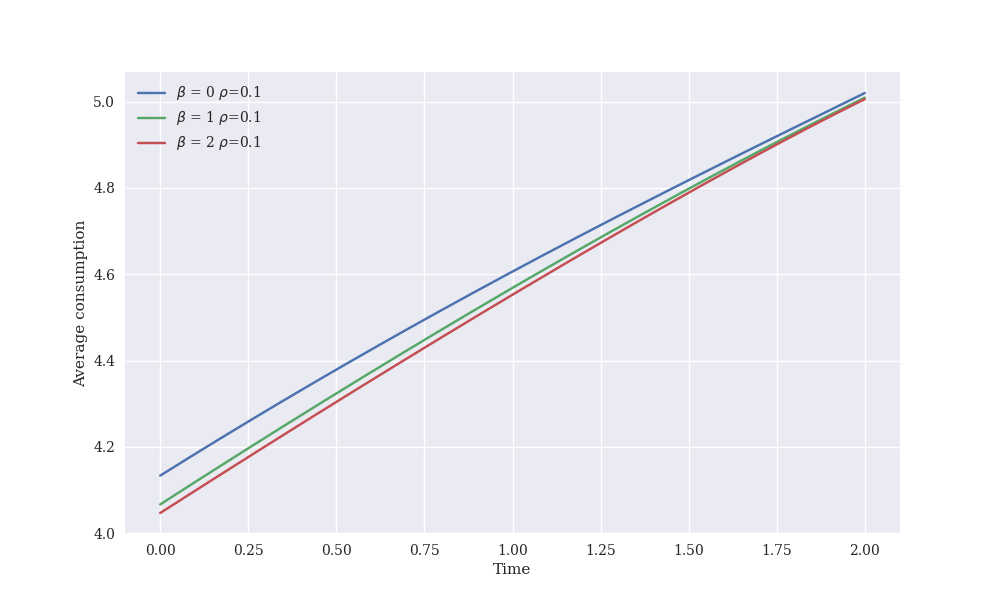}
\caption{Average consumption $\mathbb{E}[C^{*,\xi}(t,X^{*,\xi}_{t})]$ versus $t$ for various values of $\beta$.}
\label{p1}
\end{figure}
\begin{figure}[tbhp]
	\centering
	\includegraphics[scale=0.4]{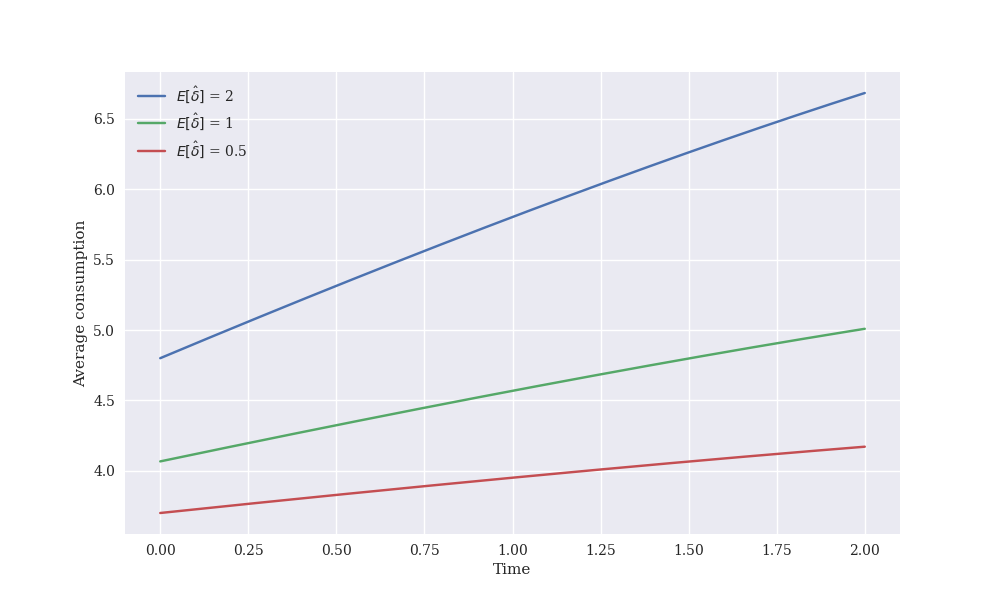}
	\caption{Average consumption $\mathbb{E}[C^{*,\xi}(t,X^{*,\xi}_{t})]$ versus $t$ for various values of $\mathbb{E}[\hat{\delta}]$.}
	\label{p2}
\end{figure}
\section{Conclusions  and future research}\label{sect5}

We have studied the MFGs and the $n$-agent games under CARA relative performance by allowing general discount functions. To deal with the time-inconsistency, each agent chooses the open-loop consistent strategy which is finally characterized by a PDE. By solving a fixed point problem,  we explicitly constructed the \textit{DF equilibrium strategy}. Next, we discuss some possible extensions for future research.

A natural extension is 
to consider the closed-loop equilibrium strategy, in which the equilibrium strategy will be characterized by a system of nonlocal ODEs  with complicated coupling structure. Both theoretical analysis and numerical solution will be a huge challenge.  It will be interesting to extend our current work to a model with general market parameters, e.g., the incomplete market model considered in \cite{2021Time}. A closed form solution may not exist. The existence of an MFE may require some different mathematical arguments.

\appendix

\section{The characterization of open-loop consistent control}\label{Appendix}\label{char}
\subsection{The $n$-agent games}
 In this section, we aim to characterize the open-loop consistent control by verification argument in the same spirit of \cite{EP14900367320210201}. Moreover, similar to classical time-consistent stochastic optimal control problem, we derive a PDE, which gives us a characterization of open-loop consistent control.

Let $i\in\left\{1,\cdots,n\right\}$, {$(t_{0},x_{0})\in[0,T)\times\mathbb{R}^{n}$} and $\left(\pi,c\right)^{(i)}=\left(\pi^{k},c^{k}\right)_{k\neq i}\in{\mathcal{A}^{n-1}_{t_{0}}}$ be given. We denote by $(\hat{\pi}^{i},\hat{c}^{i})\in{\mathcal{A}_{t_{0}}}$ a candidate control for agent $i$ and introduce the following BSDE defined on the interval $[{t_{0}},T]$:
\begin{equation}
	\begin{cases}
		dp(t)=q^{B}(t)dB_{t}+\sum_{k=1}^{n}q^{k}(t)dW^{k}_{t},\\
		p(T)=\frac{1}{\delta_{i}}\left(1-\frac{\theta_{i}}{n}\right)\exp\left\{-\frac{1}{\delta_{i}}\left(1-\frac{\theta_{i}}{n}\right)\hat{X}_{T}^{i}+\frac{\theta_{i}}{\delta_{i}}\overline{X}^{(i)}_{T}\right\},
	\end{cases}\label{BSDE1}
\end{equation}
where $\hat{X}^{i}$ and $\overline{X}^{(i)}$ are given by 
\begin{equation*}
	\begin{cases}
		d\hat{X}^{i}_{t}=\left(\hat{\pi}^{i}_{t}\mu_{i}-\hat{c}^{i}_{t}\right)dt+\hat{\pi}^{i}_{t}\nu_{i}dW_{t}^{i}+\hat{\pi}^{i}_{t}\sigma_{i}dB_{t},\,{t\in[t_{0},T]},\\
		\hat{X}^{i}_{{t_{0}}}=x^{i}_{0},
	\end{cases}
\end{equation*}
and
\begin{equation*}
	\begin{cases}
		d\overline{X}^{(i)}_{t}=(\overline{\mu\pi}^{(i)}_{t}-\overline{c}^{(i)}_{t})dt+\overline{\sigma\pi}^{(i)}_{t}dB_{t}+\frac{1}{n}\sum_{k\ne i}\nu_{k}\pi^{k}_{t}dW^{k}_{t}, \, {t\in[t_{0},T]},\\ \overline{X}^{(i)}_{{t_{0}}}=\overline{x}^{(i)}_{0}:=\frac{1}{n}\sum_{k\neq i}x_{0}^{k}.
	\end{cases}
\end{equation*}
If $U_{i}(\hat{X}_{T}^{i},\overline{X}^{(i)}_{T})=-\exp\left\{-\frac{1}{\delta_{i}}\left(1-\frac{\theta_{i}}{n}\right)\hat{X}_{T}^{i}+\frac{\theta_{i}}{\delta_{i}}\overline{X}^{(i)}_{T}\right\}\in L^{2}_{\mathcal{F}_{T}}(\Omega;\mathbb{R})$, then BSDE \cref{BSDE1} has a unique solution \[(p(\cdot),q(\cdot))\in\mathbb{S}^{2}_{\mathbb{F}}({t_{0}},T;\mathbb{R})\times\mathbb{H}^{2}_{\mathbb{F}}({t_{0}},T;\mathbb{R}^{n+1}),\]
where $q(\cdot)=(q^{B}(\cdot),q^{1}(\cdot),\dots,q^{n}(\cdot))$.

Then we have the following theorem:
\begin{theorem}\label{suff}
 Assume that  $U_{i}(\hat{X}_{T}^{i},\overline{X}^{(i)}_{T})\in L^{2}_{\mathcal{F}_{T}}(\Omega;\mathbb{R})$.
 Then, the $(\hat{\pi}^{i},\hat{c}^{i})$ is an open-loop consistent control for agent $i$ with respect to the {initial condition $(t_{0},x_{0})$} in response to $(\pi^{k},c^{k})_{k\neq i}$, if the following conditions hold 
\begin{align}
	&\mu_{i}p(t)+\nu_{i}q^{i}(t)+\sigma_{i}q^{B}(t)=0,\,\ t\in[{t_{0}},T],\label{5.2}\\
	&-\lambda(T-t)p(t)+\frac{1}{\delta_{i}}\left(1-\frac{\theta_{i}}{n}\right)\exp\left\{-\frac{1}{\delta_{i}}\left(1-\frac{\theta_{i}}{n}\right)\hat{c}_{t}^{i}+\frac{\theta_{i}}{\delta_{i}}\overline{c}^{(i)}_{t}\right\}=0,\,\ t\in[{t_{0}},T].\label{5.3}
\end{align}
\end{theorem}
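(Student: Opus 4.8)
The plan is to use a spike-variation argument combined with the convexity (concavity) properties of the exponential utility and a duality identity provided by the BSDE \eqref{BSDE1}. Fix $t\in[t_0,T)$, a bounded $\mathcal{F}_t$-measurable perturbation direction $v=(v_1,v_2)$, and a sequence $\epsilon_n\downarrow 0$. Let $(\pi^{i,t,\epsilon_n},c^{i,t,\epsilon_n})$ be the spike-perturbed control as in \eqref{eps} and let $X^{i,\epsilon_n}$ be the corresponding wealth process (note $\overline{X}^{(i)}$ is unaffected since only agent $i$'s control is perturbed). First I would quantify the effect of the spike on the wealth process: because the perturbation acts only on $[t,t+\epsilon_n)$, standard SDE estimates give $\hat{X}^i_s = X^{i,\epsilon_n}_s$ for $s<t$, and for $s\ge t+\epsilon_n$ the difference $X^{i,\epsilon_n}_s-\hat{X}^i_s$ is, to leading order, $\epsilon_n$ times an explicit $\mathcal{F}_t$-measurable quantity driven by $v$ (more precisely, $(\mu_i v_1 - v_2)$ integrated, with no martingale part surviving because the Brownian increments over $[t,t+\epsilon_n)$ contribute at order $o(\epsilon_n)$ after taking expectations in the variational inequality).

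Next I would expand the difference of the payoff functionals. Write $\Delta J := J_i(t,\hat X^i_t,\overline X^{(i)}_t,(\pi^{i,t,\epsilon_n},c^{i,t,\epsilon_n}),(\pi,c)^{(i)}) - J_i(t,\hat X^i_t,\overline X^{(i)}_t,(\hat\pi^i,\hat c^i),(\pi,c)^{(i)})$. This splits into a running-consumption part over $[t,t+\epsilon_n)$ plus a terminal-wealth part (the contributions over $(t+\epsilon_n,T]$ in the running integral involve $X^{i,\epsilon_n}_s$ which differs from $\hat X^i_s$, so they must also be tracked). The key step is to use the BSDE: by Itô's formula applied to $p(s)\,\big(X^{i,\epsilon_n}_s - \hat X^i_s\big)$ on $[t,T]$, and using the terminal condition $p(T) = -U_i(\hat X^i_T,\overline X^{(i)}_T)\cdot(\tfrac1{\delta_i}(1-\tfrac{\theta_i}{n}))$ together with the dynamics of $p$, one converts the terminal-wealth difference into an integral against $p(s)$ and $q(s)$ over $[t,T]$. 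Condition \eqref{5.2}, $\mu_i p(s) + \nu_i q^i(s) + \sigma_i q^B(s)=0$, is exactly what kills the first-order contribution of the spike in the \emph{investment} component; condition \eqref{5.3} matches $\lambda(T-t)p(t)$ against the marginal utility of consumption at the candidate rate, which kills the first-order contribution in the \emph{consumption} component. After these cancellations the remaining terms are of order $o(\epsilon_n)$ plus a manifestly nonpositive second-order term coming from the concavity of $x\mapsto -\exp\{-\tfrac1{\delta_i}(1-\tfrac{\theta_i}{n})x\}$ in the terminal utility and of $c\mapsto -\exp\{\cdots\}$ in the running utility. Dividing by $\epsilon_n$ and taking $\limsup$ then gives the local optimality inequality in \cref{OPCS}.

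I would organize the bookkeeping as follows: (1) SDE perturbation estimates — show $\|X^{i,\epsilon_n}-\hat X^i\|$ in appropriate $L^2$-type norms is $O(\epsilon_n)$ and identify the leading $\mathcal{F}_t$-measurable term; (2) the Itô/duality identity linking $p(\cdot),q(\cdot)$ to the variation of the terminal payoff; (3) substitution of \eqref{5.2} and \eqref{5.3} to cancel first-order terms; (4) the concavity estimate for the residual, then pass to the limit. The integrability hypothesis $U_i(\hat X^i_T,\overline X^{(i)}_T)\in L^2_{\mathcal F_T}(\Omega;\mathbb R)$ is what guarantees the BSDE \eqref{BSDE1} is well-posed with $(p,q)$ in the stated spaces, so all stochastic integrals appearing are genuine martingales with zero expectation.

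The main obstacle I expect is controlling the interplay between the running-cost integral over the shrinking interval $[t,t+\epsilon_n)$ and the terminal term after the spike: one must be careful that the $o(\epsilon_n)$ error terms (in particular those arising from Taylor-expanding the exponentials and from the quadratic-variation contributions of the Brownian increments on $[t,t+\epsilon_n)$) are genuinely negligible after dividing by $\epsilon_n$, which requires uniform integrability of the relevant random variables — this is precisely where the $L^2$ integrability condition and the structure of $(p,q)\in\mathbb S^2_{\mathbb F}\times\mathbb H^2_{\mathbb F}$ are used. A secondary subtlety is that, because the a.s. $\limsup$ is taken along a fixed sequence $\epsilon_n$ (rather than the full limit $\epsilon\downarrow 0$, as noted in the remark after \cref{OPCS}), one only needs the inequality along sequences, which is what the $L^2$-convergence arguments naturally deliver.
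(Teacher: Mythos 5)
Your proposal is correct and follows essentially the same route as the paper: spike variation, the concavity upper bound on the utility differences, It\^{o} duality between $p(\cdot)$ and the perturbed-minus-unperturbed wealth, cancellation of the investment term via \cref{5.2}, and the consumption term reducing via \cref{5.3} to a residual $\left(\lambda(s-t)\lambda(T-s)-\lambda(T-t)\right)p(s)$ that is $o(\epsilon_n)$ after division by $\epsilon_n$ by continuity of $\lambda$. Two minor simplifications you could make: the running-utility integrand depends only on the consumption rates, so the contribution over $(t+\epsilon_n,T]$ is exactly zero (no tracking of $X^{i,\epsilon_n}$ there is needed), and the Brownian increments on $[t,t+\epsilon_n)$ enter at order $O(\epsilon_n)$ through the covariation terms $\nu_i q^i+\sigma_i q^B$ rather than being $o(\epsilon_n)$ --- they are precisely what \cref{5.2} cancels.
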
  
\begin{proof}[Proof of \cref{suff}]
	Suppose that $(\hat{\pi}^{i},\hat{c}^{i})$ satisfies the conditions \cref{5.2}-\cref{5.3}. For any {$t\in[t_{0},T]$} and $\epsilon\in(0,T-t)$, we consider $(\pi^{i,t,\epsilon},c^{i,t,\epsilon})$ by \cref{eps} and define $X^{i,t,\epsilon}$ as the agent $i$'s wealth process with the initial condition $(t_{0},x^{i}_{0})$ and the control $(\pi^{i,t,\epsilon},c^{i,t,\epsilon})$, then we have the following difference
	\begin{align*}
		\Delta(J_{i}&):=J_{i}(t,\hat{X}^{i}_{t},\overline{X}^{(i)}_{t},(\hat{\pi}^{i},\hat{c}^{i}),\left(\pi,c\right)^{(i)})-J_{i}(t,\hat{X}^{i}_{t},\overline{X}^{(i)}_{t},(\pi^{i,t,\epsilon},c^{i,t,\epsilon}),\left(\pi,c\right)^{(i)})\\&=\mathbb{E}_{t}\biggl[\int_{t}^{T}\lambda(s-t)\left[\exp\left\{-\frac{1}{\delta_{i}}(1-\frac{\theta_{i}}{n})c^{i,t,\epsilon}_{s}+\frac{\theta_{i}}{\delta_{i}}\overline{c}^{(i)}_{s}\right\}-\exp\left\{-\frac{1}{\delta_{i}}(1-\frac{\theta_{i}}{n})\hat{c}^{i}_{s}+\frac{\theta_{i}}{\delta_{i}}\overline{c}^{(i)}_{s}\right\}\right]ds\\&+\lambda(T-t)\left[\exp\left\{-\frac{1}{\delta_{i}}(1-\frac{\theta_{i}}{n})X^{i,t,\epsilon}_{T}+\frac{\theta_{i}}{\delta_{i}}\overline{X}^{(i)}_{T}\right\}-\exp\left\{-\frac{1}{\delta_{i}}(1-\frac{\theta_{i}}{n})\hat{X}^{i}_{T}+\frac{\theta_{i}}{\delta_{i}}\overline{X}^{(i)}_{T}\right\}\right]\biggr].
	\end{align*}
Using the concavity and the terminal condition in  BSDE \cref{BSDE1}, we obtain
\begin{equation}
	\begin{split}
		\Delta(J_{i})&\geq\mathbb{E}_{t}\biggl[\int_{t}^{T}\left\langle\lambda(s-t)\frac{1}{\delta_{i}}\left(1-\frac{\theta_{i}}{n}\right)\exp\left\{-\frac{1}{\delta_{i}}\left(1-\frac{\theta_{i}}{n}\right)\hat{c}^{i}_{s}+\frac{\theta_{i}}{\delta_{i}}\overline{c}^{(i)}_{s}\right\},\hat{c}^{i}_{s}-c^{i,t,\epsilon}_{s}\right\rangle ds\\&+\lambda(T-t)\left\langle p(T),\hat{X}^{i}_{T}-X^{i,t,\epsilon}_{T}\right\rangle\biggr].\label{ineq}
	\end{split}
\end{equation}
Applying Ito's formula, we arrive at 
\begin{equation}
	\begin{split}\label{ito}
		\mathbb{E}_{t}\left[\left\langle {p(T)},\hat{X}^{i}_{T}-X^{i,t,\epsilon}_{T}\right\rangle\right]&=\mathbb{E}_{t}\biggl[\int_{t}^{T}\left\langle \mu_{i}{p(s)}+\nu_{i}{q^{i}(s)}+\sigma_{i}{q^{B}(s)},\hat{\pi}^{i}_{s}-\pi^{i,t,\epsilon}_{s}\right\rangle\\&+\left\langle-{p(s)},\hat{c}^{i}_{s}-c^{i,t,\epsilon}_{s}\right\rangle ds\biggr].
	\end{split}
\end{equation}
 Taking \cref{ito} in \cref{ineq} yields 
\begin{align*}
&J_{i}(t,\hat{X}^{i}_{t},\overline{X}^{(i)}_{t},(\pi^{i,t,\epsilon},c^{i,t,\epsilon}),\left(\pi,c\right)^{(i)})-J_{i}(t,\hat{X}^{i}_{t},\overline{X}^{(i)}_{t},(\hat{\pi}^{i},\hat{c}^{i}),\left(\pi,c\right)^{(i)})\\\leq&\mathbb{E}_{t}\biggl[\int_{t}^{T}\left\langle{-\lambda(T-t)p(s)}+\lambda(s-t)\frac{1}{\delta_{i}}\left(1-\frac{\theta_{i}}{n}\right)\exp\left\{-\frac{1}{\delta_{i}}\left(1-\frac{\theta_{i}}{n}\right)\hat{c}^{i}_{s}+\frac{\theta_{i}}{\delta_{i}}\overline{c}^{(i)}_{s}\right\},c^{i,t,\epsilon}_{s}-\hat{c}^{i}_{s}\right\rangle\\+&\lambda(T-t)\langle\mu_{i}{p(s)}+\nu_{i}{q^{i}(s)}+\sigma_{i}{q^{B}(s)},\pi^{i,t,\epsilon}_{s}-\hat{\pi}^{i}_{s}\rangle ds\biggr]\\=&\mathbb{E}_{t}\biggl[\int_{t}^{t+\epsilon}\left\langle{-\lambda(T-t)p(s)}+\lambda(s-t)\frac{1}{\delta_{i}}\left(1-\frac{\theta_{i}}{n}\right)\exp\left\{-\frac{1}{\delta_{i}}\left(1-\frac{\theta_{i}}{n}\right)\hat{c}^{i}_{s}+\frac{\theta_{i}}{\delta_{i}}\overline{c}^{(i)}_{s}\right\},v_{2}\right\rangle\\&+\lambda(T-t)\langle\mu_{i}{p(s)}+\nu_{i}{q^{i}(s)}+\sigma_{i}{q^{B}(s)},v_{1}\rangle ds\biggr].
\end{align*}
According to the conditions \cref{5.2} and \cref{5.3}, we get
\begin{align*}
	J_{i}&(t,\hat{X}^{i}_{t},{\overline{X}^{(i)}_{t}},(\pi^{i,\epsilon},c^{i,\epsilon}),\left(\pi,c\right)^{(i)})-J_{i}(t,\hat{X}^{i}_{t},{\overline{X}^{(i)}_{t}},(\hat{\pi}^{i},\hat{c}^{i}),\left(\pi,c\right)^{(i)})\\&\leq\mathbb{E}_{t}\biggl[\int_{t}^{t+\epsilon}\langle(\lambda(s-t)\lambda(T-s)-\lambda(T-t)){p(s)},v_{2}\rangle\\&+\lambda(T-t)\langle\mu_{i}{p(s)}+\nu_{i}{q^{i}(s)}+\sigma_{i}{q^{B}(s)},v_{1}\rangle ds\biggr]\\&=\mathbb{E}_{t}\biggl[\int_{t}^{t+\epsilon}\langle(\lambda(s-t)\lambda(T-s)-\lambda(T-t)){p(s)},v_{2}\rangle ds\biggr]\\&\leq\max_{t\leq s\leq t+\epsilon}|\lambda(s-t)\lambda(T-s)-\lambda(T-t)|\mathbb{E}_{t}\left[\sup_{s\in[t,T]}{|p(s)|}\right]|v_{2}|\epsilon ,\quad\, a.s.
\end{align*}

Now dividing both sides of the last inequality by $\epsilon$ and taking the limit, as $\lambda$ is continuous,  we conclude that $(\hat{\pi}^{i},\hat{c}^{i})$ is an open-loop consistent control for agent $i$ with respect to the {initial condition $(t_{0},x_{0})$} in response to $(\pi^{k},c^{k})_{k\neq i}$.
\end{proof}
In the view of  \cref{suff}, we consider the following FBSDEs:
	\begin{equation}
		\begin{cases}
			d\hat{X}^{i}_{t}=(\hat{\pi}^{i}_{t}\mu_{i}-\hat{c}^{i}_{t})dt+\hat{\pi}^{i}_{t}\nu_{i}dW^{i}_{t}+\hat{\pi}^{i}_{t}\sigma_{i}dB_{t},\,{t\in[t_{0},T]},\\
			d\overline{X}^{(i)}_{t}=(\overline{\mu\pi}^{(i)}_{t}-\overline{c}^{(i)}_{t})dt+\overline{\sigma\pi}^{(i)}_{t}dB_{t}+\frac{1}{n}\sum_{k\ne i}\nu_{k}\pi^{k}_{t}dW^{k}_{t}, \,{t\in[t_{0},T]},\\
			dp(t)=q^{B}(t)dB_{t}+\sum_{k=1}^{n}q^{i}(t)dW^{k}_{t},\, {t\in[t_{0},T]},\\ \hat{X}^{i}_{{t_{0}}}=x^{i}_{0},\,\overline{X}^{(i)}_{{t_{0}}}=\overline{x}^{(i)}_{0},\\
			p(T)=\frac{1}{\delta_{i}}\left(1-\frac{\theta_{i}}{n}\right)\exp\left\{-\frac{1}{\delta_{i}}\left(1-\frac{\theta_{i}}{n}\right)\hat{X}_{T}^{i}+\frac{\theta_{i}}{\delta_{i}}\overline{X}^{(i)}_{T}\right\},
		\end{cases}\label{FB}
	\end{equation}
with conditions 
\begin{align}
	&\mu_{i}{p(t)}+\nu_{i}{q^{i}(t)}+\sigma_{i}{q^{B}(t)}=0,\,\ {t\in[t_{0},T]},\label{suff1}\\
	&{-\lambda(T-t)p(t)}+\frac{1}{\delta_{i}}\left(1-\frac{\theta_{i}}{n}\right)\exp\left\{-\frac{1}{\delta_{i}}\left(1-\frac{\theta_{i}}{n}\right)\hat{c}^{i}_{t}+\frac{\theta_{i}}{\delta_{i}}\overline{c}^{(i)}_{t}\right\}=0,\,\ {t\in[t_{0},T]}.\label{suff2}
\end{align}
It is hard to solve the above system due to the non-Markovianity of the inputs $(\pi^{k},c^{k})_{k\neq i}$.  Hence, we assume that  the inputs $(\pi^{k},c^{k})_{k\neq i}$ are of the following form:
\begin{equation*}
	\left(\pi^{k}_{t},c^{k}_{t}\right)=\left(\Pi^{k}(t),C^{k}(t,X_{t})\right),\,\ k\neq i,\, {t\in[t_{0},T]},
\end{equation*}
where  $\left(\Pi^{k},C^{k}\right)_{k\neq i}\in\mathcal{S}^{n-1}$ and  $X=(X^{1},\cdots,X^{n})$ is the wealth process associated with $\left(\hat{\pi}^{i},\hat{c}^{i}\right)$ and $(\pi^{k},c^{k})_{k\neq i}$ satisfying
	\begin{equation}
	\begin{cases}
		dX^{k}_{t}=\left(\Pi^{k}(t)\mu_{k}-C^{k}(t,X_{t})\right)dt+\Pi^{k}(t)\nu_{k}dW^{k}_{t}+\Pi^{k}(t)\sigma_{k}dB_{t},\, k\neq i,\, {t\in[t_{0},T]}, \\
		dX^{i}_{t}=\left(\hat{\pi}_{t}^{i}\mu_{i}-\hat{c}^{i}_{t}\right)dt+\hat{\pi}_{t}^{i}\nu_{i}dW_{t}^{i}+\hat{\pi}_{t}^{i}\sigma _{i}dB_{t},\, {t\in[t_{0},T]},\\ X^{j}_{{t_{0}}}=x^{j}_{0},\, j=1,\dots,n.
	\end{cases}
\end{equation}
Now, the constrained FBSDEs bocomes
\begin{equation}\label{A.12}
	\begin{cases}
		d\overline{X}^{(i)}_{t}=\left(\overline{\mu\Pi}^{(i)}(t)-\overline{C}^{(i)}(t,X_{t})\right)dt+\overline{\sigma\Pi}^{(i)}(t)dB_{t}+\frac{1}{n}\sum_{k\ne i}\nu_{k}\Pi^{k}(t)dW^{k}_{t}, \\
	d\hat{X}^{i}_{t}=(\hat{\pi}^{i}_{t}\mu_{i}-\hat{c}^{i}_{t})dt+\hat{\pi}^{i}_{t}\nu_{i}dW^{i}_{t}+\hat{\pi}^{i}_{t}\sigma_{i}dB_{t},\\
	{dp(t)=q^{B}(t)dB_{t}+\sum_{k=1}^{n}q^{i}(t)dW^{k}_{t}},\\ \hat{X}^{i}_{{t_{0}}}=x^{i}_{0},\,\overline{X}^{(i)}_{{t_{0}}}=\overline{x}^{(i)}_{0},\\
	{p(T)=\frac{1}{\delta_{i}}\left(1-\frac{\theta_{i}}{n}\right)\exp\left\{-\frac{1}{\delta_{i}}\left(1-\frac{\theta_{i}}{n}\right)\hat{X}_{T}^{i}+\frac{\theta_{i}}{\delta_{i}}\overline{X}^{(i)}_{T}\right\}},
	\end{cases}
\end{equation}
with conditions 
\begin{align}
	&\mu_{i}{p(t)}+\nu_{i}{q^{i}(t)}+\sigma_{i}{q^{B}(t)}=0,\\
	&{-\lambda(T-t)p(t)}+\frac{1}{\delta_{i}}\left(1-\frac{\theta_{i}}{n}\right)\exp\left\{-\frac{1}{\delta_{i}}\left(1-\frac{\theta_{i}}{n}\right)\hat{c}^{i}_{t}+\frac{\theta_{i}}{\delta_{i}}\overline{C}^{(i)}(t,X_{t})\right\}=0,
\end{align}
where
\begin{equation*}
	 \overline{\mu\Pi}^{(i)}(t):=\frac{1}{n}\sum_{k\neq i}\mu_{k}\Pi^{k}(t),\,\ \overline{\sigma\Pi}^{(i)}(t):=\frac{1}{n}\sum_{k\neq i}\sigma_{k}\Pi^{k}(t),\,\ \overline{C}^{(i)}(t,x)=\frac{1}{n}\sum_{k\neq i}C^{k}(t,x).
\end{equation*}

Based on the terminal condition of the BSDE \cref{BSDE1}, we consider the following ansatz:
\begin{equation}
	p(t)=V^{i}(t,\hat{X}^{i}_{t},\overline{X}^{(i)}_{t}),\,\, {t\in[t_{0},T]},\label{An1}
\end{equation}
for some deterministic function $V^{i}\in C^{1,2,2}([0,T]\times\mathbb{R}^{2})$ satisfying the following terminal condition: \[V^{i}(T,x^{i},\overline{x}^{(i)})=\frac{1}{\delta_{i}}\left(1-\frac{\theta_{i}}{n}\right)\exp\left\{-\frac{1}{\delta_{i}}\left(1-\frac{\theta_{i}}{n}\right)x^{i}+\frac{\theta_{i}}{\delta_{i}}\overline{x}^{(i)}\right\}.\]
Applying Ito's formula to \cref{An1} yields 
\begin{align*}
	dp(&t)=\left\{V^{i}_{t}(t,\hat{X}^{i}_{t},\overline{X}^{(i)}_{t})+V_{x}^{i}(t,\hat{X}^{i}_{t},\overline{X}^{(i)}_{t})(\hat{\pi}^{i}_{t}\mu_{i}-\hat{c}^{i}_{t})+V^{i}_{y}(t,\hat{X}^{i}_{t},\overline{X}^{(i)}_{t})\left(\overline{\mu\Pi}^{(i)}(t)\right.\right.\\&\left. \left.-\overline{C}^{(i)}(t,X_{t})\right)+\frac{1}{2}V^{i}_{xx}(t,\hat{X}^{i}_{t},\overline{X}^{(i)}_{t})(\hat{\pi}^{i}_{t})^{2}(\nu_{i}^{2}+\sigma_{i}^{2})+\frac{1}{2}V^{i}_{yy}(t,\hat{X}^{i}_{t},\overline{X}^{(i)}_{t})\left[\left(\overline{\sigma\Pi}^{(i)}(t)\right)^{2}\right.\right.\\&\left.\left.+\frac{1}{n^{2}}\sum_{k\neq i}\left(\nu_{k}\Pi^{k}(t)\right)^{2}\right]+V^{i}_{xy}(t,\hat{X}^{i}_{t},\overline{X}^{(i)}_{t})\sigma_{i}\hat{\pi}^{i}_{t}\overline{\sigma\Pi}^{(i)}(t)\right\}dt+\left(V^{i}_{x}(t,\hat{X}^{i}_{t},\overline{X}^{(i)}_{t})\sigma_{i}\hat{\pi}^{i}_{t}\right.\\&\left.+V^{i}_{y}(t,\hat{X}^{i}_{t},\overline{X}^{(i)}_{t})\overline{\sigma\Pi}^{(i)}(t)\right)dB_{t}+\frac{1}{n}\sum_{k\neq i}V^{i}_{y}(t,\hat{X}^{i}_{t},\overline{X}^{(i)}_{t})\nu_{k}\Pi^{k}(t)dW^{k}_{t}\\&+V^{i}_{x}(t,\hat{X}^{i}_{t},\overline{X}^{(i)}_{t})\nu_{i}\hat{\pi}^{i}_{t}dW^{i}_{t}.
\end{align*}

Comparing the above equation with the third equation in \cref{A.12}, we deduce that
\begin{equation}\label{pde}
	\begin{cases}
		V^{i}_{t}(t,x^{i},\overline{x}^{(i)})+\hat{\Pi}^{i}(t,x)\left(V^{i}_{x}(t,x^{i},\overline{x}^{(i)})\mu_{i}+V^{i}_{xy}(t,x^{i},\overline{x}^{(i)})\sigma_{i}\overline{\sigma\Pi}^{(i)}(t)\right)\\+\frac{1}{2}\hat{\Pi}^{i}(t,x)^{2}V^{i}_{xx}(t,x^{i},\overline{x}^{(i)})\left(\nu_{i}^{2}+\sigma_{i}^{2}\right)-\hat{C}^{i}(t,x)V^{i}_{x}(t,x^{i},\overline{x}^{(i)})\\+V^{i}_{y}(t,x^{i},\overline{x}^{(i)})\left(\overline{\mu\Pi}^{(i)}(t)-\overline{C}^{(i)}(t,x)\right)\\+\frac{1}{2}V^{i}_{yy}(t,x^{i},\overline{x}^{(i)})\left[\left(\overline{\sigma\Pi}^{(i)}(t)\right)^{2}+\frac{1}{n^{2}}\sum_{k\neq i}\left(\nu_{k}\Pi^{k}(t)\right)^{2}\right]=0, \\
      {q^{B}(t)=\left(V^{i}_{x}(t,\hat{X}^{i}_{t},\overline{X}^{(i)}_{t})\sigma_{i}\hat{\pi}^{i}_{t}+V^{i}_{y}(t,\hat{X}^{i}_{t},\overline{X}^{(i)}_{t})\overline{\sigma\Pi}^{(i)}(t)\right)},\\ {q^{k}(t)={\mathbf{1}}_{\{k=i\}}V^{i}_{x}(t,\hat{X}^{i}_{t},\overline{X}^{(i)}_{t})\nu_{i}\hat{\pi}^{i}_{t}+{\mathbf{1}}_{\{k\ne i\}}\frac{1}{n}V^{i}_{y}(t,\hat{X}^{i}_{t},\overline{X}^{(i)}_{t})\nu_{k}\Pi^{k}(t)}.
\end{cases}
\end{equation}

Putting the second and third equations of \cref{pde} into \cref{suff1} and \cref{suff2}, we have
\begin{align*}
	&\mu_{i}V^{i}(t,\hat{X}^{i}_{t},\overline{X}^{(i)}_{t})+\nu_{i}^{2}V^{i}_{x}(t,\hat{X}^{i}_{t},\overline{X}^{(i)}_{t})\hat{\pi}^{i}_{t}+\sigma_{i}^{2}V^{i}_{x}(t,\hat{X}^{i}_{t},\overline{X}^{(i)}_{t})\hat{\pi}^{i}_{t}+\sigma_{i}\overline{\sigma\Pi}^{(i)}(t)V^{i}_{y}(t,\hat{X}^{i}_{t},\overline{X}^{(i)}_{t})=0, \\
	&\lambda(T-t)V^{i}(t,\hat{X}^{i}_{t},\overline{X}^{(i)}_{t})=\frac{1}{\delta_{i}}\left(1-\frac{\theta_{i}}{n}\right)\exp\left\{-\frac{1}{\delta_{i}}\left(1-\frac{\theta_{i}}{n}\right)\hat{c}^{i}_{t}+\frac{\theta_{i}}{\delta_{i}}\overline{C}^{(i)}(t,X_{t})\right\}, 
\end{align*}
which implies that $\left(\hat{\pi}^{i},\hat{c}^{i}\right)$ is in feedback form:
\begin{align}
	&\hat{\pi}^{i}_{t}=\hat{\Pi}^{i}(t,X_{t})=-\frac{\mu_{i}V^{i}(t,\hat{X}^{i}_{t},\overline{X}^{(i)}_{t})+\sigma_{i}\overline{\sigma\Pi}^{(i)}(t)V^{i}_{y}(t,\hat{X}^{i}_{t},\overline{X}^{(i)}_{t})}{(\nu_{i}^{2}+\sigma_{i}^{2})V^{i}_{x}(t,\hat{X}^{i}_{t},\overline{X}^{(i)}_{t})},\label{pi}\\
	&\hat{c}^{i}_{t}=\hat{C}^{i}(t,X_{t})=-\frac{\delta_{i}}{1-\frac{\theta_{i}}{n}}\ln\left[\frac{\delta_{i}}{1-\frac{\theta_{i}}{n}}\lambda(T-t)V^{i}(t,\hat{X}^{i}_{t},\overline{X}^{(i)}_{t})\right]+\frac{\theta_{i}}{1-\frac{\theta_{i}}{n}}\overline{C}^{(i)}(t,X_{t}).\label{c}
\end{align}

Then, substituting the expressions \cref{pi} and \cref{c} into the first equation of  (\ref{pde}), we obtain the following verification theorem.
\begin{theorem}\label{verif}
        Under the assumption of \cref{suff}, if there exists a classical solution $V^{i}\in C^{1,2,2}((0,T)\times\mathbb{R}^{2})\cap C([0,T]\times\mathbb{R}^{2})$ of the following PDE:
\begin{equation}\label{A.19}
\begin{cases}
	V^{i}_{t}(t,x^{i},\overline{x}^{(i)})+\hat{\Pi}^{i}(t,x)\left(V^{i}_{x}(t,x^{i},\overline{x}^{(i)})\mu_{i}+V^{i}_{xy}(t,x^{i},\overline{x}^{(i)})\sigma_{i}\overline{\sigma\Pi}^{(i)}(t)\right)\\+\frac{1}{2}\hat{\Pi}^{i}(t,x)^{2}V^{i}_{xx}(t,x^{i},\overline{x}^{(i)})\left(\nu_{i}^{2}+\sigma_{i}^{2}\right)-\hat{C}^{i}(t,x)V^{i}_{x}(t,x^{i},\overline{x}^{(i)})\\+V^{i}_{y}(t,x^{i},\overline{x}^{(i)})\left(\overline{\mu\Pi}^{(i)}(t)-\overline{C}^{(i)}(t,x)\right)\\+\frac{1}{2}V^{i}_{yy}(t,x^{i},\overline{x}^{(i)})\left[\left(\overline{\sigma\Pi}^{(i)}(t)\right)^{2}+\frac{1}{n^{2}}\sum_{k\neq i}\left(\nu_{k}\Pi^{k}(t)\right)^{2}\right]=0, (t,x)\in[0,T]\times\mathbb{R}^{n},\\
	V^{i}(T,x^{i},\overline{x}^{(i)})=\frac{1}{\delta_{i}}\left(1-\frac{\theta_{i}}{n}\right)\exp\left\{-\frac{1}{\delta_{i}}\left(1-\frac{\theta_{i}}{n}\right)x^{i}+\frac{\theta_{i}}{\delta_{i}}\overline{x}^{(i)}\right\},\,x\in\mathbb{R}^{n},\\
	\hat{\Pi}^{i}(t,x)=-\frac{\mu_{i}V^{i}(t,x^{i},\overline{x}^{(i)})+\sigma_{i}\overline{\sigma\Pi}^{(i)}(t)V^{i}_{y}(t,x^{i},\overline{x}^{(i)})}{(\nu_{i}^{2}+\sigma_{i}^{2})V^{i}_{x}(t,x^{i},\overline{x}^{(i)})},\,(t,x)\in[0,T]\times\mathbb{R}^{n},\\
	\hat{C}^{i}(t,x)=-\frac{\delta_{i}}{1-\frac{\theta_{i}}{n}}\ln\left[\frac{\delta_{i}}{1-\frac{\theta_{i}}{n}}\lambda(T-t)V^{i}(t,x^{i},\overline{x}^{(i)})\right]\\+\frac{\theta_{i}}{1-\frac{\theta_{i}}{n}}\overline{C}^{(i)}(t,x),\,(t,x)\in[0,T]\times\mathbb{R}^{n},
\end{cases}
\end{equation}
and if the ($n$-dimensional) SDE
	\begin{equation}\label{ndsde}
	\begin{cases}
		dX^{k}_{t}=\left(\Pi^{k}(t)\mu_{k}-C^{k}(t,X_{t})\right)dt+\Pi^{k}(t)\nu_{k}dW^{k}_{t}+\Pi^{k}(t)\sigma_{k}dB_{t},\, k\neq i,\, \\  {t\in[t_{0},T]}, \\
		dX^{i}_{t}=\left(\hat{\Pi}^{i}(t,X_{t})\mu_{i}-\hat{C}^{i}(t,X_{t})\right)dt+\hat{\Pi}^{i}(t,X_{t})\nu_{i}dW_{t}^{i}+\hat{\Pi}^{i}(t,X_{t})\sigma_{i}dB_{t},\,\\ {t\in[t_{0},T]},\quad X^{j}_{{t_{0}}}=x^{j}_{0},\, j=1,\dots,n,
	\end{cases}
\end{equation}
has a solution $X=(X^{1},\dots,X^{n})$ such that the controls {$(\pi^{j}_{t},c^{j}_{t}):=(\Pi^{j}(t),C^{j}(t,X_{t}))$}, $j\neq i$, and $(\hat{\pi}^{i}_{t}, \hat{c}^{i}_{t}):=(\hat{\Pi}^{i}(t,X_{t}),\hat{C}^{i}(t,X_{t}))$ are admissible over $[{t_{0}},T]$,  then  $(\hat{\pi}^{i},\hat{c}^{i})$ is an open-loop consistent control for agent $i$  with respect to the {initial condition $(t_{0},x_{0})$} in response to $(\pi^{k},c^{k})_{k\neq i}$.
\end{theorem}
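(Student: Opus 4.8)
The plan is to reduce \cref{verif} to \cref{suff} by producing an explicit solution of the adjoint BSDE \cref{BSDE1} out of the given classical solution $V^{i}$ of \cref{A.19}. Let $X=(X^{1},\dots,X^{n})$ be the solution of \cref{ndsde} furnished by the hypothesis, and set $(\hat{\pi}^{i}_{t},\hat{c}^{i}_{t}):=(\hat{\Pi}^{i}(t,X_{t}),\hat{C}^{i}(t,X_{t}))$ and $(\pi^{k}_{t},c^{k}_{t}):=(\Pi^{k}(t),C^{k}(t,X_{t}))$ for $k\ne i$; by assumption these are admissible over $[t_{0},T]$, and by the assumption of \cref{suff} the terminal payoff $U_{i}(\hat{X}^{i}_{T},\overline{X}^{(i)}_{T})$ lies in $L^{2}_{\mathcal{F}_{T}}(\Omega;\mathbb{R})$, so \cref{BSDE1} has a unique solution $(p,q)\in\mathbb{S}^{2}_{\mathbb{F}}(t_{0},T;\mathbb{R})\times\mathbb{H}^{2}_{\mathbb{F}}(t_{0},T;\mathbb{R}^{n+1})$. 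Following the ansatz \cref{An1}, I would take $p(t):=V^{i}(t,\hat{X}^{i}_{t},\overline{X}^{(i)}_{t})$, which already matches the terminal condition of \cref{BSDE1} thanks to the terminal condition in \cref{A.19}.

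Next I would apply It\^o's formula to $t\mapsto V^{i}(t,\hat{X}^{i}_{t},\overline{X}^{(i)}_{t})$ along the dynamics of $\hat{X}^{i}$ and $\overline{X}^{(i)}$ (the first two lines of \cref{A.12}). The key point is that, because $\hat{\Pi}^{i}$ and $\hat{C}^{i}$ are exactly the feedback maps \cref{pi}--\cref{c} (the last two lines of \cref{A.19}), the first equation of \cref{A.19} says precisely that the resulting $dt$ coefficient vanishes; hence $p$ obeys the martingale dynamics $dp(t)=q^{B}(t)dB_{t}+\sum_{k=1}^{n}q^{k}(t)dW^{k}_{t}$, with $q^{B}$ and $q^{k}$ read off from the diffusion part, i.e. the second and third equations of \cref{pde}, evaluated at $(t,\hat{X}^{i}_{t},\overline{X}^{(i)}_{t})$. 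Provided this candidate $(p,q)$ indeed lies in the right spaces (see below), uniqueness forces it to be \emph{the} solution of \cref{BSDE1}. It then remains to check the optimality conditions \cref{5.2}--\cref{5.3}: inserting the identified $q^{B},q^{i}$ into the left-hand side of \cref{5.2} produces $\mu_{i}V^{i}+(\nu_{i}^{2}+\sigma_{i}^{2})V^{i}_{x}\hat{\pi}^{i}_{t}+\sigma_{i}\overline{\sigma\Pi}^{(i)}(t)V^{i}_{y}$ (partial derivatives at $(t,\hat{X}^{i}_{t},\overline{X}^{(i)}_{t})$), which is zero exactly because $\hat{\pi}^{i}_{t}=\hat{\Pi}^{i}(t,X_{t})$ solves \cref{pi}; similarly \cref{5.3} is nothing but the defining identity \cref{c} for $\hat{c}^{i}_{t}=\hat{C}^{i}(t,X_{t})$ after exponentiation. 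Thus \cref{5.2}--\cref{5.3} hold on $[t_{0},T]$, and \cref{suff} yields that $(\hat{\pi}^{i},\hat{c}^{i})$ is an open-loop consistent control for agent $i$ with respect to $(t_{0},x_{0})$ in response to $(\pi^{k},c^{k})_{k\ne i}$.

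The main obstacle is the integrability bookkeeping hidden in ``provided the candidate $(p,q)$ lies in the right spaces'': a positive local martingale of the form $V^{i}(t,\hat{X}^{i}_{t},\overline{X}^{(i)}_{t})$ need not be a true martingale, so one must verify $\mathbb{E}[\sup_{t\in[t_{0},T]}|p(t)|^{2}]<\infty$ and $q\in\mathbb{H}^{2}$ before identifying $(p,q)$ with the BSDE solution. I would handle this by localization: stop at $\tau_{m}:=\inf\{t\ge t_{0}:|\hat{X}^{i}_{t}|+|\overline{X}^{(i)}_{t}|\ge m\}\wedge T$, on which It\^o's formula and all the manipulations above are unambiguous, and then let $m\to\infty$, using that in the regime where \cref{verif} is applied $V^{i}$ is an exponential of an affine function of the state and, by \cref{dfadmissiable}, the admissible outcomes keep $\hat{X}^{i}$ and $\overline{X}^{(i)}$ with conditionally Gaussian-type laws, so that the relevant suprema and stochastic integrals have finite second moments and dominated convergence upgrades the localized identities to the global ones. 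A further minor step is to pass from It\^o's formula on $[t_{0},T-\epsilon]$ to $[t_{0},T]$ using the continuity of $V^{i}$ up to $t=T$; this is routine and does not require the interior regularity to extend to the boundary.
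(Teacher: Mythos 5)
Your proof is correct and follows essentially the same route as the paper: set $p(t)=V^{i}(t,\hat{X}^{i}_{t},\overline{X}^{(i)}_{t})$, read off $q^{B},q^{k}$ from the diffusion part of It\^o's formula, use the PDE to annihilate the drift so that $(p,q)$ solves the adjoint BSDE \cref{BSDE1}, and observe that the feedback definitions of $\hat{\Pi}^{i},\hat{C}^{i}$ are precisely conditions \cref{5.2}--\cref{5.3}, whence \cref{suff} applies. The integrability point you isolate (identifying the candidate with the unique $\mathbb{S}^{2}\times\mathbb{H}^{2}$ solution of \cref{BSDE1}) is exactly what the paper defers to the remark following the theorem and to \cref{dfadmissiable}, so your localization discussion is a slightly more explicit treatment of the same step rather than a different argument.
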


We stress that in order to prove that $\left(\hat{\pi}^{i}, \hat{c}^{i}\right)$ is an open-loop consistent control, one should  verify the following conditions:
\begin{itemize}
\item[(i)] The SDE \cref{ndsde} has a solution;

\item[(ii)] $\left((\hat{\pi}^{i},\hat{c}^{i}),(\pi,c)^{(i)}\right)\in{\mathcal{A}^{n}_{t_{0}}}$ and $U_{i}(\hat{X}_{T}^{i},\overline{X}^{(i)}_{T})\in L^{2}_{\mathcal{F}_{T}}(\Omega;\mathbb{R})$.
\end{itemize}

The following lemma shows that if $\left((\hat{\Pi}^{i},\hat{C}^{i}),(\Pi,C)^{(i)}\right)\in\mathcal{S}^{n}$, then conditions (i) and (ii) are satisfied.

\begin{lemma}\label{dfadmissiable}
	If $\left((\hat{\Pi}^{i},\hat{C}^{i}),(\Pi,C)^{(i)}\right)\in\mathcal{S}^{n}$, then the SDE \cref{ndsde} has a unique solution. Moreover, the corresponding outcome $\left((\hat{\pi}^{i},\hat{c}^{i}),(\pi,c)^{(i)}\right)$ is in {$\mathcal{A}_{t_{0}}^{n}$} and satisfies the conditions of \cref{suff} and \cref{Assumption2.3}, {i.e., $U_{i}(\hat{X}_{T}^{i},\overline{X}^{(i)}_{T})\in L^{2}_{\mathcal{F}_{T}}(\Omega;\mathbb{R})$ and $\left((\hat{\pi}^{i},\hat{c}^{i}),(\pi,c)^{(i)}\right)\in I^{n}_{t_{0},x_{0}}$.}
\end{lemma}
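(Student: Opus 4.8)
The plan is to use that, since $\big((\hat{\Pi}^{i},\hat{C}^{i}),(\Pi,C)^{(i)}\big)\in\mathcal{S}^{n}$, the closed-loop system \cref{ndsde} is a \emph{linear} SDE. By \cref{DFs} each of $\hat{\Pi}^{i}(\cdot),\Pi^{k}(\cdot)$ is a continuous deterministic function of $t$, and each of $\hat{C}^{i}(t,\cdot),C^{k}(t,\cdot)$ is affine in $x$ with continuous coefficients; hence the drift of $X=(X^{1},\dots,X^{n})$ has the form $A(t)X_{t}+a(t)$ with $A(\cdot),a(\cdot)$ continuous on $[0,T]$, while the diffusion coefficients are the deterministic continuous functions $\Pi^{k}(\cdot)\nu_{k}$, $\Pi^{k}(\cdot)\sigma_{k}$, $\hat{\Pi}^{i}(\cdot)\nu_{i}$ and $\hat{\Pi}^{i}(\cdot)\sigma_{i}$. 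First I would invoke the classical existence-and-uniqueness theorem for linear SDEs with bounded measurable coefficients to obtain a unique strong solution $X\in\mathbb{S}^{2}_{\mathbb{F}}(t_{0},T;\mathbb{R}^{n})$.

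The key observation is then that $X$ is a \emph{Gaussian process}: since $x_{0}$ is deterministic, the drift is affine and the diffusion coefficient is deterministic, the variation-of-constants formula expresses $X_{t}$ as a deterministic term plus a Wiener integral (against $(B,W^{1},\dots,W^{n})$) of deterministic integrands that are continuous---hence bounded---on $[t_{0},T]$. Therefore each $X_{t}$, and hence each coordinate $X^{k}_{t}$ and each average $\overline{X}^{(i)}_{t}$, is Gaussian with mean and covariance depending continuously, and so boundedly, on $t\in[t_{0},T]$. Consequently $\mathbb{E}\big[\sup_{t\in[t_{0},T]}|X_{t}|^{p}\big]<\infty$ for every $p\ge1$ and, crucially, $\sup_{t\in[t_{0},T]}\mathbb{E}\big[e^{\langle a,X_{t}\rangle}\big]<\infty$ for every fixed $a\in\mathbb{R}^{n}$.

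With this in hand the remaining claims are routine. For admissibility, $\hat{\pi}^{i}_{t}=\hat{\Pi}^{i}(t)$ and $\pi^{k}_{t}=\Pi^{k}(t)$ are deterministic and continuous, hence lie in $\mathbb{H}^{2}_{\mathbb{F}}(t_{0},T;\mathbb{R})$, while $\hat{c}^{i}_{t}=\hat{C}^{i}(t,X_{t})$ and $c^{k}_{t}=C^{k}(t,X_{t})$ are affine in $X_{t}$ with continuous coefficients, so the $L^{2}$ moment bound places them in $\mathbb{H}^{2}_{\mathbb{F}}(t_{0},T;\mathbb{R})$ as well; thus the outcome lies in $\mathcal{A}^{n}_{t_{0}}$. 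For the hypothesis of \cref{suff}, $U_{i}(\hat{X}^{i}_{T},\overline{X}^{(i)}_{T})=-\exp(\ell_{T})$ with $\ell_{T}$ a linear functional of $X_{T}$, so $\mathbb{E}\big[|U_{i}(\hat{X}^{i}_{T},\overline{X}^{(i)}_{T})|^{2}\big]=\mathbb{E}\big[e^{2\ell_{T}}\big]<\infty$, i.e.\ $U_{i}(\hat{X}^{i}_{T},\overline{X}^{(i)}_{T})\in L^{2}_{\mathcal{F}_{T}}(\Omega;\mathbb{R})$. Finally, for the integrability condition in \cref{Assumption2.3}, write $U_{i}(c^{i}_{t},\overline{c}^{(i)}_{t})=-\exp(\ell_{t})$ with $\ell_{t}$ affine in $X_{t}$ and coefficients continuous in $t$; since $t\mapsto\mathbb{E}\big[e^{2\ell_{t}}\big]$ is bounded on $[t_{0},T]$ by the uniform exponential-moment estimate, Tonelli's theorem yields $\mathbb{E}\big[\int_{t_{0}}^{T}|U_{i}(c^{i}_{t},\overline{c}^{(i)}_{t})|^{2}\,dt\big]=\int_{t_{0}}^{T}\mathbb{E}\big[e^{2\ell_{t}}\big]\,dt<\infty$, and combined with the terminal bound this gives $\big((\hat{\pi}^{i},\hat{c}^{i}),(\pi,c)^{(i)}\big)\in\mathcal{I}^{n}_{t_{0},x_{0}}$.

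There is no deep obstacle; this is a technical verification. The only point needing a little care is the passage from pointwise to uniform-in-$t$ exponential integrability of the linear functionals $\ell_{t}$ of $X_{t}$, which relies on all coefficient functions and the mean and covariance of the Gaussian solution being continuous on the \emph{compact} interval $[0,T]$, hence bounded, so the associated Laplace transforms are controlled uniformly over $t\in[t_{0},T]$.
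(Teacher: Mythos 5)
Your proposal is correct and follows essentially the same route as the paper: reduce \cref{ndsde} to a linear SDE with deterministic continuous coefficients, use the variation-of-constants representation to see that $X$ is Gaussian with mean and covariance continuous (hence bounded) on the compact interval, and deduce the uniform exponential moment bound $\sup_{t\in[t_{0},T]}\mathbb{E}\bigl[\exp\bigl(\beta^{\top}(t)X_{t}\bigr)\bigr]<\infty$ that yields admissibility and both integrability conditions. The paper's proof is the same argument written out with the explicit fundamental matrix $\Phi(t)$ and the Gaussian Laplace transform computed in closed form.
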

\begin{proof}
	Under the DF strategy $\left((\hat{\Pi}^{i},\hat{C}^{i}),(\Pi,C)^{(i)}\right)\in\mathcal{S}^{n}$, the  SDE becomes a linear SDE as follows:
	\begin{equation}\label{newsde}
		\begin{cases}
			dX_{t}=\left(A(t)X_{t}+b(t)\right)dt+D(t)dW_{t},\quad {t\in[t_{0},T]},\\                
			X_{{t_{0}}}=x_{0},
		\end{cases}
	\end{equation}
where $X_{t}:=(X^{1}_{t},\dots,X^{n}_{t})^{\top}$, $W_{t}:=\left(W^{1}_{t},\dots,W^{n}_{t}, B_{t}\right)^{\top}$, $x_{0}:=\left(x^{1}_{0}\dots,x^{n}_{0}\right)^{\top}$ and $A(t):[0,T]\rightarrow\mathbb{R}^{n\times n}$, $b(t):[0,T]\rightarrow\mathbb{R}^{n}$, $D(t):[0,T]\rightarrow\mathbb{R}^{n\times (n+1)}$ are  deterministic continuous functions.

Then, we  get that \cref{newsde} has a unique solution $X\in\mathbb{S}^{2}_{\mathbb{F}}({t_{0}},T;\mathbb{R}^{n})$. Moreover, we have the following estimate:
\begin{equation}
	\mathbb{E}\left[\sup_{t\in[{t_{0}},T]}|X_{t}|^{2}\right]\leq C
\end{equation}
for some constant $C$. Then, it is obvious to see the outcome $\left((\hat{\pi}^{i},\hat{c}^{i}),(\pi,c)^{(i)}\right)$ are admissible over $[{t_{0}},T]$.

In fact, we can solve the  SDE \cref{newsde} explicitly:
\begin{equation}
	X_{t}=\Phi(t)x_{0}+\Phi(t)\int_{{t_{0}}}^{t}\Phi^{-1}(s)b(s)ds+\Phi(t)\int_{{t_{0}}}^{t}\Phi^{-1}(s)D(s)dW_{s},
\end{equation}
where $\Phi(t)$ is the unique solution of the following ODE:
\begin{equation*}
	\begin{cases}
		d\Phi(t)=A(t)\Phi(t)dt,\\
		\Phi({t_{0}})=I_{n\times n},
	\end{cases}
\end{equation*}
and $\Phi^{-1}(t)$ exists, satisfying
\begin{equation*}
	\begin{cases}
		d\Phi^{-1}(t)=-\Phi^{-1}(t)A(t)dt,\\
		\Phi^{-1}({t_{0}})=I_{n\times n}.
	\end{cases}
\end{equation*}
 We now claim a stronger result: for $\beta^{\top}(t):=\left(\beta_{1}(t),\dots,\beta_{n}(t)\right)$  an $n$-dimensional continuous function, the process $\left\{\exp\left(\beta^{\top}(t)X_{t}\right), {t_{0}}\leq t\leq T\right\}$ is uniformly integrable. 
It suffices to prove that $	\sup_{t\in[{t_{0}},T]}\mathbb{E}\left[\exp\left(p\beta^{\top}(t)X_{t}\right)\right]<\infty$ for some $p>1$.
 
Note that $\exp\left(p\beta^{\top}(t)X_{t}\right)$ has the following form
\begin{equation*}
	\exp\left(p\beta^{\top}(t)\Phi(t)x_{0}+p\beta^{\top}(t)\Phi(t)\int_{{t_{0}}}^{t}\Phi^{-1}(s)b(s)ds+p\beta^{\top}(t)\Phi(t)\int_{{t_{0}}}^{t}\Phi^{-1}(s)D(s)dW_{s}\right),
\end{equation*}
and it is enough to focus on $\exp\left(p\beta^{\top}(t)\Phi(t)\int_{{t_{0}}}^{t}\Phi^{-1}(s)D(s)dW_{s}\right)$.

Since $Y_{t}:=\int_{{t_{0}}}^{t}\Phi^{-1}(s)D(s)dW_{s}\sim N(0,\Sigma(t))$, where $\Sigma(t):[{t_{0}},T]\rightarrow\mathbb{R}^{n\times n}$ is a continuous function, we can obtain
\begin{equation*}
	\begin{split}
		\mathbb{E}\left[\exp\left(p\beta^{\top}(t)\Phi(t)\int_{{t_{0}}}^{t}\Phi^{-1}(s)D(s)dW_{s}\right)\right]&=\mathbb{E}\left[\exp\left(p\beta^{\top}(t)\Phi(t)Y_{t}\right)\right]
\\&=\exp\left(\frac{p^{2}\beta^{\top}(t)\Phi(t)\Sigma(t)\Phi^{\top}(t)\beta(t)}{2}\right).	\end{split}
\end{equation*}
 It follows that 
\begin{equation*}
	\begin{split}
		\sup_{t\in[{t_{0}},T]}\mathbb{E}\left[\exp\left(p\beta^{\top}(t)X_{t}\right)\right]=\max_{t\in[{t_{0}},T]}&\exp\left(p\beta^{\top}(t)\Phi(t)x_{0}+p\beta^{\top}(t)\Phi(t)\int_{{t_{0}}}^{t}\Phi^{-1}(s)b(s)ds\right.\\&\left.+\frac{p^{2}\beta^{\top}(t)\Phi(t)\Sigma(t)\Phi^{\top}(t)\beta(t)}{2}\right)<\infty,
	\end{split}
\end{equation*}
which proves our claim. Therefore, for every $i\in\left\{1,\cdots,n\right\}$,
\begin{equation*}
	\begin{split}
&\mathbb{E}\left[\int_{{t_{0}}}^{T}|U_{i}(c^{i}_{t},\overline{c}^{(i)}_{t})|^{2}dt\right]=\int_{{t_{0}}}^{T}\mathbb{E}\left[\exp\left(-\frac{2}{\delta_{i}}\left(1-\frac{\theta_{i}}{n}\right)\left(\sum_{k=1}^{n}p^{i,k}(t)X^{k}_{t}+q^{i}(t)\right)\right.\right.\\&\left.\left.+\sum_{j\ne i}\frac{2\theta_{i}}{n\delta_{i}}\left(\sum_{k=1}^{n}p^{j,k}(t)X^{k}_{t}+q^{j}(t)\right)\right)\right]dt<\infty,\\
&\mathbb{E}\left[|U_{i}(X^{i}_{T},\overline{X}^{(i)}_{T})|^{2}\right]=\mathbb{E}\left[\exp\left\{-\frac{2}{\delta_{i}}\left(1-\frac{\theta_{i}}{n}\right)X_{T}^{i}+2\frac{\theta_{i}}{\delta_{i}}\overline{X}^{(i)}_{T}\right\}\right]<\infty.
\end{split}
\end{equation*}

Hence, $\left((\hat{\pi}^{i},\hat{c}^{i}),(\pi,c)^{(i)}\right)$ satisfies the conditions of \cref{suff} and \cref{Assumption2.3}, which completes the proof.
\end{proof}

\subsection{The mean field game}

Similar to  the $n$-agent games, we can deduce a characterization of open-loop consistent control in the mean field game framework.
Here, we only give the corresponding result and omit the proof.

Let $\xi_{0}=(\delta_{0},\theta_{0},\mu_{0},\nu_{0},\sigma_{0})$ represent a deterministic sample from its random type distribution and take an initial pair $(t_{0},x_{0})\in[0,T)\times\mathbb{R}$.  We denote by $(\hat{\pi}^{\xi_{0}},\hat{c}^{\xi_{0}})\in\mathcal{A}^{MF}_{t_{0}}$ a candidate control.  Now, we introduce the following BSDE defined on the interval $[t_{0},T]$:
\begin{equation}\label{bsde5.1}
	\begin{cases}
		dp(t)=q^{B}(t)dB_{t}+q^{W}(t)dW_{t}, \\
		p(T)=\frac{1}{\delta_{0}}\exp\left(-\frac{1}{\delta_{0}}\hat{X}^{\xi_{0}}_{T}+\frac{\theta_{0}}{\delta_{0}}\overline{X}_{T}\right),
	\end{cases}
\end{equation}
where $\hat{X}^{\xi_{0}}$ is given by
\begin{equation}\label{Xxi0}
	\begin{cases}
		d\hat{X}^{\xi_{0}}_{t}=\left(\hat{\pi}^{\xi_{0}}_{t}\mu_{0}-\hat{c}^{\xi_{0}}_{t}\right)dt+\hat{\pi}^{\xi_{0}}_{t}\nu_{0}dW_{t}+\hat{\pi}^{\xi_{0}}_{t}\sigma_{0}dB_{t},\,t\in[t_{0},T],\\
		\hat{X}^{\xi_{0}}_{t_{0}}=x_{0},
	\end{cases}
\end{equation}
and $\overline{X}$ is defined by \cref{MFEmeanwealth}.
\begin{theorem}\label{suffmean}
	Assume that  $\exp\left(-\frac{1}{\delta_{0}}\hat{X}^{\xi_{0}}_{T}+\frac{\theta_{0}}{\delta_{0}}\overline{X}_{T}\right)\in L^{2}_{\mathcal{F}^{MF}_{T}}(\Omega;\mathbb{R})$.  Then, $(\hat{\pi}^{\xi_{0}},\hat{c}^{\xi_{0}})$ is an open-loop consistent control for $\xi_{0}$-type agent with respect to the initial condition $(t_{0},x_{0})$, if the following conditions hold
	\begin{align*}
		&\mu_{0} p(t)+\nu_{0} q^{W}(t)+\sigma_{0} q^{B}(t)=0,\quad t\in[t_{0},T],\\
		&-\lambda(T-t)p(t)+\frac{1}{\delta_{0}}\exp\left(-\frac{1}{\delta_{0}}\hat{c}^{\xi_{0}}_{t}+\frac{\theta_{0}}{\delta_{0}}\overline{C}(t,\overline{X}_{t})\right)=0,\quad t\in[t_{0},T],
	\end{align*}
where $\overline{C}(t,\overline{X}_{t}):=\left(p_{1}(t)+\mathbb{E}\left[p_{2}(t,\xi)\right]\right)\overline{X}_{t}+\mathbb{E}\left[q(t,\xi)\right]$.
\end{theorem}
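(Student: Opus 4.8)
The plan is to transcribe, almost line for line, the proof of \cref{suff}, exploiting the structural simplification that in the mean field problem the ``average'' processes $\overline{X}$ and $\overline{C}(\cdot,\overline{X}_{\cdot})=(p_{1}(\cdot)+\mathbb{E}[p_{2}(\cdot,\xi)])\overline{X}_{\cdot}+\mathbb{E}[q(\cdot,\xi)]$ are \emph{exogenous} inputs, fixed once and for all by \cref{MFEmeanwealth}, and hence are unaffected by a spike perturbation of the representative agent's own control. Fix the deterministic type $\xi_{0}=(\delta_{0},\theta_{0},\mu_{0},\nu_{0},\sigma_{0})$, a time $t\in[t_{0},T)$, a bounded $\mathcal{F}^{MF}_{t}$-measurable $v=(v_{1},v_{2})$ and $\epsilon\in(0,T-t)$, and put $(\pi^{t,\epsilon},c^{t,\epsilon})=(\hat{\pi}^{\xi_{0}},\hat{c}^{\xi_{0}})+v\mathbf{1}_{[t,t+\epsilon)}$ with $X^{\xi_{0},t,\epsilon}$ the corresponding solution of \cref{Xxi0} started from $(t_{0},x_{0})$. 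Under the hypothesis $\exp(-\frac{1}{\delta_{0}}\hat{X}^{\xi_{0}}_{T}+\frac{\theta_{0}}{\delta_{0}}\overline{X}_{T})\in L^{2}_{\mathcal{F}^{MF}_{T}}(\Omega;\mathbb{R})$ the BSDE \cref{bsde5.1} has a unique adapted square-integrable solution $(p,q^{B},q^{W})$, so in particular $\mathbb{E}_{t}[\sup_{s\in[t,T]}|p(s)|]<\infty$; throughout, $\mathbb{E}_{t}[\cdot]$ denotes conditioning on $\mathcal{F}^{MF}_{t}$.

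Next I would bound the increment $\Delta(J):=J(t,\hat{X}^{\xi_{0}}_{t},\overline{X}_{t},(\hat{\pi}^{\xi_{0}},\hat{c}^{\xi_{0}}))-J(t,\hat{X}^{\xi_{0}}_{t},\overline{X}_{t},(\pi^{t,\epsilon},c^{t,\epsilon}))$. Since $z\mapsto-\exp\{-\frac{1}{\delta_{0}}z+\frac{\theta_{0}}{\delta_{0}}m\}$ is concave in $z$, the supporting-line inequality at $\hat{c}^{\xi_{0}}_{s}$ in the running term and at $\hat{X}^{\xi_{0}}_{T}$ in the terminal term, together with the terminal condition of \cref{bsde5.1} identifying the terminal derivative with $p(T)$, gives
\begin{align*}
\Delta(J)\;\ge\;\mathbb{E}_{t}\Big[&\int_{t}^{T}\lambda(s-t)\,\frac{1}{\delta_{0}}\exp\big\{-\tfrac{1}{\delta_{0}}\hat{c}^{\xi_{0}}_{s}+\tfrac{\theta_{0}}{\delta_{0}}\overline{C}(s,\overline{X}_{s})\big\}\,\big(\hat{c}^{\xi_{0}}_{s}-c^{t,\epsilon}_{s}\big)\,ds\\
&+\lambda(T-t)\,p(T)\,\big(\hat{X}^{\xi_{0}}_{T}-X^{\xi_{0},t,\epsilon}_{T}\big)\Big].
\end{align*}
Applying It\^o's formula to $s\mapsto p(s)\big(\hat{X}^{\xi_{0}}_{s}-X^{\xi_{0},t,\epsilon}_{s}\big)$ on $[t,T]$ and taking $\mathbb{E}_{t}[\cdot]$ — the two wealth processes coincide on $[t_{0},t]$ and differ only through $v$, and the stochastic integrals have zero conditional mean by the integrability of $p,q^{B},q^{W}$ and of the controls — yields
\begin{align*}
\mathbb{E}_{t}\big[p(T)(\hat{X}^{\xi_{0}}_{T}-X^{\xi_{0},t,\epsilon}_{T})\big]=\mathbb{E}_{t}\Big[\int_{t}^{T}\big(\mu_{0}p(s)+\nu_{0}q^{W}(s)+\sigma_{0}q^{B}(s)\big)\big(\hat{\pi}^{\xi_{0}}_{s}-\pi^{t,\epsilon}_{s}\big)-p(s)\big(\hat{c}^{\xi_{0}}_{s}-c^{t,\epsilon}_{s}\big)\,ds\Big].
\end{align*}

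Substituting this identity and invoking the two hypotheses of the theorem — $\mu_{0}p+\nu_{0}q^{W}+\sigma_{0}q^{B}\equiv0$ kills the $v_{1}$ contribution, while $\frac{1}{\delta_{0}}\exp\{-\frac{1}{\delta_{0}}\hat{c}^{\xi_{0}}_{s}+\frac{\theta_{0}}{\delta_{0}}\overline{C}(s,\overline{X}_{s})\}=\lambda(T-s)p(s)$ recombines the two $v_{2}$ contributions — I obtain
\begin{align*}
J(t,\hat{X}^{\xi_{0}}_{t},\overline{X}_{t},(\pi^{t,\epsilon},c^{t,\epsilon}))-J(t,\hat{X}^{\xi_{0}}_{t},\overline{X}_{t},(\hat{\pi}^{\xi_{0}},\hat{c}^{\xi_{0}}))\;\le\;\mathbb{E}_{t}\Big[\int_{t}^{t+\epsilon}\big(\lambda(s-t)\lambda(T-s)-\lambda(T-t)\big)\,p(s)\,v_{2}\,ds\Big],
\end{align*}
whose right-hand side is dominated in absolute value by $\max_{t\le s\le t+\epsilon}\big|\lambda(s-t)\lambda(T-s)-\lambda(T-t)\big|\cdot\mathbb{E}_{t}\big[\sup_{s\in[t,T]}|p(s)|\big]\cdot|v_{2}|\cdot\epsilon$ a.s. Dividing by $\epsilon$ and letting $\epsilon\downarrow0$: by continuity of $\lambda$ and $\lambda(0)=1$ one has $\lambda(s-t)\lambda(T-s)\to\lambda(T-t)$ uniformly over $s\in[t,t+\epsilon]$, so the bound vanishes, which is precisely the local optimality condition of \cref{OPCS} for the $\xi_{0}$-type agent. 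As $\xi_{0}$ was arbitrary, $(\hat{\pi}^{\xi_{0}},\hat{c}^{\xi_{0}})$ is an open-loop consistent control.

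No step here is a genuine obstacle; the argument is a faithful copy of the proof of \cref{suff}. The only points demanding care are that the It\^o and BSDE manipulations be read with deterministic coefficients $\delta_{0},\theta_{0},\mu_{0},\nu_{0},\sigma_{0}$ (and, where \cref{Assumption3.2} and \cref{mfe} require it, on the event $\{\xi=\xi_{0}\}$); that $\overline{X}$ and $\overline{C}(\cdot,\overline{X}_{\cdot})$ genuinely do not move under a one-agent spike variation; and that the $L^{2}$ assumption on the terminal datum is exactly what delivers $p\in\mathbb{S}^{2}$, hence $\mathbb{E}_{t}[\sup|p|]<\infty$ in the concluding estimate.
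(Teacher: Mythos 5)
Your proposal is correct and is precisely the argument the paper intends: the paper omits the proof of \cref{suffmean} on the grounds that it mirrors \cref{suff}, and your transcription — concavity plus the supporting-line inequality, It\^o's formula applied to $p(s)\bigl(\hat{X}^{\xi_{0}}_{s}-X^{\xi_{0},t,\epsilon}_{s}\bigr)$, cancellation via the two first-order conditions, and the $\max_{t\le s\le t+\epsilon}\bigl|\lambda(s-t)\lambda(T-s)-\lambda(T-t)\bigr|$ bound — is exactly that argument. Your observation that $\overline{X}$ and $\overline{C}(\cdot,\overline{X}_{\cdot})$ are exogenous and hence immune to the spike variation is the one genuinely mean-field-specific point, and you handle it correctly.
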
  

Similarly, by considering the following FBSDEs:
\begin{equation*}
	\begin{cases}
		d\hat{X}^{\xi_{0}}_{t}=\left(\hat{\Pi}^{\xi_{0}}(t,\hat{X}^{\xi_{0}}_{t},\overline{X}_{t})\mu_{0}-\hat{C}^{\xi_{0}}(t,\hat{X}^{\xi_{0}}_{t},\overline{X}_{t})\right)dt+\hat{\Pi}^{\xi_{0}}(t,\hat{X}^{\xi_{0}}_{t},\overline{X}_{t})\nu_{0} dW_{t}\\+\hat{\Pi}^{\xi_{0}}(t,\hat{X}^{\xi_{0}}_{t},\overline{X}_{t})\sigma_{0} dB_{t},\,t\in[t_{0},T],\\
	    d\overline{X}_{t}=\left[-\left(p_{1}(t)+\mathbb{E}\left[p_{2}(t,\xi)\right]\right)\overline{X}_{t}+\mathbb{E}\left[\Pi^{\xi}(t)\mu\right]-\mathbb{E}\left[q(t,\xi)\right]\right]dt\\+E\left[\Pi^{\xi}(t)\sigma\right]dB_{t},\,t\in[t_{0},T],\\
	    dp(t)=q^{B}(t)dB_{t}+q^{W}(t)dW_{t},\,\  t\in[t_{0},T],\\ 
	    \hat{X}^{\xi_{0}}_{t_{0}}=x_{0},\,\overline{X}_{t_{0}}=x_{0},\,
		p(T)=\frac{1}{\delta_{0}}\exp\left(-\frac{1}{\delta_{0}}\hat{X}^{\xi_{0}}_{T}+\frac{\theta_{0}}{\delta_{0}}\overline{X}_{T}\right),
	\end{cases}\label{FBM}
\end{equation*}
with conditions 
\begin{align}
	&\mu_{0}p(t)+\nu_{0}q^{W}(t)+\sigma_{0}q^{B}(t)=0,\quad t\in[t_{0},T],\\
	&-\lambda(T-t)p(t)+\frac{1}{\delta_{0}}\exp\left(-\frac{1}{\delta_{0}}\hat{c}^{\xi_{0}}_{t}+\frac{\theta_{0}}{\delta_{0}}\overline{C}(t,\overline{X}_{t})\right)=0,\quad t\in[t_{0},T],\label{suffmean2}
\end{align}
we have the following verification theorem in the mean field game framework.

\begin{theorem}\label{verifmean}
Under the assumption of \cref{suffmean}, if there exists a classical solution $V^{\xi_{0}}\in C^{1,2,2}((0,T)\times\mathbb{R}^{2},\mathbb{R})\cap C([0,T]\times\mathbb{R}^{2},\mathbb{R})$ of the following PDE:
\begin{equation*}
	\begin{cases}
		V^{\xi_{0}}_{t}(t,x,\overline{x})+\hat{\Pi}^{\xi_{0}}(t,x,\overline{x})\left(V^{\xi_{0}}_{x}(t,x,\overline{x})\mu_{0}+V^{\xi_{0}}_{x\overline{x}}(t,x,\overline{x})\sigma_{0}\mathbb{E}\left[\Pi^{\xi}(t)\sigma\right]\right)\\+\frac{1}{2}\hat{\Pi}^{\xi_{0}}(t,x,\overline{x})^{2}V^{\xi_{0}}_{xx}(t,x,\overline{x})\left(\nu_{0}^{2}+\sigma_{0}^{2}\right)-\hat{C}^{\xi_{0}}(t,x,\overline{x})V^{\xi_{0}}_{x}(t,x,\overline{x})\\+V^{\xi_{0}}_{\overline{x}}(t,x,\overline{x})\left(\mathbb{E}\left[\Pi^{\xi}(t)\mu\right]-\overline{C}(t,\overline{x})\right)\\+\frac{1}{2}V^{\xi_{0}}_{\overline{x}\overline{x}}(t,x,\overline{x})\left(\mathbb{E}\left[\Pi^{\xi}(t)\sigma\right]\right)^{2}=0,\quad (t,x,\overline{x})\in[0,T]\times\mathbb{R}^{2},\\
		V^{\xi_{0}}(T,x,\overline{x})=\frac{1}{\delta_{0}}\exp\left(-\frac{1}{\delta_{0}}x+\frac{\theta_{0}}{\delta_{0}}\overline{x}\right),\quad(x,\overline{x})\in\mathbb{R}^{2},\\
		\hat{\Pi}^{\xi_{0}}(t,x,\overline{x})=-\frac{\mu_{0} V^{\xi_{0}}(t,x,\overline{x})+\sigma_{0}\mathbb{E}\left[\Pi^{\xi}(t)\sigma\right]V^{\xi_{0}}_{\overline{x}}(t,x,\overline{x})}{\left(\nu_{0}^{2}+\sigma_{0}^{2}\right)V^{\xi_{0}}_{x}(t,x,\overline{x})},\quad (t,x,\overline{x})\in[0,T]\times\mathbb{R}^{2},\\
		\hat{C}^{\xi_{0}}(t,x,\overline{x})=-\delta_{0}\ln\left[\delta_{0}\lambda(T-t)V^{\xi_{0}}(t,x,\overline{x})\right]+\theta_{0}\overline{C}(t,\overline{x}),\quad (t,x,\overline{x})\in[0,T]\times\mathbb{R}^{2},
	\end{cases}
\end{equation*}
and if the SDE 
\begin{equation}\label{8.28}
	\begin{cases}
			d\hat{X}^{\xi_{0}}_{t}=\left(\hat{\Pi}^{\xi_{0}}(t,\hat{X}^{\xi_{0}}_{t},\overline{X}_{t})\mu_{0}-\hat{C}^{\xi_{0}}(t,\hat{X}^{\xi_{0}}_{t},\overline{X}_{t})\right)dt+\hat{\Pi}^{\xi_{0}}(t,\hat{X}^{\xi_{0}}_{t},\overline{X}_{t})\nu_{0} dW_{t}\\+\hat{\Pi}^{\xi_{0}}(t,\hat{X}^{\xi_{0}}_{t},\overline{X}_{t})\sigma_{0} dB_{t},\quad t\in[t_{0},T],\\
		d\overline{X}_{t}=\left[-\left(p_{1}(t)+\mathbb{E}\left[p_{2}(t,\xi)\right]\right)\overline{X}_{t}+\mathbb{E}\left[\Pi^{\xi}(t)\mu\right]-\mathbb{E}\left[q(t,\xi)\right]\right]dt\\+E\left[\Pi^{\xi}(t)\sigma\right]dB_{t},\quad t\in[t_{0},T],\\
		\hat{X}^{\xi_{0}}_{t_{0}}=x_{0},\,\overline{X}_{t_{0}}=x_{0}.
	\end{cases}
\end{equation}
has a solution $\left(\hat{X}^{\xi_{0}},\overline{X}\right)$ such that $(\hat{\pi}^{\xi_{0}}_{t},\hat{c}^{\xi_{0}}_{t}):=\left(\hat{\Pi}^{\xi_{0}}(t,\hat{X}^{\xi_{0}}_{t},\overline{X}_{t}),\hat{C}^{\xi_{0}}(t,\hat{X}^{\xi_{0}}_{t},\overline{X}_{t})\right)$ are admissible over $[t_{0},T]$, then  $(\hat{\pi}^{\xi_{0}},\hat{c}^{\xi_{0}})$ is an open-loop consistent control for $\xi_{0}$-type agent with respect to the initial condition $(t_{0},x_{0})$. 
\end{theorem}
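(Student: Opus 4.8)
The plan is to mirror the verification argument that, in the $n$-agent setting, leads from \cref{suff} to \cref{verif}, with the pair $(\hat{X}^{i},\overline{X}^{(i)})$ replaced by $(\hat{X}^{\xi_{0}},\overline{X})$ and with \cref{suffmean} used in place of \cref{suff}. Thus the first step is to reduce the open-loop consistency of $(\hat{\pi}^{\xi_{0}},\hat{c}^{\xi_{0}})$ to the two pointwise conditions of \cref{suffmean}, which are phrased in terms of the adjoint BSDE \cref{bsde5.1}. The device is the ansatz $p(t):=V^{\xi_{0}}(t,\hat{X}^{\xi_{0}}_{t},\overline{X}_{t})$ for the first component of that BSDE's solution, where $V^{\xi_{0}}$ is the assumed classical solution of the PDE system in the statement.

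Next I would apply Ito's formula to $p(t)=V^{\xi_{0}}(t,\hat{X}^{\xi_{0}}_{t},\overline{X}_{t})$ along the coupled dynamics \cref{8.28}. Because $\hat{X}^{\xi_{0}}$ and $\overline{X}$ are both driven by the common Brownian motion $B$, the expansion carries the covariation contribution $V^{\xi_{0}}_{x\overline{x}}\,\hat{\pi}^{\xi_{0}}_{t}\sigma_{0}\,\mathbb{E}[\Pi^{\xi}(t)\sigma]$ besides the usual first- and second-order terms. Requiring the $dt$-coefficient of $dp(t)$ to vanish---so that $p$ has only a martingale part, as \cref{bsde5.1} demands---is precisely the first (PDE) equation of the statement once the feedback forms of $\hat{\Pi}^{\xi_{0}}$ and $\hat{C}^{\xi_{0}}$ are inserted, while the $dB_{t}$ and $dW_{t}$ coefficients identify
\[
q^{B}(t)=V^{\xi_{0}}_{x}\,\sigma_{0}\,\hat{\pi}^{\xi_{0}}_{t}+V^{\xi_{0}}_{\overline{x}}\,\mathbb{E}[\Pi^{\xi}(t)\sigma],\qquad q^{W}(t)=V^{\xi_{0}}_{x}\,\nu_{0}\,\hat{\pi}^{\xi_{0}}_{t},
\]
with all partial derivatives evaluated at $(t,\hat{X}^{\xi_{0}}_{t},\overline{X}_{t})$. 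Since $V^{\xi_{0}}(T,\cdot,\cdot)$ equals the terminal expression in \cref{bsde5.1}, $p(T)$ matches the terminal datum; together with the admissibility of $(\hat{\pi}^{\xi_{0}},\hat{c}^{\xi_{0}})$ and the hypothesis of \cref{suffmean} that this terminal random variable is square integrable, one checks that $(p,q^{B},q^{W})$ lies in the relevant $\mathbb{S}^{2}\times\mathbb{H}^{2}$ spaces and hence is the solution of \cref{bsde5.1}.

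It then remains only to verify the two conditions of \cref{suffmean} with this $(p,q^{B},q^{W})$. Substituting the displayed $q^{B},q^{W}$ into $\mu_{0}p+\nu_{0}q^{W}+\sigma_{0}q^{B}$ collapses it to $\mu_{0}V^{\xi_{0}}+(\nu_{0}^{2}+\sigma_{0}^{2})V^{\xi_{0}}_{x}\,\hat{\pi}^{\xi_{0}}_{t}+\sigma_{0}\,\mathbb{E}[\Pi^{\xi}(t)\sigma]\,V^{\xi_{0}}_{\overline{x}}$, which vanishes exactly because $\hat{\pi}^{\xi_{0}}_{t}=\hat{\Pi}^{\xi_{0}}(t,\hat{X}^{\xi_{0}}_{t},\overline{X}_{t})$ has the feedback form appearing in the PDE system; likewise, the relation $\lambda(T-t)V^{\xi_{0}}=\frac{1}{\delta_{0}}\exp(-\frac{1}{\delta_{0}}\hat{c}^{\xi_{0}}_{t}+\frac{\theta_{0}}{\delta_{0}}\overline{C}(t,\overline{X}_{t}))$ is just the logarithmic rearrangement that defines $\hat{C}^{\xi_{0}}$, so the second condition holds as well. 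An application of \cref{suffmean} then yields the claim.

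I do not anticipate a genuine obstacle here: the computation is essentially a transcription of the $n$-agent verification. The only points that require care are (i) correctly bookkeeping the $B$-driven covariation between $\hat{X}^{\xi_{0}}$ and $\overline{X}$ in the Ito expansion, and (ii) confirming the integrability of the adjoint process $(p,q^{B},q^{W})$, for which the admissibility assumption and the $L^{2}$-hypothesis of \cref{suffmean} are precisely what is needed. This is presumably why the authors were content to state the result and omit the proof.
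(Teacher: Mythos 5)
Your proposal is correct and follows exactly the route the paper intends: the paper explicitly omits this proof as being the mean-field transcription of the $n$-agent verification argument in Appendix A.1, and your argument — the ansatz $p(t)=V^{\xi_{0}}(t,\hat{X}^{\xi_{0}}_{t},\overline{X}_{t})$, the It\^{o} expansion with the $B$-driven cross-variation term $V^{\xi_{0}}_{x\overline{x}}\hat{\pi}^{\xi_{0}}_{t}\sigma_{0}\mathbb{E}[\Pi^{\xi}(t)\sigma]$, the identification of $q^{B},q^{W}$, and the reduction of the two conditions of \cref{suffmean} to the feedback formulas for $\hat{\Pi}^{\xi_{0}}$ and $\hat{C}^{\xi_{0}}$ — is precisely that transcription. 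The integrability caveat you flag for $(p,q^{B},q^{W})$ is handled the same way as in the $n$-agent case, so nothing further is needed.
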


\section{Caculation on $\hat{h}^{i}(t)$}\label{h}

Recalling that
\begin{equation}
		\hat{h}^{i}(t):=\frac{1}{T+1-t}\int_{t}^{T}(T+1-s)\left[G^{i}((\Pi^{*,k}(s))_{k\ne i},s)\right]ds,
\end{equation}
where \begin{align*}
	G^{i}((\pi^{k})_{k\ne i},t)&=-\frac{1}{2}\frac{\left[\mu_{i}+\sigma_{i}\overline{\sigma\pi}^{(i)}g^{i}(t)\right]^{2}}{\nu_{i}^{2}+\sigma_{i}^{2}}-\frac{1}{T+1-t}\ln\left[\lambda(T-t)\right]+g^{i}(t)\overline{\mu\pi}^{(i)}\\&+\frac{1}{2}\left[g^{i}(t)\right]^{2}\big[(\overline{\sigma\pi}^{(i)})^{2}+\frac{1}{n^{2}}\sum_{k\neq i}(\nu_{k}\pi^{k})^{2}\big]
\end{align*}
and $$g^{i}(t)=\frac{\theta_{i}}{\delta_{i}}\frac{1}{\left(T+1-t\right)}.$$

 Let $A^{i}_{n}:=\overline{\sigma\Pi^{*}}^{(i)}(t)g^{i}(t)$, $B^{i}_{n}:=\overline{\mu\Pi^{*}}^{(i)}(t)g^{i}(t)$ and $C^{i}_{n}:=\frac{1}{n^{2}}\sum_{k\neq i}\left(\nu_{k}\Pi^{*,k}(t)g^{i}(t)\right)^{2}$. One can check that $A^{i}_{n},B^{i}_{n}$ and $C^{i}_{n}$ are constants:
\begin{equation}
	\begin{split}
		A^{i}_{n}&=\frac{1}{n}\frac{\theta_{i}}{\delta_{i}}\sum_{k\ne i}\left[\delta_{k}\frac{\sigma_{k}\mu_{k}}{\sigma_{k}^{2}+\left(1-\frac{\theta_{k}}{n}\right)\nu_{k}^{2}}+\theta_{k}\frac{\sigma_{k}^{2}}{\sigma_{k}^{2}+\left(1-\frac{\theta_{k}}{n}\right)\nu_{k}^{2}}\frac{\phi_{n}}{1-\psi_{n}}\right],\\
		B^{i}_{n}&=\frac{1}{n}\frac{\theta_{i}}{\delta_{i}}\sum_{k\ne i}\left[\delta_{k}\frac{\mu_{k}^{2}}{\sigma_{k}^{2}+\left(1-\frac{\theta_{k}}{n}\right)\nu_{k}^{2}}+\theta_{k}\frac{\mu_{k}\sigma_{k}}{\sigma_{k}^{2}+\left(1-\frac{\theta_{k}}{n}\right)\nu_{k}^{2}}\frac{\phi_{n}}{1-\psi_{n}}\right],\\
		C^{i}_{n}&=(\frac{1}{n}\frac{\theta_{i}}{\delta_{i}})^{2}\sum_{k\ne i}\left[\delta_{k}\frac{\nu_{k}\mu_{k}}{\sigma_{k}^{2}+\left(1-\frac{\theta_{k}}{n}\right)\nu_{k}^{2}}+\theta_{k}\frac{\nu_{k}\sigma_{k}}{\sigma_{k}^{2}+\left(1-\frac{\theta_{k}}{n}\right)\nu_{k}^{2}}\frac{\phi_{n}}{1-\psi_{n}}\right]^{2}.
	\end{split}	
\end{equation}
Then \[G^{i}((\Pi^{*,k}(t))_{k\ne i},t)=-\frac{1}{2}\frac{(\mu_{i}+\sigma_{i}A^{i}_{n})^{2}}{\nu_{i}^{2}+\sigma_{i}^{2}}-\frac{1}{T+1-t}\ln\left[\lambda(T-t)\right]+B^{i}_{n}+\frac{1}{2}[(A^{i}_{n})^{2}+C^{i}_{n}].\]
Hence, 
\begin{equation}
	\hat{h}^{i}(t)=\frac{D^{i}_{n}}{2}\left[\frac{1}{T+1-t}-(T+1-t)\right]-\frac{1}{T+1-t}\int_{t}^{T}\ln\left[\lambda(T-s)\right]ds,
\end{equation}
where
\begin{equation}
D_{n}^{i}:=\frac{1}{2}\frac{(\mu_{i}+\sigma_{i}A^{i}_{n})^{2}}{\nu_{i}^{2}+\sigma_{i}^{2}}-\frac{1}{2}[(A^{i}_{n})^{2}+C^{i}_{n}]-B^{i}_{n}.
\end{equation} 

If we assume that empirical measure $m_{n}$ has a weak limit $m$, by passing to limit, we then have
\begin{equation*}
	\begin{split}
	A^{i}:&=\lim_{n\rightarrow\infty} A^{i}_{n}=\frac{\theta_{i}}{\delta_{i}}\mathbb{E}\left[\delta\frac{\sigma\mu}{\sigma^{2}+\nu^{2}}+\theta\frac{\sigma^{2}}{\sigma^{2}+\nu^{2}}\frac{\phi}{1-\psi}\right]=\frac{\theta_{i}}{\delta_{i}}\frac{\phi}{1-\psi}=g^{i}(t)\mathbb{E}[\sigma\Pi^{*,\xi}(t)],\\
	B^{i}:&=\lim_{n\rightarrow\infty}B^{i}_{n}=\frac{\theta_{i}}{\delta_{i}}\mathbb{E}\left[\delta\frac{\mu^{2}}{\sigma^{2}+\nu^{2}}+\theta\frac{\sigma\mu}{\sigma^{2}+\nu^{2}}\frac{\phi}{1-\psi}\right]=g^{i}(t)\mathbb{E}[\mu\Pi^{*,\xi}(t)],\\
	C^{i}:&=\lim_{n\rightarrow\infty}C^{i}_{n}=0,\\
	D^{i}:&=\lim_{n\rightarrow\infty}D^{i}_{n}=\frac{1}{2}\frac{\left(\mu_{i}+\sigma_{i}g^{i}(t)\mathbb{E}[\sigma\Pi^{*,\xi}(t)]\right)^{2}}{\nu_{i}^{2}+\sigma_{i}^{2}}-\frac{1}{2}\left(g^{i}(t)\mathbb{E}[\sigma\Pi^{*,\xi}(t)]\right)^{2}-g^{i}(t)\mathbb{E}[\mu\Pi^{*,\xi}(t)].
	\end{split}
\end{equation*}

Then the limit of $\hat{h}^{i}(t)$ is given by \begin{equation*}
	H^{i}(t):=\lim_{n\rightarrow\infty}\hat{h}^{i}(t)=\frac{D^{i}}{2}\left[\frac{1}{T+1-t}-(T+1-t)\right]-\frac{1}{T+1-t}\int_{t}^{T}\ln\left[\lambda(T-s)\right]ds.
\end{equation*}
 
 Note that
\begin{equation*}
\hat{h}^{\xi}(t)=\frac{1}{T+1-t}\int_{t}^{T}(T+1-s)G^{\xi}(\Pi^{*,\xi}(s),s)ds,
\end{equation*}  
where
\begin{equation*}
	\begin{split}
	G^{\xi}(\pi,t)&=-\frac{1}{T+1-t}\ln\left[\lambda(T-t)\right]-\frac{1}{2}\frac{\left(\mu+\sigma g^{\xi}(t)\mathbb{E}[\sigma\pi]\right)^{2}}{\nu^{2}+\sigma^{2}}\\&+g^{\xi}(t)\mathbb{E}[\mu\pi]+\frac{1}{2}\left(g^{\xi}(t)\right)^{2}(\mathbb{E}[\sigma\pi])^{2},
\end{split}
\end{equation*} 
then \begin{equation}
	\hat{h}^{\xi}(t)=\frac{D}{2}\left[\frac{1}{T+1-t}-(T+1-t)\right]-\frac{1}{T+1-t}\int_{t}^{T}\ln\left[\lambda(T-s)\right]ds=H^{\xi}(t).
\end{equation}
\section*{Acknowledgments}
 The authors thank the anonymous referees for their valuable comments and suggestions, which have significantly improved the quality of the paper. The authors also thank the members of the group of Actuarial Science and Mathematical Finance at  the Department of Mathematical Sciences, Tsinghua University for their feedbacks and useful conversations.

\bibliographystyle{siamplain}
\bibliography{references}

\end{document}